\documentclass[reqno]{amsart}%

\usepackage{amsmath, amsthm, amssymb}%
\usepackage{mathtools}%
\usepackage{enumitem}%
\usepackage{interval}%
\usepackage{booktabs}%
\usepackage{algorithm}%
\usepackage{algpseudocode}%
\usepackage{caption}%
\usepackage{subcaption}%
\usepackage[round]{natbib}%
\usepackage{doi}%
\usepackage{hyperref}%

\intervalconfig{soft open fences}%
\hypersetup{
  colorlinks   = true, %
  urlcolor     = blue, %
  linkcolor    = blue, %
  citecolor   = red %
}%
\mathtoolsset{showonlyrefs,showmanualtags}%
\DeclarePairedDelimiter{\floor}{\lfloor}{\rfloor}%
\DeclarePairedDelimiter{\ceil}{\lceil}{\rceil}%

\newtheorem{thm}{Theorem}[section]
\newtheorem{cor}[thm]{Corollary}
\newtheorem{lem}[thm]{Lemma}

\theoremstyle{remark}
\newtheorem{exmp}{Example}%

\theoremstyle{definition}

\raggedbottom%

\newcommand{\Z}{\mathbb{Z}}%
\newcommand{\N}{\mathbb{N}}%
\newcommand{\Q}{\mathbb{Q}}%
\newcommand{\R}{\mathbb{R}}%

\DeclareMathOperator{\dbf}{dbf}%
\DeclareMathOperator{\rbf}{rbf}%
\DeclareMathOperator{\lcm}{lcm}%
\DeclareMathOperator{\sgn}{sgn}%
\newcommand{\bigO}{\mathcal{O}}%
\newcommand{\hyp}{\mathcal{T}}%
\newcommand{\poly}{\mathsf{poly}}%
\newcommand{\Thm}{T_{\mathrm{hm}}}%

\makeatletter
\newcommand{\algmargin}{\the\ALG@thistlm}
\makeatother

\newlength{\whilewidth}
\settowidth{\whilewidth}{\algorithmicwhile\ }
\algdef{SE}[parWHILE]{parWhile}{EndparWhile}[1]
{\parbox[t]{\dimexpr\linewidth-\algmargin}{%
    \hangindent\whilewidth\strut\algorithmicwhile\ #1\
    \algorithmicdo\strut}}{\algorithmicend\ \algorithmicwhile}%

\newlength{\ifwidth}
  \settowidth{\ifwidth}{\algorithmicif\ }
    \algdef{SE}[parIF]{parIf}{EndparIf}[1]
    {\parbox[t]{\dimexpr\linewidth-\algmargin}{%
        \hangindent\ifwidth\strut\algorithmicif\ #1\
          \algorithmicthen\strut}}{\algorithmicend\ \algorithmicif}%

\algnewcommand{\parState}[1]{\State%
  \parbox[t]{\dimexpr\linewidth-\algmargin}{\strut#1\strut}}
\algnewcommand\Break{\textbf{break}}%

\renewcommand\u[1]{\underline{#1}}%

\title[Cutting-plane schedulability tests]{Cutting-plane algorithms for
  preemptive uniprocessor real-time scheduling problems}%

\author{Abhishek Singh}%
\date{}%
\address{Department of Computer Science \& Engineering\\
  Washington University in St.\ Louis\\
  MO 63130, U.S.A.}%
\email{abhishek.s@wustl.edu}%

\begin{document}

\maketitle%

\begin{abstract}
  Fixed-point iteration algorithms like RTA (response time analysis) and QPA
  (quick processor-demand analysis) are arguably the most popular ways of
  solving schedulability problems for preemptive uniprocessor FP
  (fixed-priority) and EDF (earliest-deadline-first) systems. Several IP
  (integer program) formulations have also been proposed for these problems, but
  it is unclear whether the algorithms for solving these formulations are
  related to RTA and QPA\@. By discovering connections between the problems and
  the algorithms, we show that RTA and QPA are, in fact, suboptimal
  cutting-plane algorithms for specific IP formulations of FP and EDF
  schedulability, where optimality is defined with respect to convergence rate.
  We propose optimal cutting-plane algorithms for these IP formulations. We
  compare the new algorithms with RTA and QPA on large collections of synthetic
  systems to gauge the improvement in convergence rates and running times.\\
  \smallskip%

  \noindent{}\keywordsname: hard real-time scheduling, fixed priority, earliest deadline
  first, cutting planes, linear programming duality, fixed-point iteration
\end{abstract}

\section{Introduction}%
\label{sec:intro}%

FP (fixed-priority) and EDF (earliest-deadline-first) are two popular methods of
assigning priorities to preemptible hard real-time tasks in a uniprocessor
priority-driven system. In FP systems, each task is assigned a priority that
remains constant during execution; in contrast, priorities of tasks in EDF
systems are variable during execution, and at any instant a task with the
earliest deadline has the highest priority. From the early work of
\citet{liuSchedulingAlgorithmsMultiprogramming1973}, it has been known that the
two systems are quite different: for instance, an implicit-deadline FP system
can violate its timing requirements even when processor utilization is as low as
70\%, while a comparable EDF system is safe for all processor utilizations up to
100\%.\footnote{Implicit deadlines are defined in Section~\ref{sec:models}.}
Preemptive uniprocessor systems with FP and EDF priority assignments are some of
the most well-studied systems in hard real-time scheduling theory: historical
perspectives on these systems and other supporting literature may be found in
surveys, handbooks, and textbooks~\citep{audsleyFixedPriorityPreemptive1995,
  liuRealTimeSystems2000, shaRealTimeScheduling2004,
  buttazzoHardRealTimeComputing2011, levyHandbookRealTimeComputing2020}; works
that study the differences between FP and EDF systems are also
available~\citep{buttazzoRateMonotonicVs2005,
  rivasSchedulabilityAnalysisOptimization2011, peraleRemovingBiasJudgment2021}.

Systems that do not violate their timing requirements are called safe or
\emph{schedulable}; other systems are said to be \emph{unschedulable}. A
schedulable system is often characterized by a \emph{schedulability condition}:
for instance, a schedulable constrained-deadline preemptive uniprocessor FP
system with $n$ tasks listed in nonincreasing order of priority is characterized
by the existence of a $t \in \linterval{0}{D_i}$ such that $\rbf_i(t) \le t$ for
all $i \in [n]$, where $D_i$ is the relative deadline of the task $i$ and
$\rbf_i$ maps any $t$ to the maximum amount of work generated by the subsystem
$[i]$ in any interval of length $t$~\citep{josephFindingResponseTimes1986,
  lehoczkyRateMonotonicScheduling1989}.\footnote{$[n]$ is shorthand for
  $\{1,2,\ldots,n\}$; we assume that $[0] = \emptyset$.} An unschedulable
arbitrary-deadline preemptive uniprocessor EDF system with $n$ tasks can also be
characterized by the existence of a $t \in \linterval{0}{\lcm_{j \in [n]} T_j}$
such that $\dbf(t) > t$ where $T_j$ is the period of task $j$ and $\dbf$ maps
any $t$ to the amount of work generated by the system in the interval that must
be completed in the
interval~\citep{baruahPreemptivelySchedulingHardrealtime1990,
  ripollImprovementFeasibilityTesting1996}. We will discuss schedulability
conditions, relative deadlines, constrained deadlines, arbitrary deadlines,
$\rbf$, periods and $\dbf$ in more detail in a later section. For now, it
suffices to know that the schedulability conditions for both FP and EDF systems
are about the existence of a $t$ in a bounded interval where some function of
$t$ is nonnegative; for FP (resp., EDF) systems, the function is $t \mapsto t -
\rbf_i(t)$ (resp., $t \mapsto \dbf(t) - t - 1$).

Given a description of a hard real-time system, the problem of deciding whether
the given system satisfies its timing requirements is called a
\emph{schedulability problem}. An algorithm that solves a schedulability problem
is called a \emph{schedulability test}. Given a system, a schedulability test
checks whether the appropriate schedulability condition holds for the system.

An FP schedulability test attempts to find a $t \in \linterval{0}{D_i}$ where
$\rbf_i(t) \le t$. There are many algorithmic techniques that can be used to
achieve this goal: for example, we can use fixed-point
iteration~\citep{josephFindingResponseTimes1986,
  audsleyApplyingNewScheduling1993}, depth-bounded search
trees~\citep{manabeFeasibilityDecisionAlgorithm1998,
  biniSchedulabilityAnalysisPeriodic2004}, and continued
fractions~\citep{parkDeterminingRateMonotonic2023}. The fixed-point iteration
test is called RTA (response time analysis), and the depth-bounded search tree
test is called HET (hyperplanes exact test). The running times of these tests
are, in general, incomparable: the most significant factor in the worst-case
running time of RTA is $D_{\max} / T_{\min}$ where $D_{\max}$ (resp.,
$T_{\min}$) is the largest deadline (resp., the smallest period) in the system;
the most significant factor in the worst-case running time of HET is
$2^{\text{\#periods}}$. Thus, for hard problem instances where the number of
periods is small but $D_{\max} / T_{\min}$ is large, HET will likely outperform
RTA\@; on the other hand, for hard problem instances where the number of periods
is large and $D_{\max} / T_{\min}$ is small, RTA will likely outperform HET\@.

Similarly, given a system that is not trivially unschedulable an EDF
schedulability test attempts to find a $t \in \ointerval{0}{\lcm_{j \in [n]}
  T_j}$ such that $\dbf(t) > t$. Many algorithmic approaches can be used to
search for $t$ including fixed-point
iteration~\citep{zhangSchedulabilityAnalysisRealTime2009}, integer programming
in fixed dimension~\citep{baruahAlgorithmsComplexityConcerning1990} (utilizing
Lenstra's algorithm~\citep{lenstraIntegerProgrammingFixed1983}), and convex-hull
computation~\citep{biniCuttingUnnecessaryDeadlines2019}. The fixed-point
iteration algorithm is called QPA (quick processor-demand analysis). As was the
case with FP schedulability tests, the running times of EDF schedulability tests
are not comparable, in general: the most significant factor in the worst-case
running time of QPA is $\lcm_{j \in [n]} T_j / T_{\min}$; the most significant
factor in the worst-case running time of Lenstra's algorithm is $p^{\bigO(p)}$
where $p$ is the variety of the system.\footnote{The variety of a synchronous
  system is the number of distinct deadline-period pairs in the
  system~\citep{baruahFixedParameterAnalysisPreemptive2022}. Synchronous and
  asynchronous systems are defined in Section~\ref{sec:models}.} Thus, for hard
problem instances where the variety is very small and $\lcm_{j \in [n]} T_j /
T_{\min}$ is large, Lenstra's algorithm will likely outperform QPA\@; in
contrast, for hard problem instances where the ratio is small and the variety is
large, QPA will likely outperform Lenstra's algorithm. Differences in worst-case
running times for various algorithmic approaches for FP and EDF schedulability
tests with respect to the sizes of different problem parameters such as the
number of periods in the system, the variety of the system, $D_{\max} /
T_{\min}$, and $\lcm_{j \in [n]} T_j / T_{\min}$ have been studied recently
using the framework of parameterized algorithms by
\citet{baruahFixedParameterAnalysisPreemptive2022}.

Although applying new algorithmic techniques to create new schedulability tests
with better properties is very important, it is equally important to attempt to
understand the relationships between the schedulability problems and between the
algorithms for the problems. Efforts in the latter direction often yield new
structural and algorithmic insights and result in a coherent unified theory. We
are interested in studying the connections between the FP schedulability problem
and the EDF schedulability problem in the context of four algorithmic
approaches:
\begin{description}
\item[RTA] the fixed-point iteration algorithm for FP schedulability (see
  Section~\ref{sec:rta}).
\item[IP-FP] the IP (integer programming) formulation for FP schedulability
  where the variables correspond to the integral quantities in $\rbf_i$ (see
  Section~\ref{sec:ip_fp}).
\item[QPA] the fixed-point iteration algorithm for EDF schedulability (see
  Section~\ref{sec:qpa}).
\item[IP-EDF] the IP formulation for EDF schedulability where the variables
  correspond to the integral quantities in $\dbf$ (see
  Section~\ref{sec:ip_edf}).
\end{description}
While RTA and QPA are well-known, IP-FP and IP-EDF are nonstandard names that we
are using to refer to specific IP formulations of the schedulability problems
described later in the document; moreover, IP-FP and IP-EDF qualify only vaguely
as algorithmic approaches because we have not specified an algorithm for solving
the IP formulations yet (we will design a new cutting-plane algorithm to solve
the IPs in a unified manner).

It is natural to try to understand the relationships within the pairs (RTA,
IP-FP) and (QPA, IP-EDF) because they solve the FP schedulability problem and
the EDF schedulability problem respectively. However, we found that studying the
problems separately is a little inefficient because both problems can be reduced
in polynomial time to a common problem called \emph{the kernel} (see
Section~\ref{sec:kern}):

\begin{figure}[h]
  \centering%
  \includegraphics[width=0.65\linewidth]{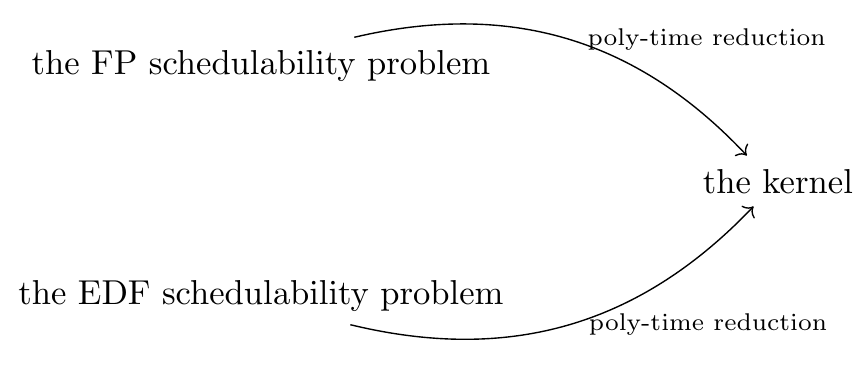}%
  \label{fig:rel}%
\end{figure}

If the kernel can be solved efficiently by some algorithm, then both FP
schedulability and EDF schedulability can be solved efficiently; thus, the
kernel captures the essence of the hardness of the two
problems.\footnote{Reductions from EDF schedulability to FP schedulability have
  been used by \citet{ekbergFixedPrioritySchedulabilitySporadic2017} to prove
  the hardness of FP schedulability using the hardness of EDF schedulability.
  Thus, the idea that the two problems are closely related is not new. However,
  the kernel and the reductions to it have not been described before, to the
  best of our knowledge.} We prove the following facts about the kernel:
\begin{itemize}
\item The kernel can be solved by a fixed-point iteration algorithm called
  FP-KERN (see Section~\ref{sec:fp_kern}). RTA and QPA can be recovered from
  FP-KERN by composing it with the above reductions.
\item The kernel has an IP formulation called IP-KERN (see
  Section~\ref{sec:ip_kern}). IP-FP and IP-EDF can be recovered from IP-KERN by
  composing it with the above reductions.
\end{itemize}
Since any relationship between FP-KERN and IP-KERN must also exist for RTA
(resp., QPA) and IP-FP (resp., IP-EDF), we can focus our attention on FP-KERN
and IP-KERN\@.

Our main result is that IP-KERN can be solved by a family of cutting-plane
algorithms; FP-KERN is a member of the family but it is \emph{not} the optimal
algorithm in this family with respect to \emph{convergence rate}, which is
defined as the inverse of the number of iterations required for convergence; and
the optimal algorithm in the family, called CP-KERN, has a better convergence
rate than FP-KERN\@. Viewed from the FP (resp., EDF) perspective, RTA (resp.,
QPA) is a suboptimal cutting-plane algorithm and CP-KERN converges to the
solution in fewer iterations. In our empirical evaluation, we compare the number
of iterations required by the fixed-point iteration algorithms (RTA and QPA) and
CP-KERN on synthetic systems; the results confirm that CP-KERN has a better
convergence rate than RTA and QPA\@.

\subsection{Practical concerns: computational complexity}

Faster schedulability algorithms are needed in many applications. In holistic
analyses of distributed real-time systems, schedulability tests are called a
great many times till the values for all parameters in the system stabilize or
an unschedulable subsystem is
discovered~\citep{tindellHolisticSchedulabilityAnalysis1994,
  spuriHolisticAnalysisDeadline1996}. In partitioned approaches to
multiprocessor scheduling, uniprocessor schedulability tests are used to
determine the feasibility of (usually numerous) partitions. In automatic or
interactive design space explorations to optimize objectives such as energy
consumption, schedulability tests are used to determine the feasibility of
numerous configurations. The speed of a schedulability test is also crucial when
it is used as an online test in a dynamic embedded system.

While the convergence rate of CP-KERN is better than FP-KERN (think, RTA and
QPA), this does not imply that CP-KERN is faster than FP-KERN\@. Consider, for
instance, the center of gravity method for convex programs~\citep[see, for
instance, ][Sec.\ 2.1]{bubeckConvexOptimizationAlgorithms2015}. The center of
gravity method has a good convergence rate, but it requires the center of
gravity of a convex body to be computed in each iteration for which no efficient
procedures are known. CP-KERN, unlike the center of gravity method, is not
purely theoretical. In each iteration of CP-KERN we must solve a linear
relaxation of IP-KERN\@. Since linear programs can be solved in polynomial time
by the ellipsoid method~\citep{khachiyanPolynomialAlgorithmsLinear1980} and
interior point methods~\citep{karmarkarNewPolynomialtimeAlgorithm1984}, each
iteration of CP-KERN has polynomial running time.\footnote{The ``running time''
  of an algorithm in a theoretical context refers to the number of elementary
  arithmetic operations used by the algorithm.}

To solve linear relaxations of IP-KERN in each iteration of CP-KERN even more
efficiently, we propose a specialized method that runs in strongly polynomial
time.\footnote{An algorithm runs in strongly polynomial time if the algorithm is
  a polynomial space algorithm and uses a number of arithmetic operations which
  is bounded by a polynomial in the number of input numbers~\citep[see][pg.
  32]{grotschelGeometricAlgorithmsCombinatorial1993}.} Since no strongly
polynomial-time algorithms are known for linear programming, our specialized
method is an improvement over using a general algorithm for solving linear
programs. Moreover, we show that if CP-KERN uses this specialized method then it
has the same worst-case running time as FP-KERN\@. This result, combined with
the optimality with respect to convergence rate, establishes CP-KERN as a better
algorithm than FP-KERN in theory. In practice, FP-KERN may be faster than
CP-KERN for instances where the difference in convergence rates of CP-KERN and
FP-KERN is negligible because FP-KERN does less work in each iteration than
CP-KERN when we do not ignore constant factors. We compare the running times on
synthetic systems in our empirical evaluation.

\subsection{Practical concerns: implementation complexity}

Since FP-KERN proceeds by fixed-point iteration, it is quite easy to implement
without depending on external libraries; thus, it is well-suited to serve as an
online schedulability test in a dynamic embedded system where task and resource
constraints are subject to change over time. Since a cutting-plane algorithm for
an IP solves a linear program in each iteration, an implementation of a
cutting-plane algorithm for solving general IPs usually depends on linear
programming libraries, commercial or otherwise, making it harder to deploy on
dynamic embedded systems.\footnote{Projects like CVXGEN attempt to address the
  problem of deploying convex programming solvers on embedded systems by
  generating custom C code for some families of convex
  programs~\citep{mattingleyCVXGENCodeGenerator2012}.} When CP-KERN utilizes our
specialized algorithm to solve relaxations of IP-KERN\@, it is about 50 lines in
pseudocode, it uses elementary arithmetic operations, and does not depend on
mathematical programming or algebra libraries in function calls. Therefore,
CP-KERN has a small footprint and is well-suited for deployment on dynamic
embedded systems. For software developers writing a schedulability library, our
structured approach promotes code reuse and reduces development effort.
Moreover, the smaller codebase inspires more trust in stakeholders.

\section{System Models}%
\label{sec:models}%

In a hard real-time system, a task receives an infinite stream of requests and
it must respond to each request within a fixed amount of time. We consider a
system comprising $n$ independent preemptible sporadic tasks, labeled
$1,2,\dotsc,n$. We refer to the system simply as $[n]$; given a system $[n]$, $i
\in [n]$ is the $i$-th task, and $[i] \subseteq [n]$ is a subsystem of $[n]$
that contains tasks $\{1,2,\ldots,i\}$. Each sporadic task $i \in [n]$ has the
following (integral) characteristics:
\begin{table}[H]
  \setlength{\tabcolsep}{0.6\tabcolsep}%
  \centering%
  \begin{tabular}{c p{0.8\linewidth}}\toprule
    Symbol &Task Characteristic\\\midrule
    $C_i$ & worst-case execution time (\emph{wcet})\\
    $T_i$ & minimum duration between successive request arrivals (\emph{period})\\
    $D_i$ & maximum duration between request arrival and response (\emph{relative deadline})\\
    $J_i$ & maximum duration between request arrival and the task becoming eligible for execution (\emph{release jitter})\\\bottomrule
  \end{tabular}
\end{table}

\noindent{}If the $k$-request for task $i$ arrives at time $t$, then the
following statements must be true:
\begin{itemize}
\item The request must be completed by time $t + D_i$ in a feasible schedule;
  $t+D_i$ is the \emph{absolute deadline} for the completion of the response to
  the request.
\item The $(k+1)$-th request for task $i$ cannot arrive before $t+T_i$.
\end{itemize}
Sometimes a request may arrive at time $t$ and the task may not become eligible
for execution until time $t+\delta$ where $\delta > 0$; $\delta$ is the release
jitter experienced by task $i$ at time $t$, and $J_i$ is the maximum release
jitter that can be experienced by task $i$.

If $D_i = T_i$, then the deadline $D_i$ is said to be implicit; if $D_i \le
T_i$, then the deadline $D_i$ is said to be constrained. If all tasks in a
system have implicit deadlines, then the system is an \emph{implicit-deadline
  system}. If all tasks in a system have constrained deadlines, then the system
is a \emph{constrained-deadline system}; otherwise, it is an
\emph{arbitrary-deadline system}.

We refer to the largest (resp., smallest) period in the system by $T_{\max}$
(resp., $T_{\min}$); similar symbols are used for other task parameters as well.
We refer to the harmonic mean of the periods by $\Thm$.

As mentioned in Section~\ref{sec:intro}, we assume that the tasks are
preemptible, run on a single processor, and are scheduled by an FP or EDF
scheduler.

\subsection{FP schedulability and RTA}%
\label{sec:rta}%

We restrict our attention to synchronous constrained-deadline preemptive
uniprocessor FP systems, and we simply call them FP systems. We assume that
tasks are listed in decreasing order of priority in FP systems.

An FP system $[n]$ is schedulable if and only if the subsystem $[n-1]$ is
schedulable and the condition
\begin{equation}%
  \label{cond:fp}%
  \exists t \in \linterval{0}{D_n-J_n}: \rbf_n(t) \le t
\end{equation}
holds~\citep{josephFindingResponseTimes1986,
  lehoczkyRateMonotonicScheduling1989, audsleyApplyingNewScheduling1993}. Here,
$\rbf_i$, the request bound function of subsystem $[i] \subseteq [n]$, is given
by
\begin{equation}%
  \label{def:rbf}%
  t \mapsto \sum_{j \in [i]} \left\lceil \frac{t + J_j}{T_j} \right\rceil C_j.
\end{equation}
From now on, we refer to $\rbf_n$ simply as $\rbf$.

Condition~\eqref{cond:fp} is satisfied if and only if the problem
\begin{equation}%
  \label{opt:rta}%
  \min \{t \in \linterval{0}{D_n-J_n} \cap \Z \mid\, \rbf(t) \le t\}
\end{equation}
has an optimal solution.\footnote{Since there is only one variable $t$ in the
  problem and the minimization objective is also $t$, if an optimal solution
  exists then it is also unique. Therefore, we can say ``the optimal solution''
  instead. However, since we will look at IP formulations of the problem later
  which will allow more than one optimal solution, we stick to the phrase ``an
  optimal solution''.} It can be shown that the optimal solution $t^*$, if it
exists, is the smallest $t \in \linterval{0}{D_n-J_n}$ that satisfies
\[
  \rbf(t) = t,
\]
i.e., $t^*$ is the smallest fixed point of $\rbf$. $t^*$ can be found by using
fixed-point iteration, i.e., by starting with a safe lower bound $\u{t}$ for the
fixed point and iteratively updating $\u{t}$ to $\rbf(\u{t})$. This algorithm is
called RTA (response time analysis) because $t^*+J_n$ is the worst-case response
time of task $n$. Theorem~\ref{thm:fixed} can be applied to derive the following
theorem; see also the discussion preceding Theorem~\ref{thm:fp-kern}.

\begin{thm}\label{thm:rta}
  Problem~\eqref{opt:rta} can be solved by RTA in
  \[
    \bigO\left(\frac{(D_n-J_n)n^2}{\Thm}\right)
  \]
  running time.
\end{thm}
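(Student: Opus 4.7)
The plan is to invoke Theorem~\ref{thm:fixed}, which the author has already flagged as the engine behind this bound; it should give a generic running-time estimate of (cost per iteration) times (number of iterations) for monotone fixed-point iteration on a nondecreasing integer step function, with the iteration count bounded above by the number of distinct plateaus of the step function on the search interval. With that in hand, the argument reduces to bounding these two quantities for $\rbf$ on $\linterval{0}{D_n - J_n}$.

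First, I would observe that each evaluation of
\[
  \rbf(\u{t}) \;=\; \sum_{j \in [n]} \ceil*{\frac{\u{t} + J_j}{T_j}}\, C_j
\]
costs $\bigO(n)$ elementary arithmetic operations. Next, I would count plateaus by noting that $\rbf$ is a sum of $n$ step functions, the $j$-th summand jumping only when $t + J_j$ is a multiple of $T_j$. Hence the number of discontinuities of $\rbf$ on $\linterval{0}{D_n - J_n}$ is at most $\sum_{j \in [n]} \ceil*{(D_n - J_n + J_j)/T_j}$. Under the standard assumption that $J_j$ is no larger than $T_j$ (and noting $T_j \le D_n$), the harmonic-mean identity $\sum_{j \in [n]} 1/T_j = n/\Thm$ collapses this sum to $\bigO((D_n - J_n)\, n / \Thm)$. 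Multiplying the per-iteration cost by the iteration count would yield the claimed bound $\bigO((D_n - J_n)\, n^2 / \Thm)$.

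The main work lies in matching the hypotheses of Theorem~\ref{thm:fixed} to the RTA iteration: I would verify that the iterates $\u{t}_0 \le \u{t}_1 \le \cdots$ are strictly increasing while the algorithm has not terminated, and that they advance to successive plateaus of $\rbf$. Both properties follow from $\rbf$ being nondecreasing and from the invariant $\u{t}_k \le \rbf(\u{t}_k)$ maintained by starting from a safe lower bound. The only delicate bit will be the termination rule that lets the algorithm halt as soon as $\u{t}_k$ first exceeds $D_n - J_n$, ensuring we do not over-count iterations in the unschedulable case.
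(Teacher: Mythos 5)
Your proposal is correct and takes the same route the paper takes: the paper likewise invokes Theorem~\ref{thm:fixed}, bounds the number of steps of the monotone step function by $\sum_{j \in [n]} (D_n-J_n)/T_j = (D_n-J_n)n/\Thm$, and multiplies by the $\bigO(n)$ cost of one evaluation of $\rbf$. Your explicit breakpoint count and termination check merely fill in what the paper leaves as a brief ``roughly equal to'' remark in the discussion preceding Theorem~\ref{thm:fp-kern}.
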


\begin{exmp}
  Consider the FP task system with implicit deadlines and zero release jitter
  shown in Table~\ref{tab:ex1}. The execution of RTA for this system is depicted
  in Figure~\ref{fig:rta} from the initial value $\u{t} = 20$. $\rbf$ for the
  system is shown as the dashed blue step function, the computation of
  $\rbf(\u{t})$ is shown as upward red arrows, and the update $\u{t}
  \gets \rbf(\u{t})$ is shown as rightward red arrows.
  \begin{table}
    \centering%
    \begin{tabular}{c c c}\toprule
      $i$ & $C_i$ & $T_i$\\\midrule
      $1$ & $20$ & $40$\\
      $2$ & $10$ & $50$\\
      $3$ & $33$ & $150$\\\bottomrule
    \end{tabular}%
    \caption{An FP task system with implicit deadlines and zero release
      jitter.}%
    \label{tab:ex1}%
  \end{table}
\end{exmp}

\begin{figure}
  \centering%
  \includegraphics[width=0.85\linewidth]{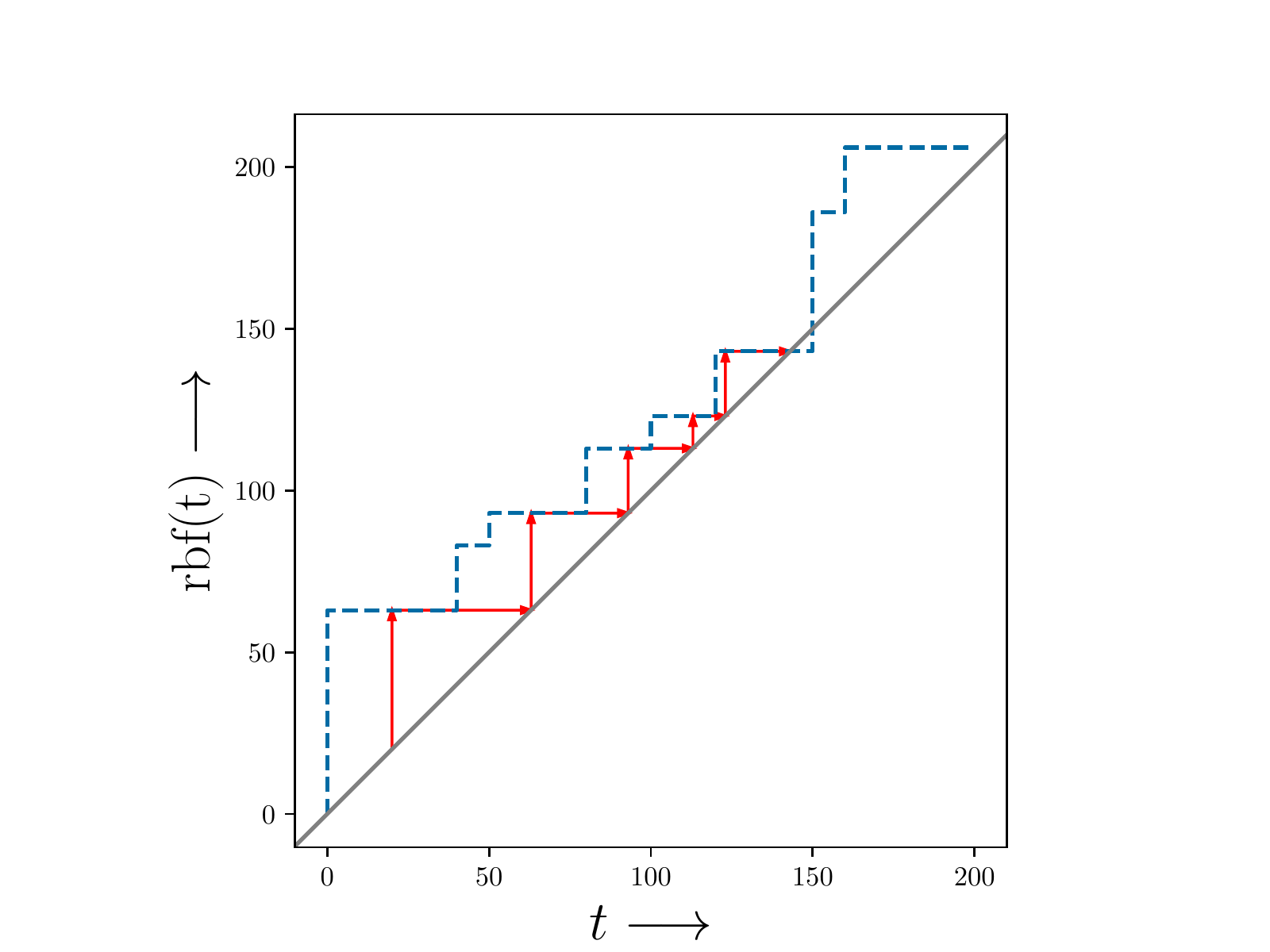}%
  \caption{$\rbf(t)$ (blue, dashed) for the system in Table~\ref{tab:ex1} and
    progress of RTA (red, solid arrows) from initial value $20$ (upward arrows
    denote the computation of $\rbf(\u{t})$ and the rightward arrows
    denote the update $\u{t} \gets \rbf(\u{t})$.).}%
  \label{fig:rta}%
\end{figure}

\subsection{EDF schedulability and QPA}%
\label{sec:qpa}%

We limit ourselves to arbitrary-deadline preemptive uniprocessor EDF systems,
and simply call them EDF systems. We assume that tasks are listed in
nondecreasing order of $\hat{D}_i - T_i$ in EDF systems, where
\[
  \hat{D}_i = D_i - J_i
\]

An EDF system is unschedulable if and only if $\sum_{j \in [n]} U_j > 1$ or
\begin{equation}%
  \label{cond:edf}%
  \exists t \in \rinterval{\hat{D}_{\min}}{L}: t < \dbf(t),
\end{equation}
holds~\citep{baruahPreemptivelySchedulingHardrealtime1990,
  ripollImprovementFeasibilityTesting1996, spuriAnalysisDeadlineScheduled1996,
  zhangSchedulabilityAnalysisEDFscheduled2013}. Here, $\dbf$, the demand bound
function of the system $[n]$, is given by
\begin{equation}%
  \label{def:dbf}%
  t \mapsto \sum_{j \in [n], t \ge \hat{D}_j - T_j} \left\lfloor\frac{t + T_j - \hat{D}_j}{T_j}\right\rfloor C_j,
\end{equation}
and $L$ is a large constant like $\lcm_j T_j$ (more details on $L$ will be
provided shortly).

The above condition is satisfied if and only if the problem
\begin{equation}%
  \label{opt:qpa}%
  \max \{t \in \rinterval{\hat{D}_{\min}}{L} \cap \Z: t < \dbf(t)\}
\end{equation}
has an optimal solution. It can be shown that the optimal solution $t^*$, if it
exists and is not trivially equal to $L-1$, is the largest $t \in
\rinterval{\hat{D}_{\min}}{L}$ that satisfies
\[
  \dbf(t) - 1 = t,
\]
i.e., $t^*$ is the largest fixed point of $\dbf(\cdot) - 1$. $t^*$ can be found
by starting a safe upper bound $\overline{t}$ for the fixed point and
iteratively updating $\overline{t}$ to $\dbf(\overline{t}) - 1$. This algorithm
is called QPA\@. The inventors of QPA use a slightly more elaborate iterative
step~\citep[Alg. 2]{zhangSchedulabilityAnalysisEDFscheduled2013}, but the above
update works assuming that all data are integral. Theorem~\ref{thm:fixed} can be
applied to derive the following theorem; see the discussion preceding
Theorem~\ref{thm:fp-kern}.

\begin{thm}\label{thm:qpa}
  Problem~\eqref{opt:qpa} can be solved by QPA in
  \[
    \bigO\left(\frac{(L - \hat{D}_{\min})n^2}{\Thm}\right)
  \]
  running time.
\end{thm}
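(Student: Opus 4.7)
The plan is to mirror the analogous argument sketched for RTA (which the author defers to the discussion preceding Theorem~\ref{thm:fp-kern}) and lift Theorem~\ref{thm:fixed} to QPA. First I would observe that $\dbf$ is a weakly increasing integer-valued step function, so the update map $g(t) = \dbf(t) - 1$ is also weakly increasing and integer-valued. Starting from a safe upper bound $\overline{t}_0$ for the largest fixed point $t^\ast$ of $g$, the iterates $\overline{t}_{k+1} = g(\overline{t}_k)$ form a weakly decreasing sequence that stays above $t^\ast$ and terminates the first time $\overline{t}_{k+1} = \overline{t}_k$. This is precisely the situation covered by Theorem~\ref{thm:fixed}, which (I expect) bounds the number of iterations of a monotone fixed-point recursion by the number of distinct values attained by the underlying step function in the search interval.

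Next I would count the jumps of $\dbf$ in $\rinterval{\hat{D}_{\min}}{L}$. By definition~\eqref{def:dbf}, task $j$ contributes jumps exactly at the points $\hat{D}_j - T_j + k T_j$ with $k \ge 0$, so the number of jumps contributed by task $j$ within the interval is at most $(L - \hat{D}_j)/T_j + 1 \le L/T_j + 1$. Summing over $j \in [n]$ and using the identity $\sum_{j \in [n]} 1/T_j = n/\Thm$ (the definition of the harmonic mean of the periods), the total number of jumps, and hence an upper bound on the number of QPA iterations, is
\[
  \bigO\!\left(n + (L - \hat{D}_{\min})\sum_{j \in [n]} \frac{1}{T_j}\right) = \bigO\!\left(\frac{(L - \hat{D}_{\min})\, n}{\Thm}\right),
\]
where the additive $n$ is absorbed into the leading term whenever the interval is nontrivial.

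Finally I would bound the per-iteration work. Each iteration consists of a single evaluation of $\dbf(\overline{t}_k)$; by~\eqref{def:dbf} this is a sum of $n$ terms, each involving an integer floor, a multiplication, and an addition, so it takes $\bigO(n)$ elementary arithmetic operations. Multiplying the iteration count by the per-iteration cost yields the claimed $\bigO((L - \hat{D}_{\min})\, n^2 / \Thm)$ running time.

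I do not expect any sharp obstacles, since all ingredients are either available from Theorem~\ref{thm:fixed} or reduce to step-counting; the only mildly delicate point is justifying that \emph{every} QPA iteration (except the last) strictly decreases $\overline{t}$ across at least one jump of $\dbf$. This follows because if $g(\overline{t}_k) = \overline{t}_k$ then $\overline{t}_k$ is already a fixed point and QPA terminates, and otherwise monotonicity of $g$ together with $g(\overline{t}_k) < \overline{t}_k$ forces the next iterate to land at a strictly smaller value in the image of $g$, i.e., one jump below.
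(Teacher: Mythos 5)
Your argument is correct and follows essentially the same route as the paper: the paper also derives Theorem~\ref{thm:qpa} by appealing to Theorem~\ref{thm:fixed}, counting the number of steps of the underlying step function in the search interval as roughly $(L-\hat{D}_{\min})n/\Thm$ (via the harmonic mean), and multiplying by the $\bigO(n)$ per-iteration cost. The only stylistic difference is that the paper ultimately justifies the downward iteration by reducing the EDF problem to the kernel via the substitution $t \mapsto -t$ (so that Theorem~\ref{thm:fixed} applies literally in the upward direction), whereas you directly argue the mirrored statement for a decreasing sequence; both are sound.
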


Some similarities and differences between RTA and QPA are summarized in
Table~\ref{tab:comp}.

\begin{table}[t]
  \centering%
  \setlength{\tabcolsep}{3pt}%
  \begin{tabular}{l p{0.42\linewidth} p{0.40\linewidth}}\toprule
    &RTA &QPA \\\midrule
    system &synchronous, preemptive, uniprocessor, constrained-deadline, FP &synchronous, preemptive, uniprocessor, arbitrary-deadline, EDF\\
    iteration &update $t$ to $\rbf(t)$ &update $t$ to $\dbf(t) - 1$\\
    final value &least fixed point of $\rbf(\cdot)$, if system is schedulable &greatest fixed point of $\dbf(\cdot) - 1$, if system is unschedulable\\\bottomrule
  \end{tabular}
  \caption{Comparing RTA and QPA.}%
  \label{tab:comp}%
\end{table}

\subsection{The upper bound $L$}%
\label{sec:L}%

The interval of interest in Problem~\eqref{opt:qpa} is
$\rinterval{\hat{D}_{\min}}{L}$. The hyper-period of the system, denoted $\hyp$,
equals $\lcm \{ T_i \mid i \in [n] \}$ and can be used as $L$. A stronger bound
\begin{equation}%
  \label{def:La}%
  L_a = \min \{t \in \linterval{0}{\hyp} \mid\, \rbf(t) \le t\}.
\end{equation}
can be used as $L$ but it is harder to compute than the hyper-period. Finally,
if $\sum_{j \in [n]} U_j < 1$, then
\begin{equation}%
  \label{def:Lb}%
  L_b = \max\left\{\max_{j \in [n]} \{\hat{D}_j - T_j\}, \frac{\sum_{j \in [n]} (T_j - \hat{D}_j)U_j}{1 - \sum_{j \in [n]} U_j}\right\}
\end{equation}
can also be used as $L$. Many researchers have contributed to these bounds; more
details are provided, for instance,
by~\citet{georgePreemptiveNonPreemptiveRealTime1996} and
\citet{zhangSchedulabilityAnalysisEDFscheduled2013}.

\section{Background}%
\label{sec:back}%

\subsection{Fixed-point iteration}%
\label{sec:fpi}%

A fixed point of a function $f$ is a point $x$ such that
\[
  f(x) = x
\]
Given an initial approximation $x_0$ of a fixed point of $f$, \emph{fixed-point
  iteration} involves repeatedly applying $f$ to generate the sequence:
\[
  f(x_0), f(f(x_0)), f(f(f(x_0))), \dotsc
\]
In numerical analysis, the function is often a continuous real-valued function,
and the iteration is terminated when the last two generated values are within
some tolerance~\citep[see, for instance,][pg. 60]{burdenNumericalAnalysis2011}.
We will use fixed-point iteration only for monotonic step functions with a
finite number of steps in any bounded interval. For such functions, fixed-point
iteration can be shown to have the following property.

\begin{thm}\label{thm:fixed}\citep[][Thm. 1]{sjodinImprovedResponsetimeAnalysis1998}
  Let $\interval{a}{b}$ be an interval and let $f$ be a monotonically
  nondecreasing step function on $\interval{a}{b}$ with at most $n$ steps. One
  of the following statements must be true:
  \begin{enumerate}
  \item $f(a) \le a$.
  \item Fixed-point iteration with initial value $a$ terminates in at most $n$
    steps; if fixed points of $f$ lie in $\interval{a}{b}$, then the iteration
    converges to the smallest fixed point in $\interval{a}{b}$, otherwise a
    value larger than $b$ is generated.
  \end{enumerate}
\end{thm}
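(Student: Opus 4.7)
My plan is to dispose of case (1) trivially---when $f(a) \le a$, there is nothing to prove---and concentrate on case (2), assuming $f(a) > a$. Define the iterates $x_0 = a$ and $x_{k+1} = f(x_k)$. Starting from $x_1 = f(a) > a = x_0$, monotonicity of $f$ yields by induction $x_{k+1} = f(x_k) \ge f(x_{k-1}) = x_k$, so the sequence is nondecreasing; in particular, once an iterate exceeds $b$ all subsequent ones would as well, so it is natural to terminate at that point.

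The crux is a step-advancement argument bounding the iteration count. Enumerate the maximal intervals on which $f$ is constant as $S_1, \dots, S_m$ in left-to-right order (with $m \le n$), and let $c_j$ be the value of $f$ on $S_j$. Suppose at iteration $k$ we have $x_k \in S_{j_k}$, so that $x_{k+1} = c_{j_k}$. Exactly one of the following holds: (i) $x_{k+1} > b$, and we terminate; (ii) $x_{k+1} \in S_{j_k}$, in which case $f(x_{k+1}) = c_{j_k} = x_{k+1}$, so $x_{k+1}$ is a fixed point and we terminate; or (iii) $x_{k+1} \in \interval{a}{b} \setminus S_{j_k}$, and since $x_{k+1} \ge x_k$ this places $x_{k+1}$ in some $S_{j_{k+1}}$ with $j_{k+1} > j_k$. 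In case (iii) the step index strictly increases, so the chain $j_0 < j_1 < \cdots$ can contain at most $m \le n$ terms before one of the terminating cases must occur.

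To finish, suppose fixed points of $f$ lie in $\interval{a}{b}$ and let $t^*$ be the smallest such. An induction shows $x_k \le t^*$ for all $k$: $x_0 = a \le t^*$, and if $x_k \le t^*$ then $x_{k+1} = f(x_k) \le f(t^*) = t^*$. Hence no iterate exceeds $b$, ruling out termination of type (i), so the iteration terminates at a fixed point in $\interval{a}{t^*}$; by minimality of $t^*$ this fixed point equals $t^*$. When $f$ has no fixed point in $\interval{a}{b}$, termination of type (ii) is impossible, so the iteration must produce a value larger than $b$.

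The main obstacle is pinning down the step-advancement argument precisely: one has to define \emph{step} unambiguously (a maximal interval of constancy), account for the boundary subtleties in deciding whether $x_{k+1} = f(x_k)$ lies in the same step as $x_k$, and observe the crucial equivalence that $x_{k+1}$ lying in the same step as $x_k$ is equivalent to $x_{k+1}$ being a fixed point. With that lemma in hand, everything else reduces to routine monotone induction.
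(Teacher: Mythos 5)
The paper does not prove Theorem~\ref{thm:fixed}; it is cited verbatim from \citet[Thm.\ 1]{sjodinImprovedResponsetimeAnalysis1998} and used as a black box, so there is no in-paper proof to compare your argument against. That said, your proof is correct and is the natural one. The decisive observation is the dichotomy in your case analysis: if $x_{k+1}=f(x_k)$ lands in the same maximal constancy interval $S_{j_k}$ as $x_k$, then $f(x_{k+1})=c_{j_k}=x_{k+1}$ and the iteration has converged; otherwise, monotonicity of the iterates forces $x_{k+1}$ into a strictly later interval $S_{j_{k+1}}$ (you correctly rule out $j_{k+1}<j_k$ because $x_{k+1}\ge x_k\ge\inf S_{j_k}$ yet $x_{k+1}\notin S_{j_k}$). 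Since the indices $j_0<j_1<\cdots$ are distinct elements of $\{1,\dots,m\}$ with $m\le n$, termination within $n$ applications follows. Your monotone-induction argument that every iterate stays $\le t^{*}$ (the least fixed point) correctly shows the iteration cannot overshoot and must converge to $t^{*}$ itself; and in the absence of fixed points in $\interval{a}{b}$, case~(ii) can never fire, so case~(i) must. One small point worth stating explicitly if you polish this: the maximal intervals of constancy $S_1,\dots,S_m$ partition $\interval{a}{b}$, so $x_{k+1}\in\interval{a}{b}$ lies in exactly one $S_{j'}$, sidestepping the boundary ambiguity you flag at the end.
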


Although the above theorem was originally proved in the context of RTA (see
Section~\ref{sec:rta}), it also works more generally.

\subsection{Integer programs}%
\label{sec:ip}%

An optimization problem on the variable $x$ that can be expressed as
\begin{equation}%
  \label{lp:gen}%
  \min \{c\cdot x \mid x \in \R^n_{\ge 0}, Ax \ge b\},
\end{equation}
for some $m \times n$ matrix $A \in \R^{m \times n}$, $b \in \R^{m}$, $c \in
\R^n$ is a linear program. Here $c\cdot x$ is the dot product of the vectors.
LPs can be solved using methods such as the Fourier-Motzkin elimination method,
the simplex method~\citep{dantzigOriginsSimplexMethod1987}, the ellipsoid
method~\citep{khachiyanPolynomialAlgorithmsLinear1980}, and interior point
methods~\citep{karmarkarNewPolynomialtimeAlgorithm1984}. The last two methods
are known to run in polynomial time, and thus have theoretical significance.

If $x$ is restricted to be integral, then the problem is an \emph{integer
  program} (IP):
\begin{equation}%
  \label{ip:gen}%
  \min \{c\cdot x \mid x \in \Z^n_{\ge 0}, Ax \ge b\}.
\end{equation}
A \emph{relaxation} of an optimization problem with a maximization (resp.,
minimization) objective is a simpler optimization problem with optimal value at
least as large (resp., small) as the optimal value of the original problem.
Problems can be relaxed in many ways, but we will primarily be concerned with
relaxations of integer programs where the set of feasible solutions is expanded
by
\begin{itemize}
\item allowing variables to be continuous instead of integral; and/or
\item dropping constraints.
\end{itemize}
If the relaxation only involves making the variables continuous, then it is
called a linear relaxation. Thus, program~\eqref{lp:gen} is a linear relaxation
of program~\eqref{ip:gen}.

\subsection{Cutting-plane methods}%
\label{sec:cp}%

IPs can be solved using the cutting-plane methodology, which consists of
repeating the following steps:
\begin{enumerate}
\item Solve a relaxation of the IP\@; usually, the linear relaxation of the IP
  is chosen.
\item If the relaxation is found to be infeasible, then the IP is infeasible and
  we return \textrm{``infeasible''}.
\item If the optimal solution for the relaxation is integral, then it is an
  optimal solution for the IP, and we return the solution.
\item If the optimal solution for the relaxation is not integral, then find one
  or more \emph{cuts} or \emph{valid inequalities} that hold for all points in
  the convex hull of the set of feasible solutions for the IP but do not hold
  for the current solution. Since the cut separates the current solution from
  the feasible solutions, the problem of finding cuts is called the
  \emph{separation problem}.
\item Add the cuts to the IP and go to the first step.
\end{enumerate}
After each iteration, we have a better approximation for the convex hull of the
set of feasible solutions for the IP, but the problem description, in general,
is longer. Since the problem is more restricted in each iteration, the optimal
values for the relaxations generated in successive iterations of the
cutting-plane algorithm, denoted $\u{Z}_1, \u{Z}_2, \dotsc$, must satisfy
\[
  \u{Z}_1 \le \u{Z}_2 \le \dotsb
\]
assuming that the direction of optimization is minimization. These values are
sometimes called \emph{dual bounds} for the optimal value for the IP\@. More
details about integer programs can be found in standard
textbooks~\citep{alma991020577069705251}.

\subsection{Linear programming duality}%
\label{sec:duality}%

The linear program
\begin{equation}%
  \label{lp:gen_dual}%
  \max \{b\cdot y \mid y \in \R^m_{\ge 0}, A^Ty \le c \}
\end{equation}
is called the \emph{dual} of LP~\eqref{lp:gen}; LP~\eqref{lp:gen} is called the
primal program in such a context. Linear programming duality is the idea that
for such a pair of programs exactly one of the following statements is true:
\begin{itemize}
\item Both programs are infeasible.
\item One program is unbounded, and the other program is infeasible.
\item Both programs have optimal solutions with the same objective value.
\end{itemize}
The two programs have optimal solutions $x^*$ and $y^*$ with the same objective
value if and only if they satisfy \emph{complementary slackness} conditions:
\begin{itemize}
\item For all $i \in [m]$, $x^*$ satisfies the $i$-th constraint of
  LP~\eqref{lp:gen} with equality or $y^*_i = 0$.
\item Similarly, for all $i \in [n]$, $y^*$ satisfies the $i$-th constraint of
  LP~\eqref{lp:gen_dual} with equality or $x^*_i = 0$.
\end{itemize}

More details about linear programming duality can be found in standard
textbooks~\citep{matousekUnderstandingUsingLinear2007}.

\section{Integer programs for FP and EDF schedulability}

Here, we consider IP formulations for Problem~\eqref{opt:rta} and
Problem~\eqref{opt:qpa}; recall from Section~\ref{sec:models} that these
problems are essentially equivalent to the FP and EDF schedulability problems.

\subsection{IP-FP}%
\label{sec:ip_fp}%

Consider the following IP\@:
\begin{equation*}%
  \begin{array}{rrclcl}
    \min & \multicolumn{3}{l}{t} \\
    \textrm{s.t.}&t &\ge &1\\
         &t &\le &D_n-J_n\\
         &t - \sum_{j \in [n]} C_jx_j&\ge &0\\
         &T_ix_i - t&\ge &J_i,&& i \in [n]\\
         &t &\in &\Z\\
         &x &\in &\Z^n
  \end{array}
\end{equation*}
We show that this IP is equivalent to Problem~\eqref{opt:rta} in the next
theorem, and we call it IP-FP\@.

\begin{thm}%
  \label{thm:ip_fp}%
  Problem~\eqref{opt:rta} has a solution $t^*$ if and only if IP-FP has a
  solution $(t^*,x^*)$ for some $x^* \in \Z^n$.
\end{thm}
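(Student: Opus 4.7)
The plan is to show that IP-FP is essentially the epigraph-style integer formulation of the fixed-point condition $\rbf(t) \le t$, where the auxiliary integer variables $x_i$ play the role of the ceilings $\lceil (t+J_i)/T_i \rceil$ appearing in the definition~\eqref{def:rbf} of $\rbf$. The key algebraic observation is that for fixed $t$, the smallest integer $x_i$ satisfying the constraint $T_i x_i - t \ge J_i$ is exactly $\lceil (t+J_i)/T_i \rceil$, and any larger choice only makes the constraint $t \ge \sum_j C_j x_j$ harder to satisfy. I will use this observation to convert solutions back and forth.

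For the forward direction, I would assume $t^*$ solves Problem~\eqref{opt:rta} and define $x^*_i = \lceil (t^*+J_i)/T_i \rceil$. I would verify directly that $(t^*,x^*)$ satisfies every constraint of IP-FP: the bounds on $t$ are inherited from Problem~\eqref{opt:rta}, the jitter-period constraints hold by construction of the ceiling, and the load constraint $t^* - \sum_j C_j x^*_j \ge 0$ is precisely $\rbf(t^*) \le t^*$. (I would briefly note how to handle the edge case $t^*=0$ so that the $t \ge 1$ bound is respected, e.g.\ under the standing assumption that the parameters make this nontrivial.)

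For the backward direction, I would take any feasible $(t,x)$ for IP-FP and use the per-task inequality $T_i x_i \ge t+J_i$ combined with integrality of $x_i$ to conclude $x_i \ge \lceil (t+J_i)/T_i \rceil$. Multiplying by $C_i \ge 0$ and summing gives $\sum_j C_j x_j \ge \rbf(t)$, so the load constraint yields $t \ge \rbf(t)$. Hence the $t$-component of any feasible solution of IP-FP is feasible for Problem~\eqref{opt:rta}. Combined with the forward direction, this shows that the two problems have identical feasible $t$-projections, so they share a common minimum: $t^*$ solves Problem~\eqref{opt:rta} iff there is some $x^*$ with $(t^*,x^*)$ an optimal solution of IP-FP.

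I do not expect a serious obstacle; the proof is essentially a careful bookkeeping argument around the equivalence ``$T_i x_i \ge t+J_i$ with $x_i \in \Z$'' $\iff$ ``$x_i \ge \lceil (t+J_i)/T_i \rceil$.'' The only delicate point is the asymmetry in the lower bound on $t$ (Problem~\eqref{opt:rta} allows $t=0$, IP-FP requires $t \ge 1$); I would address this by observing that $t=0$ can satisfy $\rbf(0) \le 0$ only in the degenerate zero-jitter zero-work case, which can be handled separately or ruled out by convention so that the two formulations agree on the minimum.
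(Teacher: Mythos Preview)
Your proposal is correct and is essentially the same bookkeeping that underlies the paper's proof: both hinge on the observation that for fixed $t$ the minimal integral $x_i$ with $T_ix_i-t\ge J_i$ is $\lceil (t+J_i)/T_i\rceil$, so the constraint $t\ge\sum_j C_jx_j$ collapses to $t\ge\rbf(t)$. The paper phrases this by analysing the structure of an \emph{optimal} pair $(t^*,x^*)$ and deducing $t^*=\rbf(t^*)$, whereas you show more directly that the feasible $t$-projections of the two problems coincide; your route makes the ``if and only if'' immediate and avoids appealing to optimality in the middle of the argument.

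One small correction: your worry about an asymmetry in the lower bound on $t$ is unfounded. Problem~\eqref{opt:rta} minimizes over $t\in\linterval{0}{D_n-J_n}\cap\Z$, i.e.\ the left-open interval $(0,D_n-J_n]$ intersected with the integers, which is exactly $\{1,\dotsc,D_n-J_n\}$. So the IP-FP constraint $t\ge 1$ matches the domain of Problem~\eqref{opt:rta} precisely, and no separate handling of $t=0$ is needed.
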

\begin{proof}[Proof Sketch]
  Any optimal solution $(t^*,x^*)$ to the IP satisfies the following identities:
  \begin{align*}
    t^* &= \max\left\{\sum_{j \in [n]} C_jx^*_j, 1\right\}\\
    x^*_i &= \ceil[\Bigg]{\frac{t^*+J_i}{T_i}},\quad i \in [n]
  \end{align*}
  Since $t^* \ge 1$ and $J_i \ge 0$, $x^*_i \ge 1$. Then, using the first
  identity and the fact that $C_i > 0$ for all $i$, we have $t^* \ge \sum_{j \in
    [n]} C_j > 1$. Then, the first identity can be simplified to
  \[
    t^* = \sum_{j \in [n]} C_jx^*_j = \sum_{j \in [n]}
    \ceil[\Bigg]{\frac{t^*+J_j}{T_j}} C_j = \rbf(t^*)
  \]
  The third equality uses the definition of $\rbf$ (Definition~\eqref{def:rbf}).
  Since the objective is to minimize $t$, $t^*$ is the smallest $t$ in
  $\linterval{0}{D_n-J_n}$ satisfying $t = \rbf(t)$. Thus, $t^*$ is a solution
  for Problem~\eqref{opt:rta}.
\end{proof}

Similar IP or MILP (mixed-integer linear programming) formulations may be found
in the literature~\citep[see, for
instance,][]{baruahRealtimeSchedulingSporadic2005,
  zengEfficientFormulationRealTime2013}.\footnote{The formulation proposed
  (informally) by \citet{baruahRealtimeSchedulingSporadic2005} is missing the
  constraint $t > 0$. Thus, it is trivially satisfied by setting all variables
  to $0$.} In most cases in the literature, new methods for solving the
formulations are not proposed: the use of IP/MILP solvers or generic methods
such as branch-and-bound for solving IPs is implicitly suggested. We will
propose a new cutting-plane algorithm for solving IP-FP in
Section~\ref{sec:cp_kern}. \citet{zengEfficientFormulationRealTime2013} claim
that since their formulation contains $\bigO(n^2)$ variables ``solving the MILP
problem in feasible time is impossible for medium and sometimes small size
systems.'' This claim, however, seems to be based on intuition rather than
evidence. Their formulation can be trivially decomposed into $n$ IPs each
containing $\bigO(n)$ variables and constraints, and we will show soon that our
cutting-plane algorithm can solve IP-FP in fewer iterations than RTA uses to
solve the corresponding instance of Problem~\eqref{opt:rta}.

\subsection{IP-EDF}%
\label{sec:ip_edf}%

The EDF schedulability condition is more complex than the FP schedulability
condition, and the process of formulating it as an integer program is
error-prone (see discussion in Appendix~\ref{sec:A3}). Instead of trying to
formulate the EDF schedulability condition as a single IP, we divide it into
$\bigO(n)$ subproblems with simple IP formulations. Recall our assumption that
in EDF systems tasks are listed in a nondecreasing order using the key
$\hat{D}_j - T_j$ for each task $j$. Now, we consider the following $n$
subintervals of $\rinterval{\hat{D}_{\min}}{L}$:
\begin{align*}
  &\rinterval{\hat{D}_{1} - T_1}{\hat{D}_2 - T_2} \cap \rinterval{\hat{D}_{\min}}{L},\\
  &\rinterval{\hat{D}_2 - T_2}{\hat{D}_3 - T_3} \cap \rinterval{\hat{D}_{\min}}{L},\\
  &\dotsc,\\
  &\rinterval{\hat{D}_{n-1} - T_{n-1}}{\hat{D}_n - T_n} \cap \rinterval{\hat{D}_{\min}}{L},\\
  &\rinterval{\hat{D}_n - T_n}{L} \cap \rinterval{\hat{D}_{\min}}{L}.
\end{align*}
We ignore $\rinterval{\hat{D}_{\min}}{\hat{D}_1 - T_1}$ because for any $m$ such
that $\hat{D}_m = \hat{D}_{\min}$, we must have
\[
  \hat{D}_1 - T_1 \le \hat{D}_m - T_m < D_m = \hat{D}_{\min},
\]
and thus the interval is empty. Thus, the family of subintervals is a partition
of $\rinterval{\hat{D}_{\min}}{L}$.

Some or all of the intervals considered in the above list may be empty; for
instance, only the last interval is nonempty in a constrained-deadline system.
Thus, in the $n$ intervals listed above, we can restrict our attention from the
$p$-th interval to the $q$-th interval, where
\begin{align*}
  p &= \min \{i \in [n] \mid i == n \lor \hat{D}_{i+1} - T_{i+1} \ge \hat{D}_{\min}\}\\
  q &= \max \{i \in [n] \mid \hat{D}_{i} - T_{i} < L\}
\end{align*}

For any $k \in [q] \setminus [p-1]$, let $\rinterval{a_k}{b_k}$ refer to the
$k$-th interval. The $\dbf$ function in $\rinterval{a_k}{b_k}$ can be simplified
to a simpler function $\dbf_k$ given by
\[
  t \mapsto \sum_{i \in [k]} \left\lfloor\frac{t + T_i -
      \hat{D}_i}{T_i}\right\rfloor C_i
\]
because
\[
  t < b_k \le \hat{D}_{k+1} - T_{k+1} \le \hat{D}_{k+2} - T_{k+2} \le \dotsb \le
  L.
\]
Using $\dbf_k$, the EDF unschedulability condition (see
Condition~\eqref{cond:edf}) can be rewritten as follows: an EDF system is
unschedulable if and only if $\sum_{j \in [n]} U_j > 1$ or there exists a $k \in
[n]$ such that
\begin{equation}\label{cond:edf_2}
  \exists t \in \rinterval{a_k}{b_k}: t < \dbf_k(t)
\end{equation}
By replacing $t$ with $-t$, Condition~\eqref{cond:edf_2} can be rewritten as
\[
  \exists t \in \linterval{-b_k}{-a_k}: t + \dbf_k(-t) \ge 1
\]
Using the identity $\floor{-x} = -\ceil{x}$, we have
\begin{align*}
  \dbf_k(-t) &= \sum_{i \in [k]} \left\lfloor\frac{-t + T_i - \hat{D}_i}{T_i}\right\rfloor C_i\\
             &= -\sum_{i \in [k]} \left\lceil\frac{t + \hat{D}_i - T_i}{T_i}\right\rceil C_i.
\end{align*}
Thus, the condition can be rewritten again as
\begin{equation}%
  \label{cond:edf_3}%
  \exists t \in \linterval{-b_k}{-a_k}: \sum_{i \in [k]}
  \left\lceil\frac{t + \hat{D}_i - T_i}{T_i}\right\rceil C_i + 1 \le t.
\end{equation}
Condition~\eqref{cond:edf_3} is satisfied if and only if the problem
\begin{equation}%
  \label{opt:qpa_2}%
  \min \left\{t \in \linterval{-b_k}{-a_k} \cap \Z \mathrel{\Bigg\vert} \sum_{i \in [k]}
    \left\lceil\frac{t + \hat{D}_i - T_i}{T_i}\right\rceil C_i + 1 \le t\right\}
\end{equation}
has an optimal solution. The above analysis is distilled into
Algorithm~\ref{alg:branch}.

\begin{algorithm}[t]%
  \caption{Solve EDF schedulability by branching.}%
  \label{alg:branch}%
  \begin{algorithmic}[1]%
    \State{Sort tasks in a nondecreasing order using the key $i \mapsto \hat{D}_i-T_i$.}%
    \If{$\hat{D}_{\min} \ge L$}%
      \State\Return\textrm{``infeasible''}%
    \EndIf%
    \State{$p \gets 1$}%
    \While{$p \le n-1$}%
      \If{$\hat{D}_{p+1} - T_{p+1} \ge \hat{D}_{\min}$}%
        \State{\Break}%
      \EndIf%
      \State{$p \gets p+1$}%
    \EndWhile%
    \State{$q \gets n$}%
    \While{$q \ge 1$}%
      \If{$\hat{D}_{q} - T_{q} < L$}%
        \State{\Break}%
      \EndIf%
      \State{$q \gets q-1$}%
    \EndWhile%
    \State{$k \gets q$}%
    \While{$k \ge p$}%
      \State{$a_k \gets \max(\hat{D}_{\min}, \hat{D}_{k} - T_{k})$}%
      \If{$k = n$}%
        \State$b_k \gets L$
      \Else%
        \State$b_k \gets \hat{D}_{k+1} - T_{k+1}$%
      \EndIf%
      \State{Solve problem~\eqref{opt:qpa_2}.}%
      \If{an optimal solution exists}%
        \State\Return{the negation of optimal value}%
      \EndIf%
      \State$k \gets k - 1$%
    \EndWhile%
    \State\Return\textrm{``infeasible''}%
  \end{algorithmic}
\end{algorithm}

Since Problem~\eqref{opt:qpa_2} is very similar to Problem~\eqref{opt:rta}, it
can be formulated as an IP similar to IP-FP\@:
\begin{equation*}%
  \begin{array}{rrclcl}
    \min & \multicolumn{3}{l}{t} \\
    \textrm{s.t.}&t &\ge &-b_k+1\\
         &t &\le &-a_k\\
         &t - \sum_{j \in [k]} C_jx_j&\ge &1\\
         &T_ix_i - t&\ge & \hat{D}_i - T_i, &&i \in [k]\\
         &t &\in &\Z\\
         &x &\in &\Z^k
  \end{array}
\end{equation*}
This IP is equivalent to Problem~\eqref{opt:qpa_2}, and we call it IP-EDF\@. We
omit the proof because it is the same as the proof for Theorem~\ref{thm:ip_fp}.

\begin{thm}%
  \label{thm:ip_edf}%
  Problem~\eqref{opt:qpa_2} has a solution $t^*$ if and only if IP-EDF has an
  optimal solution $(t^*,x^*)$ for some $x^* \in \Z^k$.
\end{thm}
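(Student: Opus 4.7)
The plan is to mirror the proof of Theorem~\ref{thm:ip_fp} almost verbatim. IP-EDF is obtained from IP-FP by the substitutions $J_i \mapsto \hat{D}_i - T_i$ and $D_n - J_n \mapsto -a_k$, together with the shift of the packing constraint from $t - \sum_j C_j x_j \ge 0$ to $t - \sum_j C_j x_j \ge 1$ and the strict lower bound $t \ge -b_k + 1$; each constraint still plays the same structural role, so the earlier argument transplants.

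For the forward direction, I would start from an optimal $(t^*, x^*)$ of IP-EDF. The constraints $T_i x^*_i - t^* \ge \hat{D}_i - T_i$ combined with $x^* \in \Z^k$ force
\[
  x^*_i \;\ge\; \ceil*{\frac{t^* + \hat{D}_i - T_i}{T_i}}, \qquad i \in [k],
\]
and, since the objective depends only on $t$ and decreasing any $x_i$ only loosens the packing constraint, we may assume without loss of generality that each $x^*_i$ attains the above ceiling. Substituting into $t^* \ge 1 + \sum_j C_j x^*_j$ recovers exactly the inequality defining Problem~\eqref{opt:qpa_2}, and the first two IP constraints give $t^* \in \linterval{-b_k}{-a_k} \cap \Z$; hence $t^*$ is feasible for Problem~\eqref{opt:qpa_2}, and it is minimal there by the optimality of $t^*$ in IP-EDF.

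For the reverse direction, given an optimal $t^*$ of Problem~\eqref{opt:qpa_2}, I would set $x^*_i \coloneqq \ceil*{(t^* + \hat{D}_i - T_i)/T_i}$ and verify the six IP-EDF constraints one by one, which is a routine check. Optimality of $(t^*, x^*)$ in IP-EDF then follows from minimality of $t^*$ in Problem~\eqref{opt:qpa_2}: any feasible IP-EDF solution with strictly smaller $t$ would, by the forward direction, yield a strictly smaller feasible value there, a contradiction. The only mildly subtle step is the ``without loss of generality'' reduction to $x^*_i = \ceil*{\cdot}$ at the IP optimum, and I expect this to be the only place where care is required; it is exactly the point the original proof folds silently into its opening identities, which is presumably why the authors omit the full argument here.
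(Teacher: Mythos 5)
Your proposal is correct and follows essentially the same route as the paper, which explicitly defers to the proof of Theorem~\ref{thm:ip_fp}: that proof asserts the two identities $t^* = \max\{\sum_j C_j x^*_j, \ldots\}$ and $x^*_i = \ceil{\cdot}$ at an IP optimum and substitutes to recover the problem's defining inequality, exactly as you do after the substitutions $J_i \mapsto \hat{D}_i - T_i$, $\beta = 0 \mapsto \beta = 1$, and the shifted bounds on $t$. Your explicit ``without loss of generality'' reduction to $x^*_i = \ceil{(t^* + \hat{D}_i - T_i)/T_i}$, and your explicit treatment of both directions with the cross-reference argument for optimality, are small tightenings of what the paper folds into its opening identities, but they do not change the approach.
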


Thus, EDF schedulability can be solved by combining Algorithm~\ref{alg:branch}
and some method for solving IP-EDF\@. We demonstrate the process through an
example.

\begin{exmp}
  The EDF system in the Table~\ref{tab:ex2} misses a deadline at time $10$. The
  tasks are sorted in a nondecreasing order using the key $i \mapsto
  \hat{D}_i-T_i$. $\hat{D}_{\min}$ and $L$ equal $10$ and $13$ respectively.
  \begin{table}
    \centering%
    \begin{tabular}{c c c c}\toprule
      $i$ & $C_i$ & $D_i$ & $T_i$\\\midrule
      $1$ & $6$ & $10$ & $17$\\
      $2$ & $5$ & $10$ & $13$\\
      $3$ & $1$ & $31$ & $20$\\\bottomrule
    \end{tabular}%
    \caption{An EDF task system with arbitrary deadlines and zero release
      jitter.}%
    \label{tab:ex2}%
  \end{table}
  Running Algorithm~\ref{alg:branch} on the system, we find that the condition
  on line 2 is not satisfied. The calculation on lines 5--18 yields $p = 2$ and
  $q = 3$. In the loop, we solve Problem~\eqref{opt:qpa_2} for $(a_3,b_3) = (11,
  13)$ and $(a_2,b_2) = (10, 11)$ in order. In the former case, we do not find a
  feasible solution; in the latter case, we get the following IP\@:
  \begin{equation*}%
    \begin{array}{rrclcl}
      \min & \multicolumn{3}{l}{t} \\
      \textrm{s.t.}&t &\ge &-10\\
           &t &\le &-10\\
           &t - 6x_1 - 5x_2&\ge &1\\
           &17x_1 - t&\ge & -7\\
           &13x_2 - t&\ge & -3\\
           &t, x_1, x_2 &\in &\Z
    \end{array}
  \end{equation*}
  $(t,x_1,x_2) = (-10, -1, -1)$ is an optimal solution for the IP, and hence the
  system is unschedulable.
\end{exmp}

\section{The Kernel}%
\label{sec:kern}%

The following problem is called \emph{the kernel}:
\begin{equation}%
  \label{opt:abs}%
  \min \left\{t \in \interval{a}{b} \cap \Z \mathrel{\Bigg\vert}
    \sum_{j \in [n]} \ceil[\Bigg]{\frac{t + \alpha_j}{T_j}}C_j + \beta \le t\right\},
\end{equation}
where
\begin{itemize}
\item $n$ is a nonnegative integer.
\item For any $i \in [n]$, $C_i,T_i \in \Z_{> 0}$ are constants, and $U_i$ denotes
  the ratio $C_i / T_i$. We assume that $\sum_{j \in [n]} U_j \le 1$.
\item $\alpha \in \Z^{n}$, $\beta \in \Z$, $a \in \Z$, and $b \in \Z$ are
  constants.
\end{itemize}
It may be useful to view $C_i$, $T_i$, and $U_i$ as the wcet, period, and
utilization, respectively, of task $i$; on the other hand, this interpretation
can be ignored and $C$, $T$, and $U$ may be viewed simply as vectors of
dimension $n$ that satisfy the above properties.

If $(n,\alpha,\beta,a,b)$ is equal to $(n, J, 0, 1, D_n-J_n)$, then the kernel
is identical to Problem~\eqref{opt:rta}. If $(n,\alpha,\beta,a,b)$ is equal
to $(k,\hat{D}-T, 1, -b_k+1, -a_k)$, then the kernel is identical to
Problem~\eqref{opt:qpa_2}. Thus, the FP and EDF schedulability problems
reduce to the kernel. If $(n,\alpha,\beta,a,b)$ is equal to $(n, J, 0, 1,
\hyp)$, then the kernel is identical to the problem of computing $L_a$ (see
Definition~\ref{def:La}). In the next subsections, we will propose two
algorithms to solve the kernel, FP-KERN and CP-KERN\@.

\subsection{FP-KERN}%
\label{sec:fp_kern}%

If an optimal solution $t^*$ for the kernel exists and does not trivially equal
$a$, then $t^*$ is the smallest $t \in \interval{a}{b}$ that satisfies
\[
  \sum_{j \in [n]} \ceil[\Bigg]{\frac{t + \alpha_j}{T_j}}C_j + \beta = t;
\]
equivalently, $t^*$ is the smallest fixed point of $\phi$ in $\interval{a}{b}$
where $\phi$ is given by
\[
  t \mapsto \sum_{j \in [n]} \ceil[\Bigg]{\frac{t + \alpha_j}{T_j}}C_j + \beta.
\]
Since $\phi$ is a monotonically nondecreasing step function. The number of steps
in $\interval{a}{b}$ is roughly equal to
\[
  \sum_{j \in [n]} \frac{b-a}{T_i} = \frac{(b-a)n}{\Thm}
\]
From Theorem~\ref{thm:fixed}, $t^*$ can be found by fixed-point iteration in no
more than $n(b-a) / \Thm$ steps. Since each iteration requires $\bigO(n)$
arithmetic operations, the algorithm requires $\bigO((b-a)n^2 / \Thm)$ time. The
fixed-point iteration algorithm is called FP-KERN and is listed as
Algorithm~\ref{alg:fp}.

\begin{algorithm}[t]%
  \caption{FP-KERN, a fixed-point iteration algorithm for the kernel.}%
  \label{alg:fp}%
  \begin{algorithmic}[1]%
    \If{$a > b$}%
      \State\Return\textrm{``infeasible''}%
    \EndIf%
    \If{$n = 0$}%
      \State\Return$a$%
    \EndIf%
    \State{}Initialize $\u{t}$ to $a$.%
    \Repeat%
      \State$v \gets \phi(\u{t})$
      \If{$v \le \u{t}$}%
        \State\Return$\u{t}$%
      \EndIf%
      \State$\u{t} \gets v$%
    \Until$\u{t} > b$%
    \State\Return\textrm{``infeasible''}
  \end{algorithmic}
\end{algorithm}

\begin{thm}%
  \label{thm:fp-kern}%
  FP-KERN (Algorithm~\ref{alg:fp}) is a fixed-point iteration algorithm for the
  kernel with
  \[
    \bigO\left(\frac{(b-a)n^2}{\Thm}\right)
  \]
  running time.
\end{thm}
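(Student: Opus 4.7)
The plan is to establish both correctness and the running-time bound by applying Theorem~\ref{thm:fixed} to the map $\phi$ on the interval $\interval{a}{b}$. The two things I must verify are (i) that $\phi$ is a monotonically nondecreasing step function with $\bigO((b-a)n/\Thm)$ steps on $\interval{a}{b}$, and (ii) that the termination logic of Algorithm~\ref{alg:fp} correctly reports the kernel's optimum or certifies infeasibility.

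For (i), I would observe that $\phi$ is a sum of $n$ terms of the form $t \mapsto \lceil (t+\alpha_j)/T_j\rceil C_j$ plus the constant $\beta$; each summand is a monotonically nondecreasing, integer-valued step function, so $\phi$ inherits both properties. The $j$-th summand changes value only at values of $t$ where $(t+\alpha_j)/T_j$ crosses an integer, giving at most $\lceil (b-a)/T_j\rceil$ jumps on $\interval{a}{b}$. Summing over $j \in [n]$ and invoking $\sum_{j \in [n]} 1/T_j = n/\Thm$ (by definition of the harmonic mean) yields at most $(b-a)n/\Thm + n = \bigO((b-a)n/\Thm)$ steps, and Theorem~\ref{thm:fixed} then bounds the iteration count of fixed-point iteration from $a$ by this quantity.

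For (ii), I would split on the dichotomy in Theorem~\ref{thm:fixed}. If $\phi(a) \le a$, then line~9 fires on the first iteration and returns $a$, which is the kernel's optimum. Otherwise the iterates $\u{t}$ strictly increase, and either line~9 eventually fires---at which point $\u{t}$ equals the smallest fixed point of $\phi$ in $\interval{a}{b}$---or the iterate exceeds $b$ and ``infeasible'' is returned. A short monotonicity argument shows that the smallest $t \in \interval{a}{b}$ satisfying $\phi(t) \le t$ coincides with the smallest fixed point of $\phi$, so the return value matches the kernel's optimum. The trivial cases $a > b$ and $n = 0$ on lines~1--6 are handled directly. Multiplying the iteration count by the $\bigO(n)$ per-iteration cost of evaluating $\phi$ yields the claimed $\bigO((b-a)n^2/\Thm)$ bound.

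The main obstacle is the equivalence between ``smallest $t \in \interval{a}{b}$ with $\phi(t) \le t$'' and ``smallest fixed point of $\phi$ in $\interval{a}{b}$''. To prove it, suppose $t^* > a$ is the kernel's optimum and $\phi(t^*) < t^*$; monotonicity then gives $\phi(\phi(t^*)) \le \phi(t^*)$, so $\phi(t^*)$ would be a strictly smaller feasible point, contradicting minimality---one also needs $\phi(t^*) \ge a$, which follows because $t^* > a$ forces $\phi(a) > a$ in this branch and hence $\phi(t^*) \ge \phi(a) > a$ by monotonicity. Once this equivalence is secured, the application of Theorem~\ref{thm:fixed} and the counting argument from step (i) close the proof.
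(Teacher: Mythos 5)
Your proposal is correct and follows essentially the same route as the paper: reduce to Theorem~\ref{thm:fixed} by observing that $\phi$ is a monotonically nondecreasing step function with $\bigO((b-a)n/\Thm)$ steps on $\interval{a}{b}$ (via $\sum_j 1/T_j = n/\Thm$), then multiply the iteration bound by the $\bigO(n)$ per-iteration cost of evaluating $\phi$. You fill in two details the paper leaves implicit---the explicit step-count derivation summand-by-summand, and the argument that the smallest $t$ with $\phi(t)\le t$ coincides with the smallest fixed point---but these are elaborations of, not departures from, the paper's reasoning.
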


\subsection{IP-KERN}%
\label{sec:ip_kern}%

Consider the following IP\@:
\begin{equation}%
  \label{ip:kern0}%
  \begin{array}{rrclcl}
    \min & \multicolumn{3}{l}{t} \\
    \textrm{s.t.}&t &\ge &a\\
         &t &\le &b\\
         &t - \sum_{j \in [n]} C_jx_j&\ge &\beta\\
         &T_ix_i - t&\ge &\alpha_i, &&i \in [n]\\
         &t &\in &\Z\\
         &x &\in &\Z^n
  \end{array}
\end{equation}

We can compose this IP with the reductions discussed near the start of
Section~\ref{sec:kern} to get IP-FP and IP-EDF\@. The proof for the validity of
the formulation IP-FP (Theorem~\ref{thm:ip_fp}) can be trivially modified to
show that IP~\eqref{ip:kern0} is valid for the kernel. Although
IP~\eqref{ip:kern0} is a perfectly good formulation, we also include lower
bounds for $x$ in our final IP formulation for the kernel. From $t \ge a$ and
the fourth constraint in IP~\eqref{ip:kern0}, we can infer that
\[
  x_i \ge \frac{a + \alpha_i}{T_i}, \quad i \in [n].
\]
Since $x_i$ is integral, we can infer stronger lower bounds:
\[
  x_i \ge \ceil[\Bigg]{\frac{a + \alpha_i}{T_i}}, \quad i \in [n].
\]
Adding these lower bounds to the IP, we have our final IP formulation for the
kernel, which we call IP-KERN\@:
\begin{equation}%
  \label{ip:kern}%
  \begin{array}{rrclcl}
    \min & \multicolumn{3}{l}{t} \\
    \textrm{s.t.}&t &\ge &a\\
         &t &\le &b\\
         &t - \sum_{j \in [n]} C_jx_j&\ge &\beta\\
         &T_ix_i - t&\ge &\alpha_i, &&i \in [n]\\
         &x_i &\ge &\u{x}_i, &&i \in [n]\\
         &t &\in &\Z\\
         &x &\in &\Z^n
  \end{array}
\end{equation}
Here, $\u{x}_i$ is a lower bound for $x_i$ given by
\begin{equation}%
  \label{eq:ux}%
  \u{x}_i = \ceil{(a + \alpha_i)/ T_i}.
\end{equation}
This equation holds initially but $\u{x}$ is repeatedly updated to larger
integral vectors in the cutting-plane algorithm described in the next section.
The next theorem follows from the above discussion.

\begin{thm}%
  \label{thm:ip_kern}%
  The kernel has a solution $t^*$ if and only if IP-KERN has an optimal solution
  $(t^*,x^*)$ for some $x^* \in \Z^k$.
\end{thm}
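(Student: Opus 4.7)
The plan is to adapt the proof sketch of Theorem~\ref{thm:ip_fp} almost verbatim, the only new wrinkle being to check that the added lower bounds $x_i \ge \u{x}_i$ do not shrink the feasible set when projected onto the $t$-coordinate. For the $(\Rightarrow)$ direction I would suppose $t^*$ is a kernel solution and set $x^*_i := \lceil (t^* + \alpha_i)/T_i \rceil$. The fourth constraint $T_i x^*_i - t^* \ge \alpha_i$ holds by construction; the kernel's defining inequality becomes the third constraint $t^* - \sum_j C_j x^*_j \ge \beta$; and because $t^* \ge a$ and $T_i > 0$, monotonicity of the ceiling gives $x^*_i \ge \lceil (a+\alpha_i)/T_i \rceil = \u{x}_i$. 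Hence $(t^*, x^*)$ is feasible for IP-KERN with objective value $t^*$.

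For the $(\Leftarrow)$ direction, suppose $(t^*, x^*)$ is optimal for IP-KERN. Since $t$ is the only variable appearing in the objective, I would first argue that we may assume $x^*_i = \lceil (t^* + \alpha_i)/T_i \rceil$: the fourth constraint forces $x^*_i \ge \lceil (t^* + \alpha_i)/T_i \rceil$, and lowering any $x^*_i$ down to this integer value only slackens the third constraint (since $C_j > 0$), preserves the fourth constraint, and preserves the $\u{x}_i$ lower bound as well (see the next paragraph), all without changing $t^*$. Substituting these canonical values into the third constraint yields $\sum_j C_j \lceil (t^* + \alpha_j)/T_j \rceil + \beta \le t^*$, which is exactly the kernel constraint, so $t^*$ is feasible for the kernel. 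Minimality transfers because any kernel-feasible $t' < t^*$ would, by the forward construction, yield an IP-KERN-feasible solution with objective $t' < t^*$, contradicting optimality of $(t^*, x^*)$.

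The one step that warrants care is verifying, inside the reverse direction, that shrinking $x^*_i$ to $\lceil (t^* + \alpha_i)/T_i \rceil$ does not fall below $\u{x}_i$. This is immediate from $t^* \ge a$, which gives $\lceil (t^* + \alpha_i)/T_i \rceil \ge \lceil (a + \alpha_i)/T_i \rceil = \u{x}_i$ via equation~\eqref{eq:ux}. Thus the added lower bounds are redundant for the feasibility equivalence with the kernel; their purpose, as foreshadowed in the paragraph following~\eqref{eq:ux}, is solely to tighten the linear relaxation used by the cutting-plane algorithm of Section~\ref{sec:cp_kern}. The proof is therefore short and I would simply refer back to the proof sketch of Theorem~\ref{thm:ip_fp}, noting this one additional check.
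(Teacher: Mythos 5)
Your proof is correct and takes essentially the same approach as the paper, which for this theorem simply refers back to the proof sketch of Theorem~\ref{thm:ip_fp} and notes that the added lower bounds $x_i \ge \u{x}_i$ preserve validity. You spell out more carefully than the paper the WLOG reduction to $x^*_i = \lceil (t^*+\alpha_i)/T_i\rceil$ and the observation that $t^* \ge a$ makes the $\u{x}_i$ constraints redundant for feasibility, but the underlying argument is identical.
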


\subsection{A cutting-plane algorithm for IP-KERN}\label{sec:cp_kern}

\subsubsection{The relaxation}

When solving an IP using cutting planes, we must solve a relaxation of the IP\@.
Usually, a linear relaxation is used, but we will also drop the upper and lower
bounds on $t$ in our relaxation.

\begin{equation}%
  \label{lp:relax}%
  \begin{array}{rrclcl}
    \min & \multicolumn{3}{l}{t} \\
    \textrm{s.t.}&t - \sum_{j \in [n]} C_jx_j&\ge &\beta\\
         &T_ix_i - t&\ge &\alpha_i, &&i \in [n]\\
         &x_i &\ge &\u{x}_i, &&i \in [n]\\
         &t &\in &\R\\
         &x &\in &\R^n
  \end{array}
\end{equation}

Let $t^*$ denote the optimal value of this relaxation. Since the objective is to
minimize $t$, if the optimal value of the relaxation is greater than $b$, then
the proper linear relaxation of IP-KERN is infeasible, and hence IP-KERN is also
infeasible (see Section~\ref{sec:cp}). Thus, if $t^* > b$, then we can terminate
the algorithm, returning ``infeasible''. To understand why $t^* \ge a$ is
dropped from the IP, note that
\begin{align*}
  t^* &\ge \sum_{j \in [n]} C_jx_j + \beta\tag{using the first constraint in the LP}\\
      &\ge \sum_{j \in [n]} C_j\u{x}_j + \beta\tag{using the third constraint in the LP}\\
      &\ge \sum_{j \in [n]} C_j\ceil{(a + \alpha_j)/ T_j} + \beta\tag{using Equation~\ref{eq:ux}}
\end{align*}
Thus, if $t^* \le a$ then we must have
\[
  \sum_{j \in [n]} C_j\ceil{(a + \alpha_j)/ T_j} + \beta \le a
\]
which implies that $t = a, x_j = \ceil{(a + \alpha_j)/ T_j}$ is a feasible, and
hence optimal, solution to IP-KERN\@. Thus, if $t^* \le a$, then we can
terminate the algorithm returning $a$.

We will see shortly that the form of this LP remains constant for the entire
execution of the cutting-plane algorithm: only $\u{x}$ is updated to a larger
integral vector in each iteration.

\subsubsection{The separation problem}

Solving the separation problem is an essential step in any cutting-plane
algorithm. Let us assume that $(t^*,x^*)$ is an optimal solution for the
relaxation, $(t^{**}, x^{**})$ is an optimal solution for the IP, and $(t,x)$ is
a feasible solution for the IP\@. Then, $(t^*,x^*)$ must satisfy the following
equations:
\begin{align*}
  t^* &= \textstyle\sum_{j \in [n]} C_jx^*_j\\
  x^*_i &= \max\left\{\frac{t^*+\alpha_i}{T_i}, \u{x}_i\right\}, \quad i \in [n]
\end{align*}
If $x^*$ is integral, then $t^*$ is also integral, and we have found an optimal
solution to IP-KERN\@. Otherwise, there must exist a $j \in [n]$ such that
$x^*_j$ is fractional. $x^*_j$ cannot equal $\u{x}_j$, which is integral;
therefore, it must equal $(t^* + \alpha_j) / T_j$. Since $t^*$ is a dual bound
for the IP and the objective is to minimize $t$, we must have
\[
  t \ge t^{**} \ge t^*.
\]
Then, using the fourth constraint in the IP, $(t,x)$ must satisfy
\[
  x_j \ge \frac{t + \alpha_j}{T_j} \ge \frac{t^* + \alpha_j}{T_j} = x^*_j.
\]
Since $x_j$ is integral, the stronger inequality
\[
  x_j \ge \ceil{x^*_j}
\]
is also valid. Since $x^*_j$ is fractional, $x_j^* \ge \ceil{x^*_j}$ does not
hold. Thus, the inequality $x_j \ge \ceil{x^*_j}$ separates the current solution
$(t^*,x^*)$ from the convex hull of feasible solutions and is a cut (see
Section~\ref{sec:cp}). For each fractional element in $x^*_j$, a cut is added to
the IP by updating the lower bound $\u{x}_j$:
\[
  \u{x}_j \gets \ceil{x^*_j}.
\]
This step requires $\bigO(n)$ arithmetic operations, it does not change the
number of variables or the number of constraints in the IP, and it maintains the
integrality of $\u{x}$, which is utilized in generating cuts in the next
iteration.

Note that we could have used general cuts like Gomory cuts~\citep[see, for
instance,][Ch. 8.6]{alma991020577069705251} in our cutting-plane algorithm but
we choose to use a cut generation technique that is specialized for our problem.
While we do not claim to generate the strongest cuts, our cuts have nice
properties: they are generated efficiently and do not change the number of
variables and number of constraints in the IP\@. Moreover, we will show in the
next section that FP-KERN is also a cutting-plane algorithm with the same cut
generation technique but a different relaxation.

The cutting-plane algorithm that we have developed in this section is called
CP-KERN and is listed as Algorithm~\ref{alg:cp}. For any $i \in [n]$, the
minimum (resp., maximum) value of $\u{x}_i$ is $\ceil{(a + \alpha_i) / T_i}$
(resp., $\ceil{(b + \alpha_i) / T_i}$); thus, $\u{x}_i$ can assume
$\bigO((b-a)/T_i)$ values. Since at least one element in $\u{x}$ is incremented
in each iteration, the maximum number of iterations is roughly equal to
\[
  \sum_{j \in [n]} \frac{b-a}{T_i} = \frac{(b-a)n}{\Thm}
\]
In each iteration, the linear program can be solved in time polynomial in the
size of the representation of the
program~\citep{khachiyanPolynomialAlgorithmsLinear1980}. The size of the
representation of the program is linearly bounded by the size of the
representation of the original instance of the kernel, denoted $\lvert I
\rvert$. Thus, the algorithm has $\bigO((b-a)\poly(\lvert I \rvert) / \Thm)$
running time.

\begin{algorithm}[t]%
  \caption{CP-KERN, a cutting-plane algorithm for the kernel.}%
  \label{alg:cp}%
  \begin{algorithmic}[1]%
    \If{$a > b$}%
      \State\Return\textrm{``infeasible''}%
    \EndIf%
    \If{$n = 0$}%
      \State\Return$a$%
    \EndIf%
    \parState{Initialize $\u{x}_i = \ceil{(a + \alpha_i) / T_i}$ for
      all $i \in [n]$.}%
    \Repeat%
      \parState{Solve LP~\eqref{lp:relax}. If infeasible, return
        \textrm{``infeasible''}, else let $(t^*,x^*)$ be the optimal solution.}%
      \If{$t^* \le a$}%
        \State{}\Return$a$%
      \EndIf%
      \If{$t^* > b$}%
        \State{}\Return\textrm{``infeasible''}%
      \EndIf%
      \parState{Update $\u{x}_i$ to $\ceil{x^*_i}$ for all $i \in
        [n]$.}%
    \Until$\u{x}$ stabilizes.%
    \State\Return$C\cdot \u{x} + \beta$%
  \end{algorithmic}
\end{algorithm}

\begin{thm}%
  \label{thm:cp}%
  CP-KERN (Algorithm~\ref{alg:cp}) is a cutting-plane algorithm for the kernel
  with
  \[
    \bigO\left(\frac{(b-a)\poly(\lvert I \rvert)}{\Thm}\right)
  \]
  running time.
\end{thm}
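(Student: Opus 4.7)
The proof will essentially formalize the discussion that immediately precedes the theorem statement, with two parts: correctness (CP-KERN is a valid cutting-plane algorithm for IP-KERN, and hence for the kernel) and the running-time bound. The plan is as follows.

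First, I would verify correctness by case analysis on the optimal value $t^*$ of the relaxation~\eqref{lp:relax} solved in each iteration. If the LP is infeasible, then so is IP-KERN, since \eqref{lp:relax} is obtained from IP-KERN~\eqref{ip:kern} by dropping the $t \ge a$, $t \le b$, and integrality constraints. If $t^* > b$, the linear relaxation of IP-KERN itself is infeasible, so the algorithm correctly returns ``infeasible''. If $t^* \le a$, the chain of inequalities derived in Section~\ref{sec:cp_kern} shows that $t = a$, $x_j = \ceil{(a+\alpha_j)/T_j}$ is feasible (hence optimal by minimality of $a$), and Theorem~\ref{thm:ip_kern} yields that $a$ is a solution of the kernel. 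In the remaining case $a < t^* \le b$, the discussion of the separation problem shows that any integer-feasible $(t,x)$ with $t \ge t^*$ must satisfy $x_j \ge \lceil x^*_j\rceil$ for each fractional $x^*_j$, so the update $\u{x}_j \gets \lceil x^*_j\rceil$ is a valid cut. Once $\u{x}$ stabilizes, $x^*$ must be integral, at which point $(t^*,x^*)$ is feasible and optimal for IP-KERN; by Theorem~\ref{thm:ip_kern}, the returned value $C\cdot \u{x} + \beta = t^*$ is a solution of the kernel.

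Next, I would bound the number of iterations. The key monotonicity observation is that $\u{x}$ is nondecreasing and integral throughout, and that every iteration that does not terminate strictly increases at least one coordinate $\u{x}_j$. Since the fourth IP constraint combined with $t \le b$ gives $x_i \le \lceil (b+\alpha_i)/T_i\rceil$, each $\u{x}_i$ can only take values in the range $[\lceil (a+\alpha_i)/T_i\rceil,\,\lceil (b+\alpha_i)/T_i\rceil]$, a set of size $\bigO((b-a)/T_i)$. Summing over $i \in [n]$ and using the definition of the harmonic mean, the total number of increments, and hence the total number of iterations, is
\[
\bigO\!\left(\sum_{i \in [n]} \frac{b-a}{T_i}\right) = \bigO\!\left(\frac{(b-a)n}{\Thm}\right).
\]

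For the per-iteration cost I would invoke a general polynomial-time LP algorithm (e.g.\ the ellipsoid method~\citep{khachiyanPolynomialAlgorithmsLinear1980} or an interior point method~\citep{karmarkarNewPolynomialtimeAlgorithm1984}); since \eqref{lp:relax} has $n+1$ variables, $2n+1$ constraints, and coefficients drawn from the kernel data together with the current integer vector $\u{x}$, its bit-size stays polynomial in $|I|$ throughout (because $\u{x}$ is bounded as above), so each LP solve takes $\poly(|I|)$ arithmetic operations. The cut-generation step is only $\bigO(n)$ work. Multiplying the iteration bound by the per-iteration cost gives the claimed $\bigO((b-a)\poly(|I|)/\Thm)$ bound.

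I do not foresee a major obstacle: correctness reduces to already-established facts (Theorem~\ref{thm:ip_kern} and the separation argument in Section~\ref{sec:cp_kern}), and the running time is an immediate consequence of the monotone update rule on $\u{x}$. The most delicate point to get right is simply the bookkeeping that shows every nonterminating iteration produces a strictly larger $\u{x}$ and that the coordinate bounds on $\u{x}$ are indeed $\bigO((b-a)/T_i)$; once these are in place, the harmonic-mean identity finishes the argument.
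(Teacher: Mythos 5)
Your proposal is correct and follows essentially the same route as the paper: the paper's justification for this theorem is the discussion immediately preceding it, which bounds the number of iterations by noting that each coordinate $\u{x}_i$ takes at most $\bigO((b-a)/T_i)$ distinct values and summing over $i$, then charges $\poly(|I|)$ per iteration for the LP solve. Your write-up adds a cleaner explicit treatment of the correctness/termination case analysis (infeasible, $t^* > b$, $t^* \le a$, and stabilization), but the running-time argument is identical to the paper's.
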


\addtocounter{exmp}{-1}%
\begin{exmp}[contd.]
  We will solve the schedulability problem for the lowest priority task in
  the FP task system in Table~\ref{tab:ex1} by reducing it to the kernel,
  formulating the kernel as IP-KERN, and solving IP-KERN by using CP-KERN\@. The
  kernel is given by
  \[
    \min \left\{t \in \interval{1}{D_3} \cap \Z \mathrel{\Bigg\vert} \sum_{j \in
        [3]} \ceil[\Bigg]{\frac{t}{T_i}}C_j \le t\right\}.
  \]
  IP-KERN is given by
  \[
    \begin{array}{rrclcl}
      \min & \multicolumn{3}{l}{t} \\
      \textrm{s.t.}&t &\ge &1\\
           &t &\le &150\\
           &t - 20x_1 - 10x_2 - 33x_3&\ge &0\\
           &40x_1 - t&\ge &0\\
           &50x_2 - t&\ge &0\\
           &150x_3 - t&\ge &0\\
           &x_i&\ge &1, &&i \in [3]\\
           &t, x_1, x_2, x_3 &\in &\Z
    \end{array}
  \]
  We describe the execution of CP-KERN for the above IP\@. Initially we have
  $\u{x}_i = 1$ for all $i \in [3]$. The relaxation is given by
  \[
    \begin{array}{rrclcl}
      \min & \multicolumn{3}{l}{t} \\
      \textrm{s.t.}&t - 20x_1 - 10x_2 - 33x_3&\ge &0\\
           &40x_1 - t&\ge &0\\
           &50x_2 - t&\ge &0\\
           &150x_3 - t&\ge &0\\
           &x_i&\ge &\u{x}_i, &&i \in [3]\\
           &t, x_1, x_2, x_3 &\in &\R
    \end{array}
  \]
  We solve the relaxation and get the optimal solution $(t^*,x^*) = (110,
  {2.75,2.2,1})$. We update $\u{x}$ to $(3,3,1)$. We solve the new relaxation
  and get $(t^*,x^*) = (126, {3.15,3,1})$. We update $\u{x}$ to $(4,3,1)$. We
  solve the new relaxation and get $(t^*,x^*) = (143, {4,3,1})$. Since $\u{x}$
  has stabilized, we terminate the algorithm. Thus, Algorithm~\ref{alg:cp}
  solves the problem in $3$ iterations; recall from Figure~\ref{fig:rta} that
  RTA solved the problem in $5$ iterations.
\end{exmp}

\subsection{Comparing FP-KERN and CP-KERN}%
\label{sec:comp}%

Let $\mathcal{F}$ denote the family of cutting-plane algorithms for IP KERN
where each algorithm satisfies the following properties:
\begin{enumerate}
\item In each iteration, the algorithm solves a relaxation of IP-KERN where all
  integral variables are made continuous and some constraints are dropped
  optionally.
\item Cuts are generated by rounding up the values of the optimal $x$ values
  for the relaxation.
\end{enumerate}
Clearly, CP-KERN belongs to $\mathcal{F}$. Although the relaxation used by
CP-KERN drops the upper and lower bounds for $t$, these changes are addressed
immediately after the relaxation is solved (see discussion near the start of
Section~\ref{sec:cp_kern}). Thus, CP-KERN effectively solves the proper linear
relaxation of IP-KERN, and has the optimal convergence rate in $\mathcal{F}$. We
will show that FP-KERN also belongs to $\mathcal{F}$, but it has a suboptimal
convergence rate.

Algorithm~\ref{alg:fp} and Algorithm~\ref{alg:fp_2} are equivalent ways to
express FP-KERN\@. While Algorithm~\ref{alg:fp} uses two variables ($\u{t},v$),
Algorithm~\ref{alg:fp_2} uses $n+1$ variables ($\u{x}_1,\dotsc,\u{x}_n,v$). $v$
is a temporary variable in both cases; $\u{t}$ and $\u{x}_i$ denote lower bounds
for $t$ and $\ceil{(t + \alpha_i)/T_i}$ for the kernel. It is easy to see that
Algorithm~\ref{alg:fp} and Algorithm~\ref{alg:fp_2} are essentially the same:
the computation of $\phi$ in Algorithm~\ref{alg:fp} is broken into two steps:
the ceiling expressions are evaluated at the end of the loop and the resulting
values are combined with $C$ and $\beta$ to get $\phi(v)$ at the start of the
next loop.

\begin{algorithm}[t]%
  \caption{An algorithm equivalent to FP-KERN.}%
  \label{alg:fp_2}%
  \begin{algorithmic}[1]%
    \If{$a > b$}%
      \State\Return\textrm{``infeasible''}%
    \EndIf%
    \If{$n = 0$}%
      \State\Return$a$%
    \EndIf%
    \parState{Initialize $\u{x}_i = \ceil{(a + \alpha_i) / T_i}$ for
      all $i \in [n]$.}%
    \Repeat%
      \State$v \gets C\cdot\u{x} + \beta$
      \If{$v \le a$}%
        \State{}\Return$a$%
      \EndIf%
      \If{$v > b$}%
        \State{}\Return\textrm{``infeasible''}%
      \EndIf%
      \parState{Update $\u{x}_i$ to $\ceil{(v + \alpha_i) / T_i}$ for all $i \in [n]$.}%
    \Until$\u{x}$ stabilizes.%
    \State\Return$C\cdot \u{x} + \beta$%
  \end{algorithmic}
\end{algorithm}

From the description of FP-KERN in Algorithm~\ref{alg:fp_2}, we can see that
FP-KERN is almost the same as CP-KERN except for the fact that the latter solves
a relaxation of IP-KERN at the beginning of the loop. It turns out that FP-KERN
also solves a relaxation of IP-KERN at the beginning of the loop, and this
relaxation has the optimal value $C\cdot\u{x} + \beta$. The relaxation used by
FP-KERN makes all variables continuous and drops the upper and lower bounds of
$t$ and the constraint $Tx_i - t \ge \alpha_i$ for all $i \in [n]$; thus, the
relaxation used by FP-KERN is given by
\[
  \begin{array}{rrclcl}
    \min & \multicolumn{3}{l}{t} \\
    \textrm{s.t.} &t  - \sum_{j \in [n]} C_jx_j&\ge &\beta\\
         &x_i &\ge &\u{x}_i,&& i \in [n]\\
         &t &\in &\R\\
         &x &\in &\R^n
  \end{array}
\]
The optimal value for the relaxation simply equals $C\cdot\u{x} + \beta$. Since
the above relaxation drops more inequalities than the relaxation used by
CP-KERN, it does not find the proper dual bounds in each iteration. The
suboptimality of FP-KERN is corroborated by the example at the end of
Section~\ref{sec:cp_kern} and the empirical evaluations in Section~\ref{sec:emp}.

\begin{thm}\label{thm:cp_opt}
  CP-KERN (resp., FP-KERN) has an optimal (resp., suboptimal) convergence rate
  in the family $\mathcal{F}$ of cutting-plane algorithms for solving IP-KERN\@.
\end{thm}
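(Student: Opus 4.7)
The plan is to prove, by induction on iteration count $k$, that for any algorithm $A \in \mathcal{F}$ the lower-bound vector $\u{x}^{(k)}_A$ produced by $A$ is dominated componentwise by CP-KERN's corresponding vector $\u{x}^{(k)}_{\mathrm{CP}}$. Every algorithm in $\mathcal{F}$ must (i) initialize $\u{x}_i = \ceil{(a + \alpha_i)/T_i}$, which handles the base case, and (ii) update $\u{x}_i \gets \ceil{x^*_i}$ where $x^*$ is the optimum of its relaxation. So the induction reduces to a one-step comparison of the two LP optima.

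For the inductive step, fix an iteration $k$ at which $\u{x}^{(k)}_A \le \u{x}^{(k)}_{\mathrm{CP}}$. CP-KERN's relaxation is LP~\eqref{lp:relax}, which retains every constraint of IP-KERN except the bounds $a \le t \le b$ (handled on lines 7--10 of Algorithm~\ref{alg:cp}). Algorithm $A$'s relaxation drops these bounds too, and in addition drops some subset of the constraints $T_i x_i - t \ge \alpha_i$. Because $A$'s feasible region contains CP-KERN's (fewer constraints, plus weaker $\u{x}$ bounds by IH) and both minimize $t$, we get $t^{*,A} \le t^{*,\mathrm{CP}}$. At the LP optimum, for any $i$ whose constraint $A$ keeps, $x^{*,A}_i = \max\{(t^{*,A}+\alpha_i)/T_i,\, \u{x}^{(k)}_{A,i}\}$, and for any $i$ whose constraint $A$ drops, $x^{*,A}_i = \u{x}^{(k)}_{A,i}$; in both cases the IH together with $t^{*,A} \le t^{*,\mathrm{CP}}$ yield $x^{*,A}_i \le x^{*,\mathrm{CP}}_i$. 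Taking ceilings preserves the inequality, so $\u{x}^{(k+1)}_A \le \u{x}^{(k+1)}_{\mathrm{CP}}$.

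Termination then gives optimality. If IP-KERN is feasible with optimum $x^*$, the standard cutting-plane invariant (valid cuts do not exclude $x^*$) gives $\u{x}^{(k)} \le x^*$ throughout either run; if $A$ terminates at iteration $K_A$ with $\u{x}^{(K_A)}_A = x^*$, the domination forces $\u{x}^{(K_A)}_{\mathrm{CP}} = x^*$ as well, giving $K_{\mathrm{CP}} \le K_A$. The infeasible case is symmetric: the event $t^{*,\mathrm{CP}} > b$ cannot occur later than $t^{*,A} > b$, because $t^{*,A} \le t^{*,\mathrm{CP}}$ under the iteration-wise domination. For the suboptimality of FP-KERN, I would simply invoke the worked example at the end of Section~\ref{sec:cp_kern}: CP-KERN finishes in $3$ iterations on the system of Table~\ref{tab:ex1}, while FP-KERN, which coincides with RTA on that instance, needs $5$ (Figure~\ref{fig:rta}).

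The main obstacle is nailing down the LP-optimum characterization $x^{*}_i = \max\{(t^{*}+\alpha_i)/T_i,\, \u{x}_i\}$ used above. This needs a short but careful argument: any $x_i$ strictly above both of its lower bounds can be decreased, slackening $t \ge \sum_j C_j x_j + \beta$ and strictly reducing the objective, contradicting optimality; so at least one of the two bounds is tight, which is exactly the $\max$ expression. Once this structural fact is established, the monotonicity chain driving the induction is routine, and the rest of the argument is bookkeeping.
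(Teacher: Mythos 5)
Your approach---an iteration-wise induction showing that CP-KERN's lower-bound vector $\u{x}$ componentwise dominates that of any other algorithm $A\in\mathcal{F}$---is a genuine formalization of what the paper leaves as a brief prose justification. The paper asserts that CP-KERN solves the proper linear relaxation and hence ``has the optimal convergence rate,'' and that FP-KERN's weaker relaxation ``does not find the proper dual bounds,'' with the suboptimality only corroborated by the worked example and empirics; it never spells out the monotone-domination invariant you introduce. That invariant is the right way to make the claim precise, and your comparison of the feasible regions to get $t^{*,A}\le t^{*,\mathrm{CP}}$ is correct.

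There is, however, a genuine gap in your inductive step. You take property~(2) of $\mathcal{F}$ at face value and set the cut to $\u{x}_i^{(k+1)}\gets\ceil{x^{*}_i}$, where $x^{*}$ is the relaxation optimum. For any index $i$ whose constraint $T_ix_i-t\ge\alpha_i$ the algorithm $A$ has dropped, you correctly observe $x^{*,A}_i=\u{x}^{(k)}_{A,i}$; but $\u{x}^{(k)}_{A,i}$ is already an integer, so $\ceil{x^{*,A}_i}=\u{x}^{(k)}_{A,i}$ and the cut is a no-op. For FP-KERN, which drops \emph{all} of these constraints, this would mean $\u{x}$ never changes and FP-KERN never progresses---contradicting Algorithm~\ref{alg:fp_2}, where each iteration updates $\u{x}_i\gets\ceil{(v+\alpha_i)/T_i}$ with $v=C\cdot\u{x}+\beta$ the relaxation optimum. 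The right reading of the cut rule, and the one that makes FP-KERN a member of $\mathcal{F}$ at all, is that the cut is derived from the dual bound $t^{*}$ via the (possibly dropped) constraint: $\u{x}_j\gets\max\{\u{x}_j,\ceil{(t^{*}+\alpha_j)/T_j}\}$. Under that reading your induction actually closes \emph{more} cleanly---$t^{*,A}\le t^{*,\mathrm{CP}}$ together with the inductive hypothesis immediately gives $\ceil{(t^{*,A}+\alpha_j)/T_j}\le\ceil{(t^{*,\mathrm{CP}}+\alpha_j)/T_j}$ and hence $\u{x}^{(k+1)}_A\le\u{x}^{(k+1)}_{\mathrm{CP}}$---so the fix is small, but as written your case analysis does not describe what FP-KERN does and would ``prove'' that FP-KERN stalls. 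A secondary, minor point: your justification of the LP-optimum characterization $x^{*}_i=\max\{(t^{*}+\alpha_i)/T_i,\u{x}_i\}$ should first note that $t-\sum_jC_jx_j\ge\beta$ must be tight at any optimum (otherwise $t$ can be decreased directly); only then does decreasing an over-large $x_i$ create slack that lets $t$ drop. You flag this as needing care, but the missing first step is exactly the tightness of that constraint.
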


The relaxation used by FP-KERN can be solved in $\Theta(n)$ time since it
reduces to evaluating the expression $C\cdot\u{x} + \beta$. In contrast, the
relaxation used by CP-KERN can be solved in $\poly(\lvert I \rvert)$ time by
using an algorithm such as the ellipsoid algorithm for solving general LPs. In
the next section, we propose a more efficient method for solving the relaxation
used by CP-KERN\@.

\subsection{A specialized method for solving the relaxation}%
\label{sec:eff}%

In this section, we develop a specialized algorithm for solving the dual of
LP~\eqref{lp:relax}. Recall from Section~\ref{sec:duality} that solving the dual
is effectively the same as solving the relaxation itself, except that the
notions of unboundedness and infeasibility are inverted. The dual can be
constructed by following a dualization recipe that may be found in any textbook
on linear programming~\citep[][pg.\ 85]{matousekUnderstandingUsingLinear2007};
therefore, we skip the details of the construction. Let $v,w,z$ be the variables
in the dual LP such that they correspond, in order, to the three inequalities in
the primal LP, i.e., LP~\eqref{lp:relax}. Then, the dual is given by:
\begin{equation}%
  \label{lp:dual_orig}%
  \begin{array}{rrclcl}
    \max &\multicolumn{3}{l}{\beta{}v + \alpha\cdot w + \u{x}\cdot z}\\
    \textrm{s.t.} &v - \sum_{j \in [n]} w_j &= &1\\
         &-C_iv + T_iw_i + z_i &= &0, &&i \in [n]\\
         &v &\in &\R_{\ge 0}\\
         &w,z &\in &\R^n_{\ge 0}
  \end{array}
\end{equation}
By substituting $z_i / T_i$ for $z_i$ for all $i \in [n]$ and eliminating $w$,
we get
\begin{equation}%
  \label{lp:dual}%
  \begin{array}{rrclcl}
    \max &\multicolumn{3}{l}{(\beta + U\cdot{}\alpha{})v + \sum_{j \in [n]} (T_j\u{x}_j - \alpha_j)z_j}\\
    \textrm{s.t.} &(1 - \sum_{j \in [n]}U_j)v + \sum_{j \in [n]} z_j &= &1\\
         &U_iv - z_i &\ge &0, &&i \in [n]\\
         &v &\in &\R_{\ge 0}\\
         &z &\in &\R^n_{\ge 0}
  \end{array}
\end{equation}

For a fixed $v$, LP~\eqref{lp:dual} is identical to a fractional knapsack
problem with $n$ items where $z_i$ is the weight of item $i$ in the knapsack,
$U_iv$ is the total weight of item $i$ available to the thief, $T_i\u{x}_i -
\alpha_i$ is the value of the item per unit weight, and the knapsack weighs
exactly $1 - (1 - \sum_{j \in [n]}U_j)v$. The fractional knapsack problem admits
a greedy solution in which the thief fills the knapsack with items in
nonincreasing order of value until the knapsack is full. By assuming that the
vectors $\u{x}$, $U$, $T$, and $\alpha$ are sorted in nonincreasing order of
value, i.e., $T_i\u{x}_i - \alpha_i$, we can infer that an optimal solution
exists such that for some $k \in [n]$,
\begin{align*}
  z_j &= U_jv, &&j \in [k-1]\\
  z_k &\le U_kv\\
  z_j &= 0, &&j \in [n] \setminus [k]
\end{align*}
For each $k \in [n]$, we can create a more restricted version of linear
program~\eqref{lp:dual} by adding the above constraints:
\begin{equation}%
  \label{lp:sub}%
  \begin{array}{rrclcl}
    \max &\multicolumn{3}{l}{(\beta + \sum_{j \in [n] \setminus [k-1]} U_j\alpha_j + \sum_{j \in [k-1]}
           C_j\u{x}_j)v +}\\
         &\multicolumn{3}{l}{(T_k\u{x}_k - \alpha_k)z_k}\\
    \textrm{s.t.} &(1 - \sum_{j \in [n] \setminus [k-1]}U_j)v + z_k &= &1\\
         &U_kv - z_k &\ge &0\\
         &v,z_k &\in &\R_{\ge 0}
  \end{array}
\end{equation}

The relationship between LP~\eqref{lp:dual} and LP~\eqref{lp:sub} is summarized
in the next lemma.

\begin{lem}%
  \label{lem:sub}%
  The following statements are true:
  \begin{enumerate}
  \item LP~\eqref{lp:dual} has an optimal solution $(v,z)$ if and only if for
    some $k \in [n]$ LP~\eqref{lp:sub} has an optimal solution $(v,z_k)$.
  \item LP~\eqref{lp:dual} is infeasible if and only if LP~\eqref{lp:sub} is
    infeasible for all $k \in [n]$.
  \item LP~\eqref{lp:dual} is unbounded if and only if LP~\eqref{lp:sub} is
    unbounded for some $k \in [n]$.
  \end{enumerate}
\end{lem}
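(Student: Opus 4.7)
The plan is to exploit the fractional-knapsack reformulation sketched just before the lemma. For any fixed $v \ge 0$, the inner maximization over $z$ in LP~\eqref{lp:dual} is a fractional knapsack, and because the items are assumed sorted in nonincreasing order of value density $T_j\u{x}_j - \alpha_j$, a standard greedy algorithm produces an optimal $z$ of the structural form $z_j = U_jv$ for $j < k$, $z_k \in [0, U_kv]$, and $z_j = 0$ for $j > k$, for some $k \in [n]$. Substituting this structural form into LP~\eqref{lp:dual} yields LP~\eqref{lp:sub} exactly: the equality constraint collapses to $(1 - \sum_{j \ge k} U_j)v + z_k = 1$, the $j \ne k$ inequalities become automatic, and the objective simplifies (using $T_jU_j = C_j$) to the objective of LP~\eqref{lp:sub}. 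Conversely, any feasible $(v, z_k)$ of LP~\eqref{lp:sub} extends canonically to a feasible solution of LP~\eqref{lp:dual} of the same structural form with the same objective value.

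With this correspondence in hand, the three parts follow uniformly. For (1), an optimal $(v^*, z^*)$ of LP~\eqref{lp:dual} can be massaged by greedy exchange into a structural optimum for some $k$, whose reduction $(v^*, z_k^*)$ is feasible for LP~\eqref{lp:sub} at that $k$ and has the same objective; optimality in LP~\eqref{lp:sub} follows because any strictly better LP~\eqref{lp:sub} solution would extend to a strictly better LP~\eqref{lp:dual} solution. The reverse direction uses the extension map directly: the extension is feasible with matching objective, and any strictly better LP~\eqref{lp:dual} solution could be replaced by its structural form and reduced back to a strictly better LP~\eqref{lp:sub} solution for some $k'$, contradicting optimality (after coupling with part (3) to rule out unboundedness elsewhere). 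For (2), feasibility transfers both ways via the same structural replacement and extension. For (3), if LP~\eqref{lp:dual} is unbounded, the sequence of feasible solutions with objective tending to infinity can be replaced by a sequence of structural-form feasible solutions, which by pigeonhole on $k$ witnesses unboundedness in LP~\eqref{lp:sub} for some fixed $k$; conversely, an unbounded ray for some LP~\eqref{lp:sub} extends via the extension map to an unbounded ray for LP~\eqref{lp:dual}.

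The main obstacle is justifying the structural form rigorously, that is, showing any feasible $z$ admits, for the same $v$ and without loss of objective, a replacement $\tilde z$ of the prescribed form. This is the standard greedy exchange: if indices $i < j$ satisfy $z_i < U_iv$ while $z_j > 0$, transferring weight $\min\{U_iv - z_i,\ z_j\}$ from $z_j$ to $z_i$ preserves the equality constraint, respects all inequality constraints, and weakly increases the objective because $T_i\u{x}_i - \alpha_i \ge T_j\u{x}_j - \alpha_j$ by the sorting assumption; iterating this exchange finitely many times produces the claimed $\tilde z$. Once this structural reduction is in place, the rest of the lemma is bookkeeping over the LP trichotomy on both sides of the correspondence.
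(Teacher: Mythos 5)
Your overall strategy — the greedy exchange showing any feasible $(v,z)$ of LP~\eqref{lp:dual} can be replaced, for the same $v$, by a structural-form solution of no worse objective, followed by the objective- and feasibility-preserving correspondence with LP~\eqref{lp:sub} — is exactly the argument the paper gestures at in the paragraph before the lemma (the paper gives no formal proof of it). Your verification of the exchange step and of the algebraic reduction (using $T_jU_j = C_j$) is correct, and parts (2) and (3) go through.

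There is, however, a gap in the reverse direction of part (1) that your parenthetical "(after coupling with part (3) to rule out unboundedness elsewhere)" does not actually close. Suppose LP~\eqref{lp:sub} at some index $k$ has an optimal solution while LP~\eqref{lp:sub} at another index $k'$ is unbounded. This configuration occurs: take $\sum_j U_j = 1$ and $\beta + U\cdot\alpha > 0$; then for $k'=1$ the variable $v$ vanishes from the equality constraint of LP~\eqref{lp:sub} but carries a positive coefficient in the objective, so that LP is unbounded, while for every $k\ge 2$ the feasible region is nonempty and bounded, so an optimum exists. In this situation statement (3) says LP~\eqref{lp:dual} is unbounded, hence has no optimal solution, so the reverse implication of (1) is simply false; your contradiction step fails because the strictly better LP~\eqref{lp:dual} solution reduces to LP~\eqref{lp:sub} at $k'\neq k$, which does not contradict optimality at $k$. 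This imprecision is shared with the lemma's own phrasing — the paper quarantines the degenerate case out of band via Lemma~\ref{lem:lp_sub_sp1} and Theorem~\ref{thm:lp_relax_opt}. A clean repair is to strengthen the hypothesis of the reverse of (1) to "some LP~\eqref{lp:sub} has an optimal solution and no LP~\eqref{lp:sub} is unbounded"; statement (3) then shows LP~\eqref{lp:dual} is not unbounded, and since any feasible point of LP~\eqref{lp:dual} can be pushed to a structural-form point lying in one of the finitely many LP~\eqref{lp:sub} feasible regions, a maximum is attained and your extension/reduction argument completes the proof.
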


In the remainder of this section, we analyze LP~\eqref{lp:sub} under different
conditions to find an algorithm for solving LP~\eqref{lp:dual}.

\begin{lem}\label{lem:lp_sub_sp1}
  If $\sum_{j \in [n]} U_j = 1$ and $\beta + U\cdot\alpha > 0$, then IP-KERN is
  infeasible.
\end{lem}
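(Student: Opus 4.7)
The plan is to show that under the given hypotheses, the dual LP~\eqref{lp:dual} is unbounded, which by linear programming duality forces the primal LP~\eqref{lp:relax} to be infeasible; since LP~\eqref{lp:relax} is a relaxation of IP-KERN, this in turn forces IP-KERN to be infeasible.

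The key observation is that under the hypothesis $\sum_{j \in [n]} U_j = 1$, the first equality constraint of LP~\eqref{lp:dual} becomes $\sum_{j \in [n]} z_j = 1$, with the coefficient of $v$ vanishing. Thus $v$ can be made arbitrarily large without violating this constraint, and the only remaining restrictions on $v$ are the inequalities $U_i v \ge z_i$ and $v \ge 0$. First I would propose the candidate feasible family $z_j \coloneqq U_j$ for each $j \in [n]$, with $v$ free to range over $[1,\infty)$. Feasibility is immediate: $\sum_j z_j = \sum_j U_j = 1$; $U_i v - z_i = U_i(v-1) \ge 0$ because $v \ge 1$ and $U_i = C_i/T_i > 0$; and $v, z_j \ge 0$.

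Next I would evaluate the objective of LP~\eqref{lp:dual} on this family. Using $z_j = U_j$ together with $T_j U_j = C_j$, the objective becomes
\[
  (\beta + U\cdot\alpha)v + \sum_{j \in [n]} (T_j \u{x}_j - \alpha_j)U_j
  = (\beta + U\cdot\alpha)v + C\cdot\u{x} - U\cdot\alpha.
\]
Since $\beta + U\cdot\alpha > 0$ by hypothesis, letting $v \to \infty$ drives this quantity to $+\infty$, so LP~\eqref{lp:dual} is unbounded. By the LP duality trichotomy recalled in Section~\ref{sec:duality}, unboundedness of the dual implies infeasibility of the primal, so LP~\eqref{lp:relax} is infeasible. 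Finally, since LP~\eqref{lp:relax} arises from IP-KERN by relaxing integrality and dropping the bounds on $t$, every feasible point of IP-KERN is feasible for LP~\eqref{lp:relax}; infeasibility of the latter therefore forces IP-KERN to be infeasible, which is what we wanted.

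The substantive step is really just spotting the right dual certificate, namely the one-parameter ray $z = U$, $v \to \infty$; once that is in hand, the rest is a short calculation plus an invocation of LP duality. The only subtlety I would want to double-check is that LP~\eqref{lp:dual} is genuinely the dual of LP~\eqref{lp:relax} after the substitution and elimination carried out earlier, so that the unboundedness transfers back to infeasibility of LP~\eqref{lp:relax} (and not some other LP) — but this is immediate from the derivation preceding the lemma.
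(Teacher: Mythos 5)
Your argument is correct, and the overall strategy matches the paper's: show LP~\eqref{lp:dual} is unbounded, invoke LP duality to conclude LP~\eqref{lp:relax} is infeasible, and then conclude IP-KERN is infeasible since LP~\eqref{lp:relax} is a relaxation of it. The one place you diverge is in \emph{how} you certify unboundedness: you exhibit a ray directly in LP~\eqref{lp:dual}, namely $z_j = U_j$ for all $j$ and $v \to \infty$, which is feasible because $\sum_j U_j = 1$ makes the equality constraint read $\sum_j z_j = 1$ with $v$ unconstrained there, and $U_i v - z_i = U_i(v-1) \ge 0$ for $v \ge 1$. The paper instead specializes LP~\eqref{lp:sub} to $k = 1$: there the equality forces $z_1 = 1$ while $v$ has positive objective coefficient $\beta + U\cdot\alpha$ and can grow without bound, so LP~\eqref{lp:sub} is unbounded, and Lemma~\ref{lem:sub} then transfers that back to LP~\eqref{lp:dual}. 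Concretely, the paper's implicit ray in LP~\eqref{lp:dual} is $z_1 = 1$, $z_j = 0$ for $j > 1$, $v \to \infty$; yours is $z = U$, $v \to \infty$. Your version is more self-contained — it does not lean on Lemma~\ref{lem:sub} — while the paper's version reuses the sub-LP machinery it is already building for the subsequent lemmas, which keeps the exposition uniform. Both are correct; neither is materially simpler. Your caution about checking that LP~\eqref{lp:dual} really is equivalent (as regards feasibility/boundedness) to the dual LP~\eqref{lp:dual_orig} of LP~\eqref{lp:relax} is well-placed, and indeed the substitution $z_i \mapsto z_i/T_i$ and the elimination of $w$ preserve that status, so the duality conclusion applies.
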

\begin{proof}[Proof Sketch]
  Let us assume that $\sum_{j \in [n]} U_j = 1$ and $\beta + U\cdot\alpha$ is
  positive. For $k = 1$ in LP~\eqref{lp:sub}, $v$ disappears from the first
  constraint and its coefficient in the objective function is positive; thus,
  the LP is unbounded. Then, using Lemma~\ref{lem:sub}, LP~\eqref{lp:dual} is
  also unbounded. Since LP~\eqref{lp:relax} is a dual of LP~\eqref{lp:dual}, it
  must be infeasible (see Section~\eqref{sec:duality}). Since
  LP~\eqref{lp:relax} is a relaxation of IP-KERN, the infeasibility of the LP
  implies the infeasibility of IP-KERN\@.
\end{proof}

We introduce a map $f: \{0\} \cup [n] \to \Q$ given by\footnote{A function like
  $f$ is used by Lu et al.\ in a heuristic method to solve FP
  schedulability~\citep{luPreciseSchedulabilityTest2006}. However, their method,
  being heuristic, tries to guess a good $k \in [n]$ and tries $f(k)$ as the
  next dual bound. If the guess turns out to be lower than the current dual
  bound, they backtrack to using RTA\@. In contrast, we always find the best
  dual bound for the relaxation. Nguyen et al.\ also use a similar function in a
  linear-time algorithm for FP schedulability with harmonic
  periods~\citep{nguyenExactPolynomialTime2022}. In contrast, CP-KERN works for
  arbitrary periods and for both FP and EDF schedulability.}
\begin{equation}%
  \label{def:f}%
  k \mapsto \frac{\beta + \sum_{j \in [n] \setminus [k]} U_j\alpha_j + \sum_{j \in [k]}
    C_j\u{x}_j}{1 - \sum_{j \in [n] \setminus [k]}U_j}.
\end{equation}
If $k = 0$ and $\sum_{j \in [n]}U_j = 1$, then $f(k)$ is not well-defined; in
this case, we restrict the domain of $f$ to $[n]$. The next two lemmas show that
optimal values of LP~\eqref{lp:sub} always lie in the image of $f$.

\begin{lem}\label{lem:lp_sub_sp2}
  If $f(0)$ is not well-defined and $\beta + U\cdot\alpha \le 0$, then the
  optimal value for LP~\eqref{lp:sub} is $f(1)$.
\end{lem}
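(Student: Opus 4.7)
The plan is to specialize LP~\eqref{lp:sub} to $k=1$ and use the hypothesis $\sum_{j \in [n]} U_j = 1$ (which is the precise meaning of $f(0)$ being ill-defined, as noted just before Definition~\eqref{def:f}) to reduce the LP to a single-variable optimization whose optimum can be read off directly.

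First I would substitute $k=1$ into LP~\eqref{lp:sub}. The equality constraint becomes $\bigl(1 - \sum_{j \in [n]} U_j\bigr) v + z_1 = 1$, and under the hypothesis $\sum_j U_j = 1$ the coefficient of $v$ vanishes, forcing $z_1 = 1$. The inequality constraint $U_1 v - z_1 \ge 0$ then becomes $v \ge 1/U_1$, so $v$ ranges over $[1/U_1, \infty)$. The objective simplifies to $(\beta + U\cdot\alpha)\, v + (T_1 \u{x}_1 - \alpha_1)$, where the coefficient of $v$ is $\le 0$ by hypothesis, so the maximum is attained at the smallest feasible $v$, namely $v = 1/U_1$. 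Plugging in gives the optimum value
\[
\frac{\beta + U\cdot\alpha}{U_1} + T_1 \u{x}_1 - \alpha_1.
\]

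The last step is algebraic: verify that this expression equals $f(1)$. Using $\sum_j U_j = 1$ and $U_1 = C_1/T_1$, the denominator in Definition~\eqref{def:f} at $k = 1$ simplifies as $1 - \sum_{j \in [n]\setminus[1]} U_j = U_1$, and the numerator splits as $\beta + U\cdot\alpha - U_1\alpha_1 + C_1 \u{x}_1$; dividing by $U_1$ and using $C_1/U_1 = T_1$ yields exactly the expression above.

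I do not expect any serious obstacle here: both the LP reduction (equality constraint collapsing, sign of the $v$-coefficient fixing the optimizer) and the rewriting of $f(1)$ are routine once one writes out the two expressions side by side. The only thing to be careful about is confirming that $v = 1/U_1$ is in fact feasible and that the objective is bounded above (both follow from $\beta + U\cdot\alpha \le 0$ and $U_1 > 0$, which is guaranteed since $C_i, T_i \in \Z_{>0}$).
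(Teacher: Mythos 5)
Your proposal is correct and takes essentially the same approach as the paper: specialize LP~\eqref{lp:sub} to $k=1$, use $\sum_j U_j = 1$ to force $z_1 = 1$ and $v \ge 1/U_1$, observe that the sign hypothesis on $\beta + U\cdot\alpha$ places the maximizer at $v = 1/U_1$, and algebraically identify the resulting value with $f(1)$. The paper compresses the optimality of $(1/U_1, 1)$ into ``it may be verified,'' which you simply spell out.
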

\begin{proof}[Proof Sketch]
  If $k=1$ and $\sum_{j \in [n]}U_j = 1$, then it may be verified that the
  optimal solution is $(v,z_1) = (1/U_1,1)$ and the optimal value is
  \begin{align*}
    &(\beta + \sum_{j \in [n]} U_j\alpha_j) (1 / U_1) + (T_1\u{x}_1 - \alpha_1)\\
    = &\frac{\beta + \sum_{j \in [n] \setminus [1]} U_j\alpha_j +
        C_1\u{x}_1}{U_1}\\
    = &f(1)\tag{using Definition~\eqref{def:f}}
  \end{align*}
\end{proof}

\begin{lem}\label{lem:lp_sub_sp3}
  For any $k \in [n]$, if $f(k-1)$ is well-defined, then the optimal value for
  LP~\eqref{lp:sub} is $f(k)$ or $f(k-1)$.
\end{lem}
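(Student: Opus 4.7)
The plan is to reduce LP~\eqref{lp:sub} to a one-dimensional linear program in $v$ and show that its two endpoints correspond precisely to $f(k-1)$ and $f(k)$.

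First I would use the equality constraint $c_k v + z_k = 1$, where $c_k := 1 - \sum_{j \in [n] \setminus [k-1]} U_j$, to eliminate $z_k = 1 - c_k v$. Substituting into the objective rewrites it as $(A_k - B_k c_k)v + B_k$, where $A_k$ is the coefficient of $v$ and $B_k = T_k\u{x}_k - \alpha_k$; this is linear in $v$. Next, I would derive the feasible interval: the nonnegativity $z_k \ge 0$ combined with $c_k > 0$ gives $v \le 1/c_k$, while $U_k v - z_k \ge 0$ becomes $(U_k + c_k)v \ge 1$, i.e., $v \ge 1/(U_k + c_k)$. Crucially, the hypothesis that $f(k-1)$ is well-defined, together with the standing assumption $\sum_{j \in [n]} U_j \le 1$, forces $c_k > 0$ (the denominator of $f(k-1)$ is exactly $c_k$, and $c_k \ge 0$ by assumption), so the feasible set is a nonempty closed interval.

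Since a linear function on a bounded closed interval attains its optimum at an endpoint, the optimal value of LP~\eqref{lp:sub} equals the objective at $v = 1/c_k$ (with $z_k = 0$) or at $v = 1/(U_k + c_k)$ (with $z_k = U_k v$). The main obstacle, which is really just careful bookkeeping, is checking that these two endpoint values are $f(k-1)$ and $f(k)$ respectively. For the first, plugging $z_k = 0$ and $v = 1/c_k$ into the objective gives $A_k/c_k$, which matches Definition~\eqref{def:f} with index $k-1$. For the second, using $U_k = C_k/T_k$ one computes
\[
  A_k + B_k U_k = \beta + \sum_{j \in [n] \setminus [k]} U_j \alpha_j + \sum_{j \in [k]} C_j \u{x}_j,
\]
while $U_k + c_k = 1 - \sum_{j \in [n] \setminus [k]} U_j$, so the objective value $(A_k + B_k U_k)/(U_k + c_k)$ is exactly $f(k)$.

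Putting these pieces together yields that the optimum of LP~\eqref{lp:sub} is one of $\{f(k-1), f(k)\}$, proving the lemma. The only subtle point is ruling out $c_k = 0$, which I would handle as above by appealing directly to the well-definedness of $f(k-1)$; the rest is endpoint evaluation and algebraic simplification, so no deeper ideas are required.
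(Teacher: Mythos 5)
Your proposal is correct and follows essentially the same route as the paper: eliminate $z_k$ via the equality constraint, reduce LP~\eqref{lp:sub} to a one-variable linear program on the interval $\bigl[(U_k + c_k)^{-1},\, c_k^{-1}\bigr]$, observe nonemptiness of this interval from $c_k > 0$ (which in turn follows from $f(k-1)$ being well-defined together with $\sum_j U_j \le 1$), and evaluate the linear objective at the two endpoints. The only cosmetic difference is that the paper invokes Lemma~\ref{lem:f_id} to rewrite the objective in two pre-factored forms from which the endpoint values $f(k)$ and $f(k-1)$ drop out immediately, whereas you verify the two endpoint values directly by substitution and algebraic simplification (using $U_k = C_k/T_k$), which is self-contained but amounts to the same computation.
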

\begin{proof}[Proof Sketch]
  If $f(k-1)$ is well-defined, then we must have $k > 1$ or $\sum_{j \in [n]}U_j
  < 1$. Then, $z_k$ can be eliminated from LP~\eqref{lp:sub} to get
  \begin{equation*}
    \begin{array}{rrclcl}
      \max &\multicolumn{3}{l}{(1 - \sum_{j \in [n]
             \setminus [k]} U_j)(f(k) -
             (T_k\u{x}_k - \alpha_k))v +
             T_k\u{x}_k - \alpha_k}\\
      \textrm{s.t.} &v &\in &\interval{(1 - \sum_{j \in [n] \setminus [k]}U_j)^{-1}}{(1 - \sum_{j \in [n] \setminus [k-1]}U_j)^{-1}}
    \end{array}
  \end{equation*}
  From the definition of the kernel, we have $\sum_{j \in [n]} U_j \le 1$. Thus,
  the left end of the interval is positive, and hence the constraint $v \ge 0$,
  which was present in program~\eqref{lp:sub}, is redundant and is not included
  above. From the assumption that $k > 1$ or $\sum_{j \in [n]}U_j < 1$, it
  follows that the right end of the interval is well-defined and greater than
  the left end of the interval. Thus, program~\eqref{lp:sub} is feasible and
  bounded. The optimal objective value for the program must be achieved at one
  of the ends of the interval for $v$, the end being determined by the sign of
  the coefficient of $v$ in the objective. Using Lemma~\ref{lem:f_id}, the
  objective can also be written as
  \[
    (1 - \textstyle\sum_{j \in [n] \setminus [k-1]} U_j)(f(k-1) - (T_k\u{x}_k -
    \alpha_k))v + T_k\u{x}_k - \alpha_k.
  \]
  Substituting the two ends of the interval for $v$ in the two expressions for
  the objective, we get that the optimal value is $f(k)$ or $f(k-1)$.
\end{proof}

We accumulate Lemmas~\ref{lem:lp_sub_sp1},~\ref{lem:lp_sub_sp2},
and~\ref{lem:lp_sub_sp3} into one theorem:

\begin{thm}\label{thm:lp_relax_opt}%
  If $\sum_{j \in [n]} U_j = 1$ and $\beta + U\cdot\alpha > 0$, then IP-KERN is
  infeasible; otherwise, LP~\eqref{lp:relax} has optimal value $\max f$
  (assuming that all vectors are sorted in a nonincreasing order using the key
  $T_j\u{x}_j - \alpha_j$ for each $j \in [n]$).
\end{thm}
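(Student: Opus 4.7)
The plan is to dispatch the theorem as a case analysis assembled from the three preceding lemmas, with Lemma~\ref{lem:sub} serving as the bridge between LP~\eqref{lp:sub} and LP~\eqref{lp:dual} and LP duality serving as the bridge between LP~\eqref{lp:dual} and LP~\eqref{lp:relax}. The first case, $\sum_{j \in [n]} U_j = 1$ and $\beta + U \cdot \alpha > 0$, is already fully handled by Lemma~\ref{lem:lp_sub_sp1}, which yields infeasibility of IP-KERN directly. So the real work is in the complementary case.

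For the ``otherwise'' case I would argue that every LP~\eqref{lp:sub}, for $k \in [n]$, is feasible and bounded, and that its optimal value lies in the image of $f$. There are two subcases. If $\sum_{j \in [n]} U_j < 1$, then $f(0)$ is well-defined, and since each $U_j > 0$ strictly reduces the sum when removed, $f(k-1)$ is well-defined for every $k \in [n]$; Lemma~\ref{lem:lp_sub_sp3} then pins the optimal value of each LP~\eqref{lp:sub} to $\{f(k-1), f(k)\}$. If $\sum_{j \in [n]} U_j = 1$ (so necessarily $\beta + U \cdot \alpha \le 0$ by the complement of case A), then Lemma~\ref{lem:lp_sub_sp2} handles $k = 1$ with optimal value $f(1)$, while Lemma~\ref{lem:lp_sub_sp3} handles $k \ge 2$ (where $f(k-1)$ is well-defined since at least one $U_j$ is removed) and produces values in $\{f(k-1), f(k)\}$. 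In either subcase, the collection of optimal values, as $k$ ranges over $[n]$, is a subset of the image of $f$, so by Lemma~\ref{lem:sub} the optimal value of LP~\eqref{lp:dual} is at most $\max f$; the reverse inequality comes from noting that each $f(k)$ is achieved by some LP~\eqref{lp:sub} and therefore by the full dual.

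Finally, LP~\eqref{lp:dual} being feasible and bounded, linear programming duality gives that LP~\eqref{lp:relax} has the same finite optimal value $\max f$, completing the ``otherwise'' branch. The main obstacle I expect is the bookkeeping around the domain of $f$: verifying carefully that $f(k-1)$ is well-defined wherever Lemma~\ref{lem:lp_sub_sp3} is invoked, and that when $\sum_{j \in [n]} U_j = 1$ the value $f(0)$ is correctly excluded from the max. The key observation that makes this clean is that $C_j, T_j > 0$ forces $U_j > 0$, so removing any positive set of indices from $[n]$ strictly decreases $\sum U_j$ below $1$, ensuring denominators in~\eqref{def:f} never vanish for $k \ge 1$.
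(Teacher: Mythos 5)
Your proposal is correct and reconstructs exactly the argument the paper intends: the paper simply states that the theorem is the ``accumulation'' of Lemmas~\ref{lem:lp_sub_sp1}, \ref{lem:lp_sub_sp2}, and~\ref{lem:lp_sub_sp3}, with Lemma~\ref{lem:sub} linking LP~\eqref{lp:sub} to LP~\eqref{lp:dual} and LP duality linking LP~\eqref{lp:dual} to LP~\eqref{lp:relax}. Your careful case split on whether $\sum_j U_j < 1$ or $= 1$ (so $f(0)$ is or is not in the domain), and the observation that $C_j, T_j > 0$ forces $U_j > 0$ so that the denominator in Definition~\eqref{def:f} is strictly positive for all $k \ge 1$, are precisely the bookkeeping points the terse paper statement leaves implicit.
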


A detailed version of CP-KERN, with the algorithm for solving the relaxation by
computing $\max f$ embedded in it, is listed as Algorithm~\ref{alg:cp_full}.

\begin{algorithm}[!htbp]%
  \caption{CP-KERN, detailed}%
  \label{alg:cp_full}%
  \begin{algorithmic}[1]%
    \If{$a > b$}%
      \State\Return\textrm{``infeasible''}%
    \EndIf%
    \If{$n = 0$}%
      \State\Return$a$%
    \EndIf%
    \State{$(p, q, r) \gets (\beta, 1, \beta)$}%
    \ForAll{$i \in [n]$}%
      \State{$\u{x}_i \gets \ceil{(a + \alpha_i) / T_i}$}
      \State{$y_i \gets T_i\u{x}_i - \alpha_i$}%
      \State{$\pi_i \gets i$}%
      \State$(p, q, r) \gets (p + U_i\alpha_i, q - U_i, r + C_i\u{x}_i)$%
    \EndFor%
    \If{$p > 0$ and $q = 0$}%
      \State\Return\textrm{``infeasible''}%
    \EndIf%
    \State{$i_0 \gets 1$ \textbf{if} $q = 0$ \textbf{else} $i_0 \gets 0$}%
    \parState{Sort $\pi$ in a nonincreasing order using the key $i \mapsto
      y_i$.}%
    \While{$\top$}%
      \State{$(p, q, i) \gets (r, 1, n)$}%
      \While{$i > i_0$}
        \State{$k \gets \pi_i$}%
        \If{$p \le q * y_k$}%
          \State{$t^* \gets p / q$}%
          \State\Break%
        \EndIf%
        \State{$(p, q, i) \gets (p - C_k\u{x}_k + U_k\alpha_k, q - U_k, i -
          1)$}%
      \EndWhile%
      \If{$i = i_0$}%
        \State{$t^* \gets p / q$}%
      \EndIf%
      \State{\textbf{if} $t^* \le a$ \textbf{then} \Return$a$; \textbf{if} $t^*
        > b$ \textbf{then} \Return{} ``infeasible''.}
      \If{$i = n$}%
        \State{\Return{$t^*$}}%
      \EndIf%
      \ForAll{$j \in [n] \setminus [i]$}%
        \State{$k \gets \pi_j$}%
        \State{$d \gets \ceil{(t^* + \alpha_k) / T_k} - \u{x}_k$}%
        \State{$(\u{x}_k, y_k, r) \gets (\u{x}_k + d, y_k + T_kd, r + C_kd)$}%
      \EndFor%
      \parState{Sort $\pi$ in the range $i+1$ to $n$ in a nonincreasing order
        using the key $j \mapsto y_j$.}%
      \parState{Merge the two sorted parts of $\pi$.}%
    \EndWhile%
  \end{algorithmic}
\end{algorithm}

\begin{thm}%
  \label{thm:cp_comb}%
  CP-KERN (Algorithm~\ref{alg:cp_full}) solves the kernel in
  \[
    \bigO\left(\frac{(b-a)n^2}{\Thm}\right)
  \]
  running time.
\end{thm}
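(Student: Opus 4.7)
My plan is to multiply an upper bound on the number of outer-loop iterations by an upper bound on per-iteration arithmetic work. Because Theorem~\ref{thm:lp_relax_opt} certifies that lines 18--26 correctly compute the optimum of LP~\eqref{lp:relax} (or detect infeasibility), Algorithm~\ref{alg:cp_full} implements exactly the scheme of CP-KERN (Algorithm~\ref{alg:cp}), and the correctness and convergence claims already established for CP-KERN transfer wholesale.

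For the iteration count, I would reuse the counting argument from the proof of Theorem~\ref{thm:cp}. Every outer iteration that does not terminate must strictly increase at least one coordinate of $\u{x}$, and each $\u{x}_i$ can take values only in the integer interval $\left[\lceil(a+\alpha_i)/T_i\rceil,\;\lceil(b+\alpha_i)/T_i\rceil\right]$, i.e., at most $O((b-a)/T_i)$ distinct values. Summing over $i \in [n]$ bounds the number of outer-loop iterations by $O\!\left(\sum_{i \in [n]} (b-a)/T_i\right) = O((b-a)n/\Thm)$, matching the bound already used for Algorithm~\ref{alg:cp}.

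For the per-iteration work I would account for each inner block separately. The inner while loop (lines 18--25) walks $\pi$ from index $n$ downward doing $O(1)$ arithmetic per step, for $O(n)$; the forall update loop (lines 31--36) touches at most $n - i$ coordinates, again $O(n)$; the merge on line 38 is $O(n)$. Multiplying an $O(n)$ per-iteration bound by the iteration count yields the claimed $O((b-a)n^2/\Thm)$.

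The main obstacle is the in-iteration sort on line 37 over the $n - i$ positions of $\pi$ whose keys $y_{\pi_j}$ were just updated: a generic comparison sort costs $O((n-i)\log(n-i))$ and would inflate the bound by a spurious $\log n$ factor. I would close this gap by a global amortization rather than a per-iteration bound: the total number of $\u{x}$-increments across the entire execution is at most $O((b-a)n/\Thm)$, and sorting can be implemented incrementally by reinserting only the updated positions into the already-sorted prefix (e.g., via the merge on line 38 together with an insertion-style pass exploiting that the new keys $y_{\pi_j}$ all lie in $[t^*, t^* + T_{\pi_j})$). Under such an implementation the total sort cost is dominated by the total number of increments times $O(n)$ for the merges, which stays within the stated budget and matches the per-iteration bound of Theorem~\ref{thm:fp-kern} exactly.
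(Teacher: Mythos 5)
Your high-level decomposition matches the paper's proof: (i) Theorem~\ref{thm:lp_relax_opt} certifies that the inner block computes the optimum of LP~\eqref{lp:relax}, so Algorithm~\ref{alg:cp_full} realizes the CP-KERN scheme; (ii) the outer-loop count is amortized against the monotone increments of $\u{x}$, giving $\bigO((b-a)n/\Thm)$ iterations; (iii) the merge is $\bigO(n)$ per iteration and hence $\bigO((b-a)n^2/\Thm)$ overall. Where you diverge from the paper is in handling the sort. The paper does \emph{not} modify the algorithm: it lets $m_i = n - i$ be the number of coordinates updated in iteration $i$, shows (via the complementary-slackness structure, Equations~\eqref{eq:x_opt_a}--\eqref{eq:x_opt_b}) that each of those $m_i$ coordinates is a \emph{strict} increment, and therefore $\sum_i m_i = \bigO((b-a)n/\Thm)$; the sort in iteration $i$ costs $\bigO(m_i\log m_i)$, so the total sort cost is $\bigO(\log n \sum_i m_i) = \bigO((b-a)n\log n/\Thm)$, which is strictly dominated by the $\bigO((b-a)n^2/\Thm)$ merge cost. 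No modified incremental sort is needed.

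Your proposal instead argues that only ``at least one coordinate'' strictly increases per iteration, bounds the iteration count that way, and then, realizing the naive per-iteration sort bound $\bigO(n\log n)$ would leak a $\log n$ factor, sketches a replacement incremental-insertion sort. That replacement is the weak point. The hint that the new keys lie in $[t^*, t^* + T_{\pi_j})$ does not give a usable structural restriction, since the interval endpoint $t^* + T_{\pi_j}$ varies with $j$; the different $T_{\pi_j}$ mean the new keys are not confined to a common window and an insertion-style pass is still a comparison sort in the worst case. Your final sentence also conflates the sort cost and the merge cost: the clean statement is that the merge (not the sort) is the dominant term, and that statement already follows for the algorithm as written once you notice that every one of the $m_i$ touched coordinates strictly increases (which is exactly the observation your ``at least one'' claim falls short of). If you tighten that observation to ``all touched coordinates strictly increase''---which follows from the fact that for $j > i$ we must have $y_{\pi_j} < t^*$, else $f$ would exceed its global maximum at some index to the right of $i$---then the whole argument closes without touching the algorithm, and matches the paper's proof.
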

\begin{proof}[Proof sketch]
  Algorithm~\ref{alg:cp_full} is essentially the same as Algorithm~\ref{alg:cp}:
  in each iteration, we solve LP~\eqref{lp:relax} and then update $\u{x}_i$ to
  $\ceil{x^*_i}$ for all $i \in [n]$. However, there are some differences
  between the two algorithms:
  \begin{itemize}
  \item We check a new infeasibility condition on line 14. Since $p = \beta +
    U\cdot\alpha$ and $q = 1 - \sum_{j \in [n]} U_j$ on this line, the condition
    is equivalent to the infeasibility condition in
    Theorem~\ref{thm:lp_relax_opt}.
  \item We maintain two vector variables $y$ and $\pi$, where $y_i = T_i\u{x}_i
    - \alpha_i$ for all $i \in [n]$, and $\pi$ stores the indices of $y$ so that
    for any $i \le j$ we have
    \[
      y_{\pi_i} \ge y_{\pi_j}.
    \]
    We also maintain a scalar variable $r = \beta + C\cdot\u{x} = f(n)$.
  \item We compute an optimal value $t^*$ for LP~\eqref{lp:relax} on lines
    20--31. On line 17, $i_0$ was set to the smallest point of the domain of
    $f$. We examine the values in the domain of $f$ excluding the smallest point
    $i_0$ in the inner while loop. On line 23, we have
    \[
      p/q = f(i)
    \]
    Thus, in the while loop, we search for an $i$ such that
    \[
      p / q \le y_k \iff f(i) \le T_{\pi_i}\u{x}_{\pi_i} - \alpha_{\pi_i} \iff
      f(i) \ge f(i-1).
    \]
    The second equivalence follows from Lemma~\ref{lem:f_id}. Thus, we are
    looking for a local maximum point of $f$ in its domain (excluding $i_0$).
    Using Corollary~\ref{cor:f_local_global}, any local maximum point is also a
    global maximum point. If we do not find any such point, then $i_0$ must be a
    global maximum point of $f$. We store the maximum value of $f$ in the
    variable $t^*$ on lines 24 and 30. If we have reached line 31, then $i$ is a
    global maximum point of $f$, and it corresponds to a solution for the
    LP~\eqref{lp:dual_orig} in which
    \begin{align*}
      &z_j > 0 \land w_j = 0, &&j \in \{\pi_1,\dotsc,\pi_{i}\}\\
      &z_j = 0 \land w_j > 0, &&j \in \{\pi_{i+1},\dotsc,\pi_{n}\}
    \end{align*}
    Using complementary slackness conditions (see Section~\ref{sec:duality}), we
    can infer that in the corresponding solution in LP~\eqref{lp:relax} we must
    have
    \begin{align}
      &x_j^* = \u{x}_j, &&j \in \{\pi_1,\dotsc,\pi_{i}\}\label{eq:x_opt_a}\\
      &x_j^* = (t^* + \alpha_j) / T_j, &&j \in \{\pi_{i+1},\dotsc,\pi_{n}\}\label{eq:x_opt_b}
    \end{align}
    We do not explicitly compute $x^*$ in our algorithm, but we use the above
    equations towards the end of the outer while loop when we update $\u{x}$.
  \item Instead of exiting the outer while loop when $\u{x}$ stabilizes, we exit
    the loop on line 34 if the condition on line 33 is satisfied. Recall that
    $i$ is a global maximum point of $f$ after line 31. Thus, the condition on
    line 33 is equivalent to $n$ being a global maximum point of $f$. From
    Equation~\eqref{eq:x_opt_a}, we must have $x_j^* = \u{x}_j$ for all $j \in
    [n]$. Furthermore, since the algorithm did not exit on 32, we have
    \[
      f(n) = t^* \in \interval{a}{b}.
    \]
    Thus, $(t^*,x^*) = f(n)$ is an optimal solution for IP-KERN\@.
  \item Instead of updating the full vector $\u{x}$, we do not update the
    entries corresponding to indices $\{\pi_1,\dotsc,\pi_{i}\}$ because they are
    already integral (using Equation~\eqref{eq:x_opt_a}). For any other index
    $j$, $\u{x}_j$ is updated to $\ceil{x^*_j} = \ceil{(t^* + \alpha_j) / T_j}$
    (using Equation~\eqref{eq:x_opt_b}). $r$ and $y$ are also modified to
    reflect the change in $\u{x}$. Since $\u{x}_j$ is unchanged for all $j \in
    \{\pi_1,\dotsc,\pi_{i}\}$, $\pi$ is already sorted in this range. Therefore,
    we sort $\pi_{i+1}, \dotsc, \pi_{n}$, and then we merge the two sorted parts
    of $\pi$ into one.
  \end{itemize}

  Let $n - m_i$ be the global maximum point of $f$ in the $i$-th iteration.
  Then, lines 20--40 run in $\bigO(m_i)$ time, and sorting on line 41 takes
  $\bigO(m_i\log(m_i))$ time. Since $m_i$ new values of $\u{x}$ are computed in
  the loop on line 36 and since each $\u{x}_j$ can assume $\bigO((b-a)/T_j)$
  values, over all iterations we must have
  \[
    \sum m_i = \bigO\left(\frac{\sum_{j \in [n]}(b-a)}{T_j}\right) =
    \bigO\left(\frac{(b-a)n}{\Thm}\right)
  \]
  Thus, the time taken by lines 20--41 over all iterations equals
  \begin{align*}
    &\bigO(\textstyle\sum m_i \log(m_i))\\
    = &\bigO(\textstyle\sum m_i \log(n)\tag{since $m_i \le n$})\\
    = &\bigO(\log(n)\textstyle\sum m_i)\\
    = &\bigO\left(\frac{(b-a)n\log(n)}{\Thm}\right)
  \end{align*}
  In one iteration, line 42 takes $\bigO(n)$ time. Since our analysis of the
  number of iterations in Algorithm~\ref{alg:cp} also applies to the current
  algorithm, over all iterations line 42 takes
  \[
    \bigO\left(\frac{(b-a)n^2}{\Thm}\right)
  \]
  time. Thus, in our analysis the work done by line 42 dominates the work done
  by the other lines. We think that the analysis of the number of iterations may
  be a little pessimistic and deserves further investigation.
\end{proof}

\subsection{Implications for FP and EDF schedulability}%
\label{sec:impl}%

Since Problem~\eqref{opt:rta} can be solved by reducing it to the kernel by
setting $(n,\alpha,\beta,a,b)$ to $(n, J, 0, 1, D_n-J_n)$, the following theorem
follows from Theorem~\ref{thm:cp_comb}:

\begin{thm}%
  \label{thm:fp}%
  Problem~\eqref{opt:rta} can be solved by CP-KERN in
  \[
    \bigO\left(\frac{(D_n-J_n)n^2}{\Thm}\right)
  \]
  running time.
\end{thm}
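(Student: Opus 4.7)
The plan is to invoke Theorem~\ref{thm:cp_comb} via the reduction from Problem~\eqref{opt:rta} to the kernel that was spelled out at the start of Section~\ref{sec:kern}. Recall that the reduction is the identification $(n,\alpha,\beta,a,b) = (n, J, 0, 1, D_n - J_n)$: under this substitution, the constraint $\sum_{j\in[n]}\ceil{(t+\alpha_j)/T_j}C_j + \beta \le t$ in the kernel becomes exactly $\rbf(t) \le t$, and the bounding interval $\interval{a}{b}$ becomes $\linterval{1}{D_n - J_n}$ (equivalent to $\linterval{0}{D_n-J_n}$ as an integer interval since the $t=0$ case, if relevant, is handled trivially). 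Thus any algorithm that solves the kernel also solves Problem~\eqref{opt:rta}, and in particular CP-KERN does.

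The running time claim then follows by direct substitution. Theorem~\ref{thm:cp_comb} gives a running time of $\bigO((b-a)n^2 / \Thm)$ for CP-KERN on the kernel. Substituting $b - a = (D_n - J_n) - 1 = \bigO(D_n - J_n)$ yields the stated bound $\bigO((D_n - J_n)n^2 / \Thm)$. Note that the reduction itself only requires reading off the task parameters and is trivially $\bigO(n)$, so it is absorbed into the dominant term.

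There is no genuine obstacle here, since the theorem is essentially a corollary of Theorem~\ref{thm:cp_comb}. The only thing worth double-checking is that the utilization hypothesis $\sum_{j \in [n]} U_j \le 1$ required by the definition of the kernel in Section~\ref{sec:kern} is satisfied by the FP instance; this holds because otherwise the FP system is trivially unschedulable (the request bound function already exceeds $t$ for large $t$) and the schedulability test can reject it a priori. Once this sanity check is made, the proof reduces to quoting Theorem~\ref{thm:cp_comb} with the appropriate parameter values.
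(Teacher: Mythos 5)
Your proposal is correct and takes essentially the same route as the paper: Section~\ref{sec:impl} proves Theorem~\ref{thm:fp} exactly by invoking the reduction $(n,\alpha,\beta,a,b) = (n,J,0,1,D_n-J_n)$ from Problem~\eqref{opt:rta} to the kernel and then applying Theorem~\ref{thm:cp_comb}. Your additional check that $\sum_{j\in[n]} U_j \le 1$ (required by the kernel's definition) holds for any non-trivially-unschedulable FP instance is a sensible detail the paper leaves implicit.
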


Since Problem~\eqref{opt:qpa_2} can be solved by reducing it to the kernel
by setting $(n,\alpha,\beta,a,b)$ to $(k,D-T-J, 1, -b_k+1, -a_k)$, the following
theorem follows from Theorem~\ref{thm:cp_comb}:

\begin{thm}%
  \label{thm:edf}%
  Problem~\eqref{opt:qpa_2} can be solved by CP-KERN in
  \[
    \bigO\left(\frac{(b_k-a_k)k^2}{\Thm^k}\right)
  \]
  running time, where $\Thm^k$ is the harmonic mean of the periods $T_1, \dotsc,
  T_k$.
\end{thm}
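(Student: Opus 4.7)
The plan is to derive Theorem~\ref{thm:edf} as a direct corollary of Theorem~\ref{thm:cp_comb} via the reduction to the kernel that is outlined at the start of Section~\ref{sec:kern}. Concretely, setting $(n,\alpha,\beta,a,b)$ in the kernel to $(k, \hat{D}-T, 1, -b_k+1, -a_k)$ produces an instance whose optimization problem is literally Problem~\eqref{opt:qpa_2}; since $\hat{D}_i = D_i - J_i$, this coincides with the substitution $\alpha = D - T - J$ used in the theorem statement.

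First, I would verify that this substitution yields a legitimate instance of the kernel, i.e., that it satisfies the hypotheses stated just below Problem~\eqref{opt:abs}. The number of tasks is $k \in \N$; each $C_i, T_i$ with $i \in [k]$ is a positive integer by the original task-system assumption; and the utilization bound $\sum_{j \in [k]} U_j \le \sum_{j \in [n]} U_j \le 1$ holds because Algorithm~\ref{alg:branch} only invokes IP-EDF after the total-utilization test has passed. Finally, $\alpha = \hat{D}-T \in \Z^k$, $\beta = 1 \in \Z$, and $a,b \in \Z$ hold since all task parameters and the $a_k, b_k$ endpoints are integral.

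Next, I would apply Theorem~\ref{thm:cp_comb} directly to this substituted instance. Its conclusion gives a running time of $\bigO((b-a)n^2/\Thm)$ for the kernel instance, which under the substitution becomes $\bigO((b_k - a_k)k^2 / \Thm^k)$: $n$ becomes $k$; $b - a = -a_k - (-b_k + 1) = b_k - a_k - 1 = \bigO(b_k - a_k)$; and, most importantly, the harmonic mean appearing in Theorem~\ref{thm:cp_comb} is taken over the periods actually present in the kernel instance, which here are precisely $T_1,\dotsc,T_k$, i.e., $\Thm^k$.

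The only subtle point, and the one I would be most careful about, is the bookkeeping for the harmonic mean. It is tempting to write $\Thm$ (the harmonic mean over all $n$ periods of the original EDF system), but the reduction only exposes the first $k$ tasks to the kernel, so the relaxation LP~\eqref{lp:relax} and the iteration-count analysis in Theorem~\ref{thm:cp_comb} involve only $T_1,\dotsc,T_k$. Once this is correctly tracked, the theorem follows immediately with no further work.
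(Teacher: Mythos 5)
Your proposal matches the paper's reasoning exactly: Theorem~\ref{thm:edf} is stated as an immediate corollary of Theorem~\ref{thm:cp_comb} via the substitution $(n,\alpha,\beta,a,b)\mapsto(k,\hat{D}-T,1,-b_k+1,-a_k)$, and your careful tracking that the harmonic mean is over $T_1,\dotsc,T_k$ only, giving $\Thm^k$, is precisely the point the notation in the theorem is meant to flag. The extra verification that the substituted instance satisfies the kernel's hypotheses (integrality, positivity of $C_i,T_i$, and $\sum_{j\in[k]}U_j\le 1$ via the preceding total-utilization test) is detail the paper leaves implicit but is correct and harmless.
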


Since the problem of computing $L_a$ (see Definition~\ref{def:La}) can be solved
by reducing it to the kernel by setting $(n,\alpha,\beta,a,b)$ to
$(n,J,0,1,\hyp)$, the following theorem follows from Theorem~\ref{thm:cp_comb}:

\begin{thm}%
  \label{thm:La}%
  $L_a$ can be computed by CP-KERN in
  \[
    \bigO\left(\frac{\hyp{}n^2}{\Thm}\right)
  \]
  running time.
\end{thm}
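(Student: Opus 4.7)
The plan is to prove Theorem~\ref{thm:La} as an immediate corollary of Theorem~\ref{thm:cp_comb} by verifying the reduction from computing $L_a$ to the kernel stated just before the theorem. The bulk of the work is bookkeeping: checking that the kernel instance obtained by instantiating $(n,\alpha,\beta,a,b)$ to $(n, J, 0, 1, \hyp)$ is literally the optimization problem defining $L_a$, after which the running-time bound falls out mechanically from substituting $b - a = \hyp - 1 = \bigO(\hyp)$ into the bound of Theorem~\ref{thm:cp_comb}.

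The first step is to write out the kernel with the stated parameters. Plugging $\alpha = J$, $\beta = 0$, $a = 1$, $b = \hyp$ into Problem~\eqref{opt:abs} yields
\[
\min \left\{t \in \interval{1}{\hyp} \cap \Z \mathrel{\Bigg\vert}
\sum_{j \in [n]} \ceil[\Bigg]{\frac{t + J_j}{T_j}}C_j \le t\right\}.
\]
By Definition~\eqref{def:rbf}, the sum on the left is exactly $\rbf(t)$, so this problem becomes $\min\{t \in \{1,2,\dotsc,\hyp\} \mid \rbf(t) \le t\}$. The only discrepancy with Definition~\eqref{def:La} of $L_a$ is the lower endpoint of the feasible set: $\linterval{0}{\hyp} \cap \Z = \{1,2,\dotsc,\hyp\}$ under the left-open convention used throughout the paper (see Problem~\eqref{opt:rta}), and hence the two minima coincide. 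Thus any algorithm that solves the kernel instance outputs $L_a$.

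The second step is to invoke Theorem~\ref{thm:cp_comb}, which asserts that CP-KERN solves an instance of the kernel in
\[
\bigO\!\left(\frac{(b-a)n^2}{\Thm}\right)
\]
time. Substituting $b - a = \hyp - 1$ and absorbing the constant into the $\bigO$-notation gives the claimed bound $\bigO(\hyp\,n^2 / \Thm)$, completing the proof.

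There is essentially no obstacle here, since all of the structural work has already been carried out in establishing Theorem~\ref{thm:cp_comb}; the only minor subtlety is the endpoint convention in the interval $\linterval{0}{\hyp}$, which is resolved by comparing with how the identical reduction is used to recover Problem~\eqref{opt:rta} in Theorem~\ref{thm:fp}. Hence the proof can be given in a couple of sentences, citing the reduction and then Theorem~\ref{thm:cp_comb}.
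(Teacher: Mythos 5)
Your proof is correct and follows the paper's own route: the paper states the theorem immediately after observing that computing $L_a$ reduces to the kernel with parameters $(n,\alpha,\beta,a,b) = (n,J,0,1,\hyp)$ and that the bound then follows from Theorem~\ref{thm:cp_comb} upon substituting $b-a = \hyp - 1$. You have merely spelled out the bookkeeping (identifying the ceiling sum as $\rbf(t)$ and reconciling $\linterval{0}{\hyp}$ with $\interval{1}{\hyp}$ over the integers) that the paper leaves implicit, and this is exactly the intended argument.
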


In all cases, the worst-case running times of CP-KERN are the same as the
worst-case running times for the corresponding fixed-point iteration algorithms.
Since CP-KERN, unlike RTA and QPA, has an optimal convergence rate in the family
$\mathcal{F}$ (see Theorem~\ref{thm:cp_opt}), CP-KERN is a better algorithm than
RTA and QPA in theory.

\section{Empirical evaluation}%
\label{sec:emp}%

Our goal in this work is \emph{not} to determine the fastest schedulability test
amongst all known tests. As mentioned in the introduction, the schedulability
tests are, in general, incomparable and their running times depend on the sizes
of parameters like number of periods and $D_{\max} / T_{\min}$.\footnote{We
  often get contradictory evidence about the fastest schedulability tests from
  different groups of researchers when they measure the speed on random systems
  generated according to different
  criteria~\citep{biniSchedulabilityAnalysisPeriodic2004,
    davisEfficientExactSchedulability2008}.} However, the algorithms in the
family of cutting-plane algorithms considered in this work are comparable, and
we try to quantify the improvement in convergence rate and running time achieved
by our optimal cutting-plane algorithm CP-KERN over RTA and QPA for large
collections of synthetic systems. In keeping with the aims of the research, we
do not include all schedulability tests that exist in literature in our
empirical evaluations; instead, we restrict our attention to the family of
cutting-plane algorithms studied in this work.

\subsection{Environment}\label{sec:test_env}

We conducted the experiments on a MacBook Pro 2019 with a 2.6 GHz 6-Core Intel
Core i7 processor and 16 GB 2400 MHz DDR4 memory.

Our code for carrying out the above experiments is publicly
available~\citep{abhishek_singh_2023_7938966}. The code is written in mostly
written in Python and it contains
\begin{itemize}
\item methods for generating random systems of hard real-time tasks;
\item instrumented Python implementations of FP-KERN and CP-KERN that count the
  number of iterations used by the algorithms;
\item instrumented C++ implementations of FP-KERN and CP-KERN that measure the
  CPU time used by the algorithms;
\item reductions from FP and EDF schedulability to the kernel, including the
  branching approach for EDF described in Algorithm~\ref{alg:branch} and the
  optimizations discussed in Appendix~\ref{sec:A1};
\item tests for finding inconsistencies between CP-KERN and traditional
  schedulability tests (RTA and QPA); and
\item a script for running the experiments described above and for generating
  the associated data and images.
\end{itemize}

We depend on several Python packages including drs, numpy, matplotlib, pytest,
and scipy. Our C++ implementation of CP-KERN can produce incorrect answers due
to numerical issues stemming from the use of floating-point arithmetic and
integer overflow errors.\footnote{Using a commercial solver like CPLEX to solve
  IP-KERN can also produce incorrect answers due to numerical issues. Using a
  solver for exact IPs is recommended~\citep[see, for instance,
  ][]{cookHybridBranchandboundApproach2013, eiflerComputationalStatusUpdate2023}.}
In contrast, our Python implementation of CP-KERN uses rational arithmetic using
the fractions module and unlimited-precision integers. Thus, we think that the
Python implementation is more trustworthy. In all experiments, we check that the
results produced by FP-KERN and CP-KERN are consistent; thus, we have confidence
in the validity of the results derived from the C++ implementations.

\subsection{Data}\label{sec:test_data}

In each experiment, we randomly generate $10000$ synthetic task systems. Each
system contains $n$ tasks, where $n \in \{25, 50, 75\}$. We generate
utilizations so that their approximate sum is one of the values in $\{0.7, 0.8,
0.9\}$ by using the Dirichlet-Rescale (DRS)
algorithm~\citep{griffinGeneratingUtilizationVectors2020}. We sample wcets from
a loguniform distribution in $\interval{1}{1000}$, and then round each wcet up
to the nearest integer.\footnote{The fixed interval for the wcets reflects the
  idea that we do not expect a single reaction to an event to be too long in
  real systems. This idea is present in the \emph{synchrony hypothesis}, used by
  reactive languages like Esterel, which assumes that reactions are small enough
  to fit within a tick so that they appear to be instantaneous to the
  programmer~\cite{boussinotESTERELLanguage1991}.}. We compute period $T_i$ as
follows:
\[
  T_i \coloneq \ceil[\Bigg]{\frac{C_i}{U_i}}
\]
Since the periods are rounded up, the utilizations are rounded down; thus, the
utilization of the system is only approximately equal to our initially chosen
value. We generate FP systems and EDF systems differently.

\subsubsection{FP systems}

We generate FP systems with implicit deadlines and zero release jitter. Thus, we
set $D_i \coloneq T_i$ and $J_i \coloneq 0$ for all $i$.

In our current description, the parameter $D_n = T_n$ is generated just like the
other periods $T_1, \dotsc, T_{n-1}$. However, $D_n$ has a disproportionate
effect on the running times of algorithms for solving Problem~\eqref{opt:rta}.
In the current setting, Problem~\eqref{opt:rta} can be written as follows:
\[
  \min \left\{t \in \linterval{0}{D_n} \cap \Z \mathrel{\Bigg\vert} C_n +
    \sum_{j \in [n-1], T_j < D_n} \left\lceil\frac{t}{T_j}\right\rceil{}C_j \le
    t \right\}
\]
Thus, all the periods that are smaller than a randomly generated $D_n$ do not
contribute to the problem instance. Moreover, since $D_n$ is in the numerator in
the asymptotic characterization of worst-case running times in
Theorems~\ref{thm:rta} and~\ref{thm:fp}, a random instance will get solved
quickly if $D_n$ is small. Therefore, to reduce noise in the evaluation, we set
the parameters of task $n$ as follows:
\[
  D_n = T_n \coloneq 10^8, C_n \coloneq 100.
\]
This ensures that both RTA and CP-KERN are evaluated against challenging
instances of Problem~\eqref{opt:rta}. The high-priority subsystem $[n-1]$ is
generated randomly, as described above.

\subsubsection{EDF systems}

Recall from Section~\ref{sec:ip_edf} that we use a divide-and-conquer approach
to create $n$ subproblems where the $k$-th subproblem is the same as the
original problem except that $t$ is restricted to $\interval{a_k}{b_k} \subseteq
\rinterval{\hat{D}_{\min}}{L}$ (see Problem~\eqref{opt:qpa_2}); each subproblem
is then reduced to the kernel which is then solved by CP-KERN\@. The overall
algorithm has a branch-and-cut structure where the division into subproblems
corresponds to branching on the variable $t$ and using the cutting-plane
algorithm CP-KERN corresponds to the finding cuts on the nodes generated from
the branching.\footnote{Branch-and-cut is currently one of the most successful
  approaches for solving IPs; for more details see~\citep[][Sec.
  9.6]{alma991020577069705251}. A variant of QPA with some branching heuristics
  has also been proposed by \citet{zhangImprovementQuickProcessorDemand2009}.}
For constrained-deadline systems, only the branch
$\rinterval{\hat{D}_{\min}}{L}$ has feasible solutions. We restrict our
attention to constrained-deadline systems in the evaluation so that we do not
need to worry about the effect of branching strategies on the results.

We call the ratio $\delta_i = C_i / D_i$ the density of task $i$. We use the DRS
algorithm to generate the densities of the tasks so that their approximate sum
is one of the values in $\{1.25, 1.50, 1.75\}$. Since we are generating a
constrained-deadline system, we must ensure that the density of any task is at
least its utilization; so, we pass the utilization vector to the DRS algorithm
as a lower bound for the density vector. We compute the deadline $D_i$ as
follows:
\[
  D_i \coloneq \floor[\Bigg]{\frac{C_i}{\delta_i}}.
\]
Since deadlines are rounded down, the total density of the system is only
approximately equal to the initially chosen value.

\subsection{Experiment I}\label{sec:exp_1}

In this experiment, we compare the number of iterations used by RTA to the
number of iterations used by CP-KERN for synthetic instances of
Problem~\eqref{opt:rta}. For both algorithms, we use the initial value
\[
  \frac{C_n}{1 - \sum_{j \in [n-1]} U_j}.
\]
Initial values such as the one mentioned above for RTA have been discussed
extensively in the literature~\citep{sjodinImprovedResponsetimeAnalysis1998,
  brilInitialValuesOnline2003, davisEfficientExactSchedulability2008}; in
Appendix~\ref{sec:A1}, we show that the bound is implicitly used by CP-KERN and
can also be explicitly passed to CP-KERN by using an alternative reduction from
Problem~\eqref{opt:rta} to the kernel.

For $n = 25$ and $\sum_{j \in [n-1]} U_j \approx 0.9$, we show histograms of the
number of iterations of CP-KERN and RTA in Figure~\ref{fig:rta_1a}. The
histogram for CP-KERN is to the left of the histogram for RTA\@; thus, CP-KERN
has a better convergence rate than RTA\@. The histogram for CP-KERN is thinner
and taller than the histogram for RTA\@. Therefore, the convergence rate of
CP-KERN is more predictable than the convergence rate of RTA\@. We also show a
histogram of the ratio of the number of iterations of RTA to the number of
iterations of CP-KERN in Figure~\ref{fig:rta_1b}. The ratio is never below one,
which is in agreement with our theoretical result that CP-KERN is optimal with
respect to convergence rate, and the average ratio is about $2.6$.

\begin{figure}
  \centering%
  \begin{subfigure}[b]{0.75\linewidth}
    \centering%
    \includegraphics[width=\textwidth]{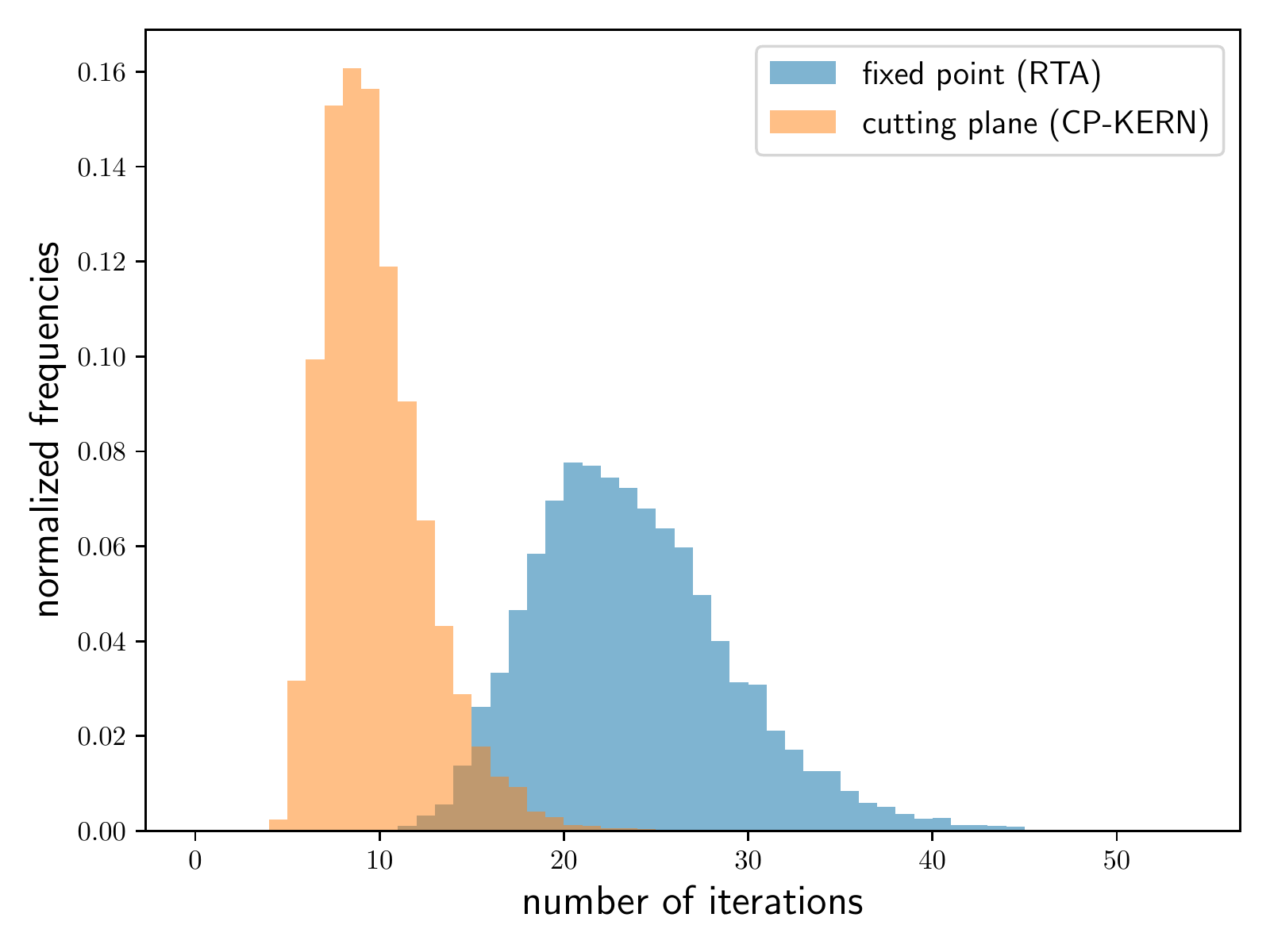}%
    \caption{}%
    \label{fig:rta_1a}%
  \end{subfigure}
  \begin{subfigure}[b]{0.75\linewidth}
    \centering%
    \includegraphics[width=\textwidth]{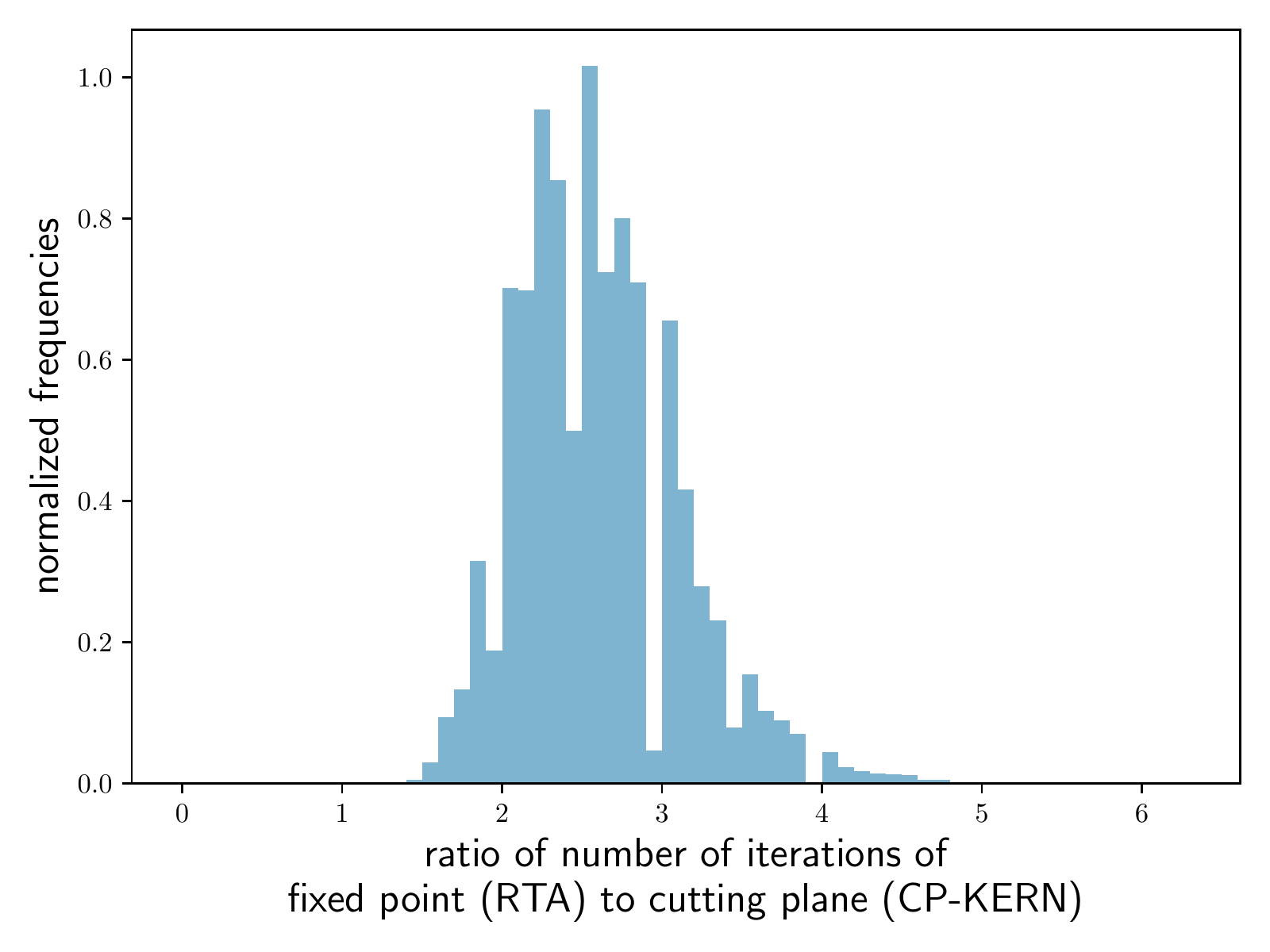}%
    \caption{}%
    \label{fig:rta_1b}%
  \end{subfigure}
  \caption{$n = 25$, $\sum_{j \in [n-1]} U_j \approx 0.9$}%
\end{figure}

We show various statistics for the number of iterations for $n=25$ and variable
utilization in Table~\ref{tab:rta_1a}. We do the same for variable $n$ and
$\sum_{j \in [n-1]} U_j \approx 0.8$ in Table~\ref{tab:rta_1b}. In all cases,
CP-KERN has better statistics than RTA\@, e.g., smaller means and variances.

\begin{table}[t]
  \centering%
  \begin{tabular}{c c c c c c c}\toprule%
    &\multicolumn{2}{c}{(Min, Max)} &\multicolumn{2}{c}{Mean} &\multicolumn{2}{c}{Variance}\\
    \cmidrule(lr){2-3}%
    \cmidrule(lr){4-5}%
    \cmidrule(lr){6-7}%
    $\sum_{j \in [n-1]} U_j$ &RTA &CP-KERN &RTA &CP-KERN &RTA &CP-KERN\\\midrule
    $0.9$ &$(10, 54)$ &$(4, 27)$ &$23.29$ &$9.29$ &$29.49$ &$7.71$\\
    $0.8$ &$(7, 40)$ &$(3, 23)$ &$14.93$ &$6.91$ &$10.20$ &$3.60$\\
    $0.7$ &$(6, 22)$ &$(3, 15)$ &$11.28$ &$5.68$ &$4.84$ &$2.14$\\\bottomrule
  \end{tabular}
  \caption{Number of iterations for $n=25$ and variable utilization.}%
  \label{tab:rta_1a}%
\end{table}

\begin{table}[t]
  \centering%
  \begin{tabular}{c c c c c c c}\toprule%
    &\multicolumn{2}{c}{(Min, Max)} &\multicolumn{2}{c}{Mean} &\multicolumn{2}{c}{Variance}\\
    \cmidrule(lr){2-3}%
    \cmidrule(lr){4-5}%
    \cmidrule(lr){6-7}%
    $n$ &RTA &CP-KERN &RTA &CP-KERN &RTA &CP-KERN\\\midrule
    $25$ &$(7, 40)$ &$(3, 23)$ &$14.93$ &$6.91$ &$10.20$ &$3.60$\\
    $50$ &$(9, 37)$ &$(4, 23)$ &$17.21$ &$8.82$ &$10.72$ &$5.30$\\
    $75$ &$(11, 42)$ &$(5, 29)$ &$18.60$ &$10.02$ &$11.86$ &$6.85$\\\bottomrule
  \end{tabular}
  \caption{Number of iterations for variable $n$ and $\sum_{j \in [n-1]} U_j
    \approx 0.8$.}%
  \label{tab:rta_1b}%
\end{table}

\subsection{Experiment II}\label{sec:exp_2}

In this experiment, we compare the running time of RTA to the running time of
CP-KERN for synthetic instances of Problem~\eqref{opt:rta}. The initial value is
the same as the one used in the previous experiment.

For $n = 25$ and $\sum_{j \in [n-1]} U_j \approx 0.9$, we show histograms of the
running times of CP-KERN and RTA in Figure~\ref{fig:rta_2a}. The histogram for
CP-KERN is to the left of the histogram for RTA\@; thus, CP-KERN is faster than
RTA\@. Both histograms have long tails, and most instances of the problem are
solved within $20 \mu s$ by both algorithms. We also show a histogram of the
ratio of the CPU time of RTA to the CPU time of CP-KERN in
Figure~\ref{fig:rta_2b}. Unlike, the previous experiment, the ratio is
occasionally below one; thus, RTA is faster than CP-KERN for a small fraction of
the generated instances. This is not an anomaly. In Section~\ref{sec:kern}, we
showed that CP-KERN and RTA have the same asymptotic worst-case running time,
and CP-KERN, unlike RTA, has an optimal convergence rate; thus, theoretically,
CP-KERN is superior to RTA\@. However, the theoretical analysis of the running
time focuses on asymptotic worst-case running times and ignores
implementation-dependent constants; thus, if the convergence rate of the two
algorithms is similar on an instance, there RTA can be faster than CP-KERN\@. On
average, CP-KERN is about $1.4$ times faster than RTA, and the maximum ratio is
about $7$.

\begin{figure}
  \centering%
  \begin{subfigure}[b]{0.75\linewidth}
    \centering%
    \includegraphics[width=\textwidth]{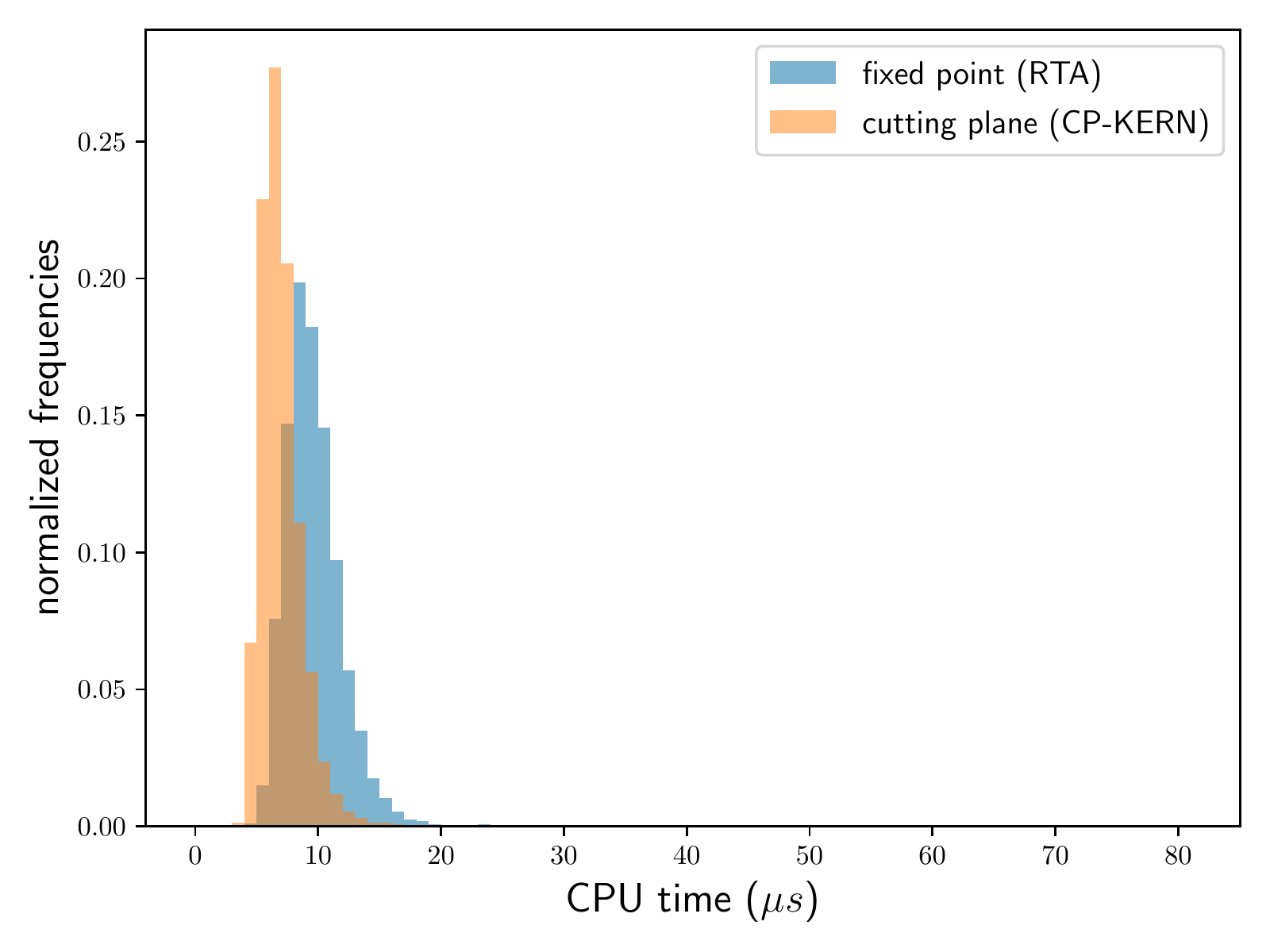}%
    \caption{}%
    \label{fig:rta_2a}%
  \end{subfigure}
  \begin{subfigure}[b]{0.75\linewidth}
    \centering%
    \includegraphics[width=\textwidth]{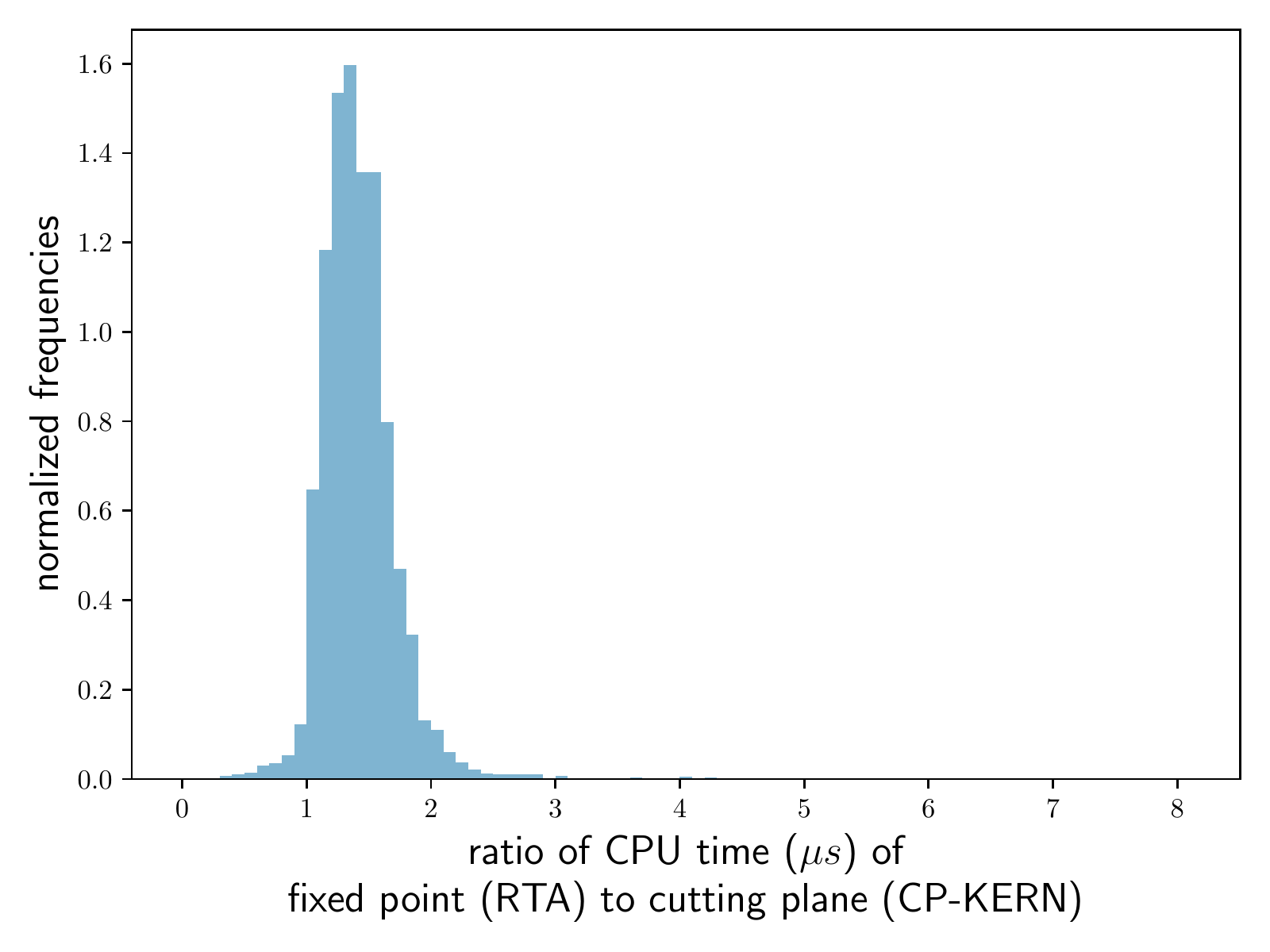}%
    \caption{}%
    \label{fig:rta_2b}%
  \end{subfigure}
  \caption{$n = 25$, $\sum_{j \in [n-1]} U_j \approx 0.9$}%
\end{figure}

We show various statistics for the running times of RTA and CP-KERN for $n=25$
and variable utilization in Table~\ref{tab:rta_2a}. We do the same for variable
$n$ and $\sum_{j \in [n-1]} U_j \approx 0.8$ in Table~\ref{tab:rta_2b}. In all
cases, CP-KERN has better statistics than RTA, e.g, smaller means and variances.

\begin{table}[t]
  \centering%
  \begin{tabular}{c c c c c c c}\toprule%
    &\multicolumn{2}{c}{(Min, Max)} &\multicolumn{2}{c}{Mean} &\multicolumn{2}{c}{Variance}\\
    \cmidrule(lr){2-3}%
    \cmidrule(lr){4-5}%
    \cmidrule(lr){6-7}%
    $\sum_{j \in [n-1]} U_j$ &RTA &CP-KERN &RTA &CP-KERN &RTA &CP-KERN\\\midrule
    $0.9$ &$(4.33, 81.0)$ &$(3.33, 68.0)$ &$9.62$ &$6.94$ &$11.65$ &$6.74$\\
    $0.8$ &$(3.0, 52.67)$ &$(2.67, 51.67)$ &$6.26$ &$5.19$ &$4.84$ &$3.44$\\
    $0.7$ &$(2.67, 41.67)$ &$(2.33, 41.33)$ &$4.85$ &$4.38$ &$2.86$ &$2.13$\\\bottomrule
  \end{tabular}
  \caption{CPU time ($\mu s$) for $n=25$ and variable utilization.}%
  \label{tab:rta_2a}%
\end{table}

\begin{table}[t]
  \centering%
  \begin{tabular}{c c c c c c c}\toprule%
    &\multicolumn{2}{c}{(Min, Max)} &\multicolumn{2}{c}{Mean} &\multicolumn{2}{c}{Variance}\\
    \cmidrule(lr){2-3}%
    \cmidrule(lr){4-5}%
    \cmidrule(lr){6-7}%
    $n$ &RTA &CP-KERN &RTA &CP-KERN &RTA &CP-KERN\\\midrule
    $25$ &$(3.0, 52.67)$ &$(2.67, 51.67)$ &$6.26$ &$5.19$ &$4.84$ &$3.44$\\
    $50$ &$(7.67, 103.67)$ &$(6.67, 109.67)$ &$13.75$ &$12.04$ &$16.29$ &$11.58$\\
    $75$ &$(12.33, 204.33)$ &$(11.67, 163.0)$ &$21.83$ &$20.31$ &$32.93$ &$27.55$\\\bottomrule
  \end{tabular}
  \caption{CPU time ($\mu s$) for variable $n$ and $\sum_{j \in [n-1]} U_j
    \approx 0.8$.}%
  \label{tab:rta_2b}%
\end{table}

\subsection{Experiment III}\label{sec:exp_3}

In this experiment, we compare the number of iterations used by QPA to the
number of iterations used by CP-KERN for synthetic instances of
Problem~\eqref{opt:qpa}. Since the total utilization is less than one in all our
configurations, we use the initial value $L = L_b$ for both algorithms. In
Appendix~\ref{sec:A1}, we show that the bound $L_b$ is implicitly used by
CP-KERN and can also be explicitly passed to CP-KERN by using an alternative
reduction from Problem~\eqref{opt:qpa_2} to the kernel.

For $n = 25$, $\sum_{j \in [n]} U_j \approx 0.9$ and $\sum_{j \in [n]} \delta_j
\approx 1.5$, we show histograms of the number of iterations of CP-KERN and QPA
in Figure~\ref{fig:qpa_1a}. The histogram for CP-KERN is to the left of the
histogram for QPA\@; thus, CP-KERN has a better convergence rate than QPA\@. The
histogram for CP-KERN is thinner and taller than the histogram for QPA\@.
Therefore, the convergence rate of CP-KERN is more predictable than the
convergence rate of QPA\@. We also show a histogram of the ratio of the number
of iterations of QPA to the number of iterations of CP-KERN in
Figure~\ref{fig:qpa_1b}. As expected, the ratio is never below one. The average
ratio is about $2.9$.

\begin{figure}
  \centering%
  \begin{subfigure}[b]{0.75\linewidth}
    \centering%
    \includegraphics[width=\textwidth]{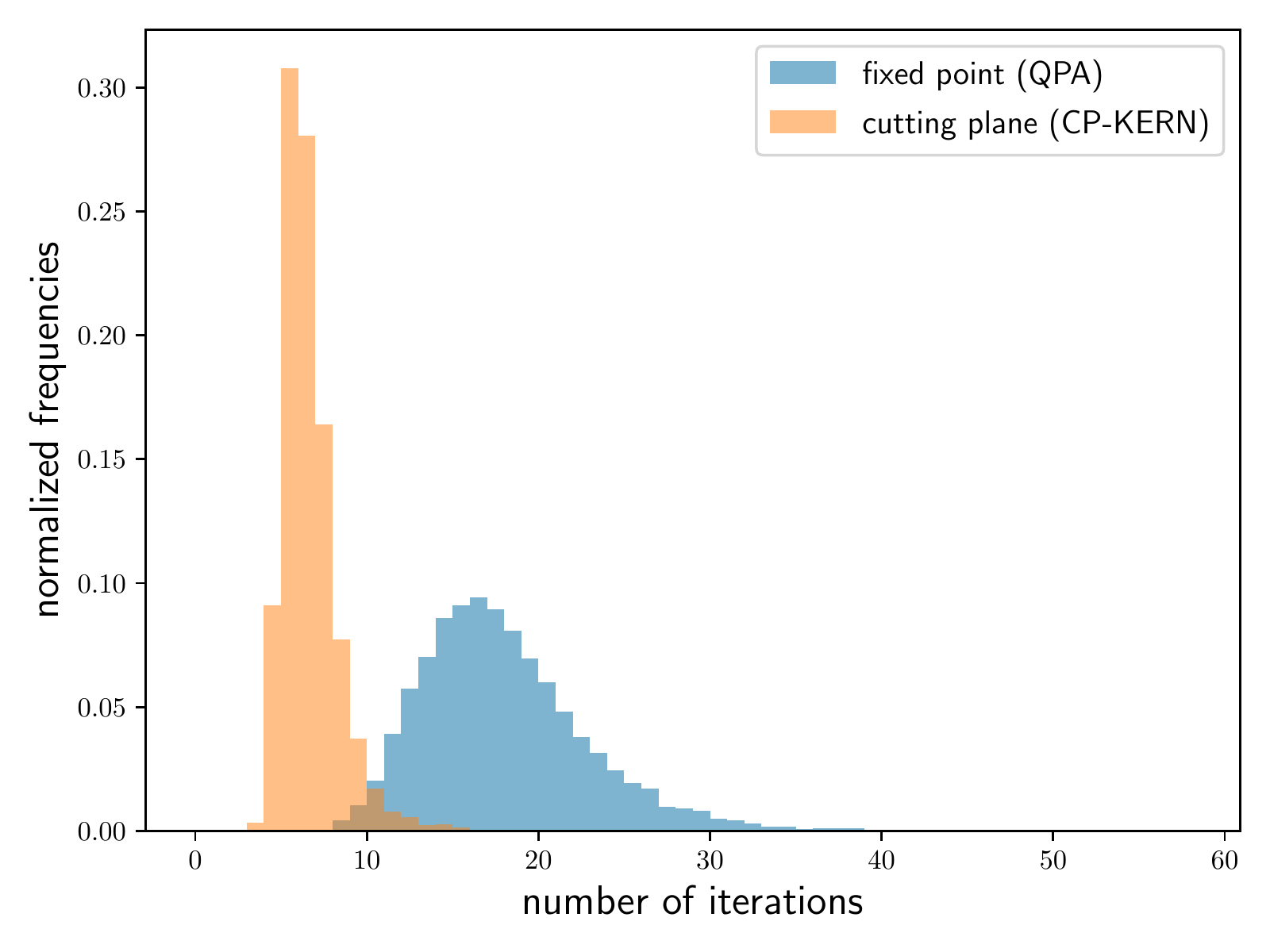}%
    \caption{}%
    \label{fig:qpa_1a}%
  \end{subfigure}
  \begin{subfigure}[b]{0.75\linewidth}
    \centering%
    \includegraphics[width=\textwidth]{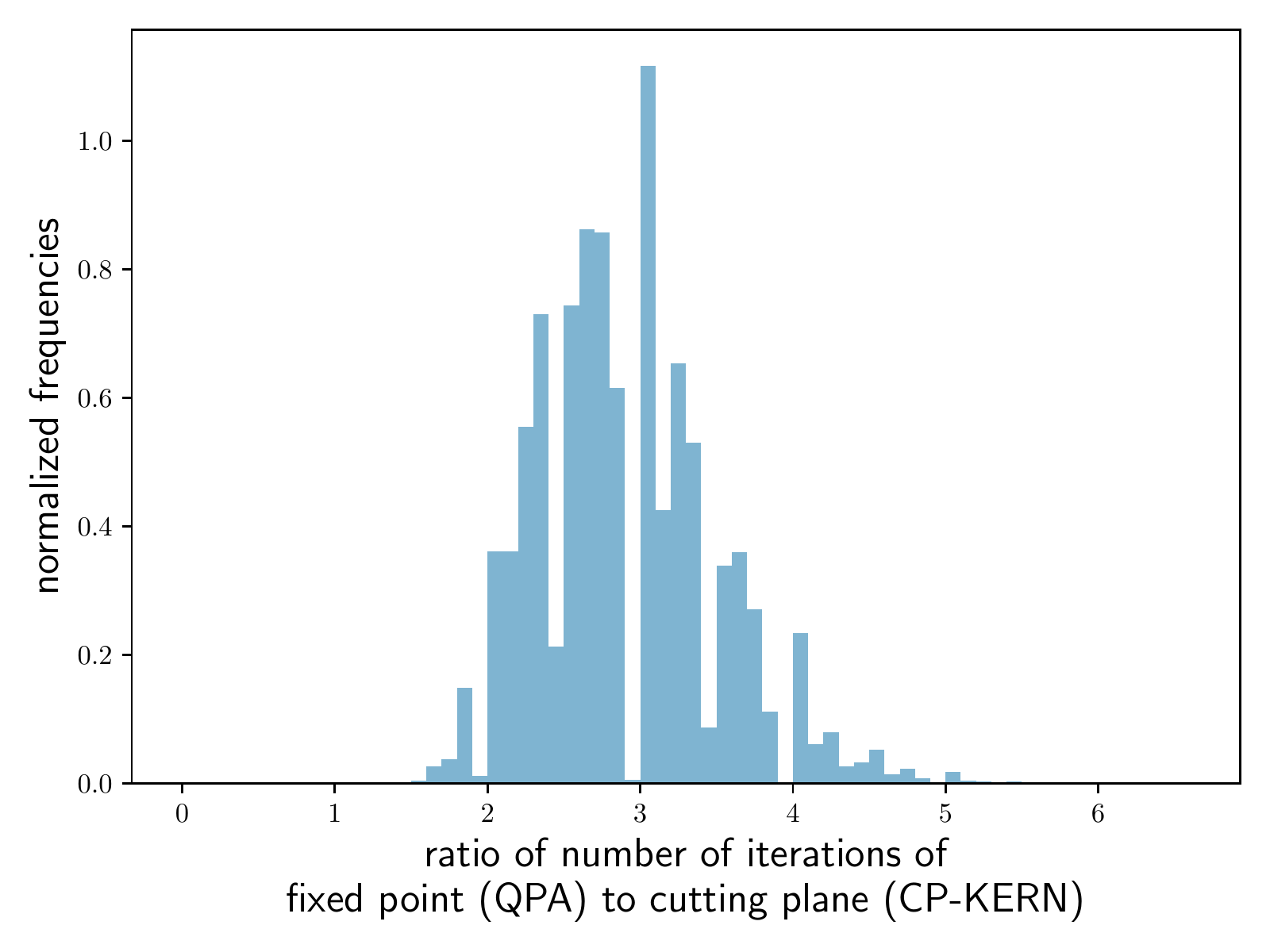}%
    \caption{}%
    \label{fig:qpa_1b}%
  \end{subfigure}
  \caption{$n = 25$, $\sum_{j \in [n]} U_j \approx 0.9$, $\sum_{j \in [n]}
    \delta_j \approx 1.5$}%
\end{figure}

We show various statistics for variable utilization for a fixed configuration in
Table~\ref{tab:qpa_1a}. We do the same for variable density and variable $n$ in
Tables~\ref{tab:qpa_1b} and~\ref{tab:qpa_1c} respectively. In all cases, CP-KERN
has better statistics than QPA, e.g., smaller means and variances.

\begin{table}[t]
  \centering%
  \begin{tabular}{c c c c c c c}\toprule%
    &\multicolumn{2}{c}{(Min, Max)} &\multicolumn{2}{c}{Mean} &\multicolumn{2}{c}{Variance}\\
    \cmidrule(lr){2-3}%
    \cmidrule(lr){4-5}%
    \cmidrule(lr){6-7}%
    $\sum_{j \in [n]} U_j$ &QPA &CP-KERN &QPA &CP-KERN &QPA &CP-KERN\\\midrule
    $0.9$ &$(7, 58)$ &$(3, 20)$ &$17.51$ &$6.14$ &$24.44$ &$2.73$\\
    $0.8$ &$(5, 23)$ &$(3, 11)$ &$10.35$ &$4.51$ &$4.76$ &$0.56$\\
    $0.7$ &$(4, 14)$ &$(2, 6)$ &$7.80$ &$4.02$ &$1.87$ &$0.29$\\\bottomrule
  \end{tabular}
  \caption{Number of iterations for $n=25$, $\sum_{j \in [n]} \delta_j \approx
    1.5$, and variable utilization.}%
  \label{tab:qpa_1a}%
\end{table}

\begin{table}[t]
  \centering%
  \begin{tabular}{c c c c c c c}\toprule%
    &\multicolumn{2}{c}{(Min, Max)} &\multicolumn{2}{c}{Mean} &\multicolumn{2}{c}{Variance}\\
    \cmidrule(lr){2-3}%
    \cmidrule(lr){4-5}%
    \cmidrule(lr){6-7}%
    $\sum_{j \in [n]} \delta_j$ &QPA &CP-KERN &QPA &CP-KERN &QPA &CP-KERN\\\midrule
    $1.25$ &$(5, 30)$ &$(2, 11)$ &$12.74$ &$3.88$ &$10.49$ &$0.47$\\
    $1.50$ &$(7, 58)$ &$(3, 20)$ &$17.51$ &$6.14$ &$24.44$ &$2.73$\\
    $1.75$ &$(9, 67)$ &$(4, 28)$ &$21.61$ &$8.54$ &$38.16$ &$6.33$\\\bottomrule
  \end{tabular}
  \caption{Number of iterations for $n=25$, $\sum_{j \in [n]} U_j \approx
    0.9$, and variable density.}%
  \label{tab:qpa_1b}%
\end{table}

\begin{table}[t]
  \centering%
  \begin{tabular}{c c c c c c c}\toprule%
    &\multicolumn{2}{c}{(Min, Max)} &\multicolumn{2}{c}{Mean} &\multicolumn{2}{c}{Variance}\\
    \cmidrule(lr){2-3}%
    \cmidrule(lr){4-5}%
    \cmidrule(lr){6-7}%
    $n$ &QPA &CP-KERN &QPA &CP-KERN &QPA &CP-KERN\\\midrule
    $25$ &$(7, 58)$ &$(3, 20)$ &$17.51$ &$6.14$ &$24.44$ &$2.73$\\
    $50$ &$(8, 53)$ &$(4, 17)$ &$17.40$ &$6.06$ &$11.52$ &$1.10$\\
    $75$ &$(10, 33)$ &$(4, 14)$ &$17.35$ &$6.05$ &$7.21$ &$0.65$\\\bottomrule
  \end{tabular}
  \caption{Number of iterations for $\sum_{j \in [n]} U_j \approx 0.9$, $\sum_{j
      \in [n]} \delta_j \approx 1.5$, and variable $n$.}%
  \label{tab:qpa_1c}%
\end{table}

\subsection{Experiment IV}\label{sec:exp_4}

In this experiment, we compare the running time of QPA to the running time of
CP-KERN for synthetic instances of Problem~\eqref{opt:qpa}. We use the same
initial value as the previous section.

For $n = 25$, $\sum_{j \in [n]} U_j \approx 0.9$ and $\sum_{j \in [n]} \delta_j
\approx 1.5$, we show histograms of the running times of CP-KERN and QPA in
Figure~\ref{fig:qpa_2a}. The histogram for CP-KERN is to the left of the
histogram for QPA\@; thus, CP-KERN is faster than QPA\@. Similar to Experiment
II, most instances of the problem are solved within $20 \mu s$ by both
algorithms. We also show a histogram of the ratio of the running time of QPA to
the running time of CP-KERN in Figure~\ref{fig:qpa_2b}. QPA is faster than
CP-KERN for a small fraction of the generated instances, but the average ratio
is about $1.3$, and the maximum ratio is about $8$.

\begin{figure}
  \centering%
  \begin{subfigure}[b]{0.75\linewidth}
    \centering%
    \includegraphics[width=\textwidth]{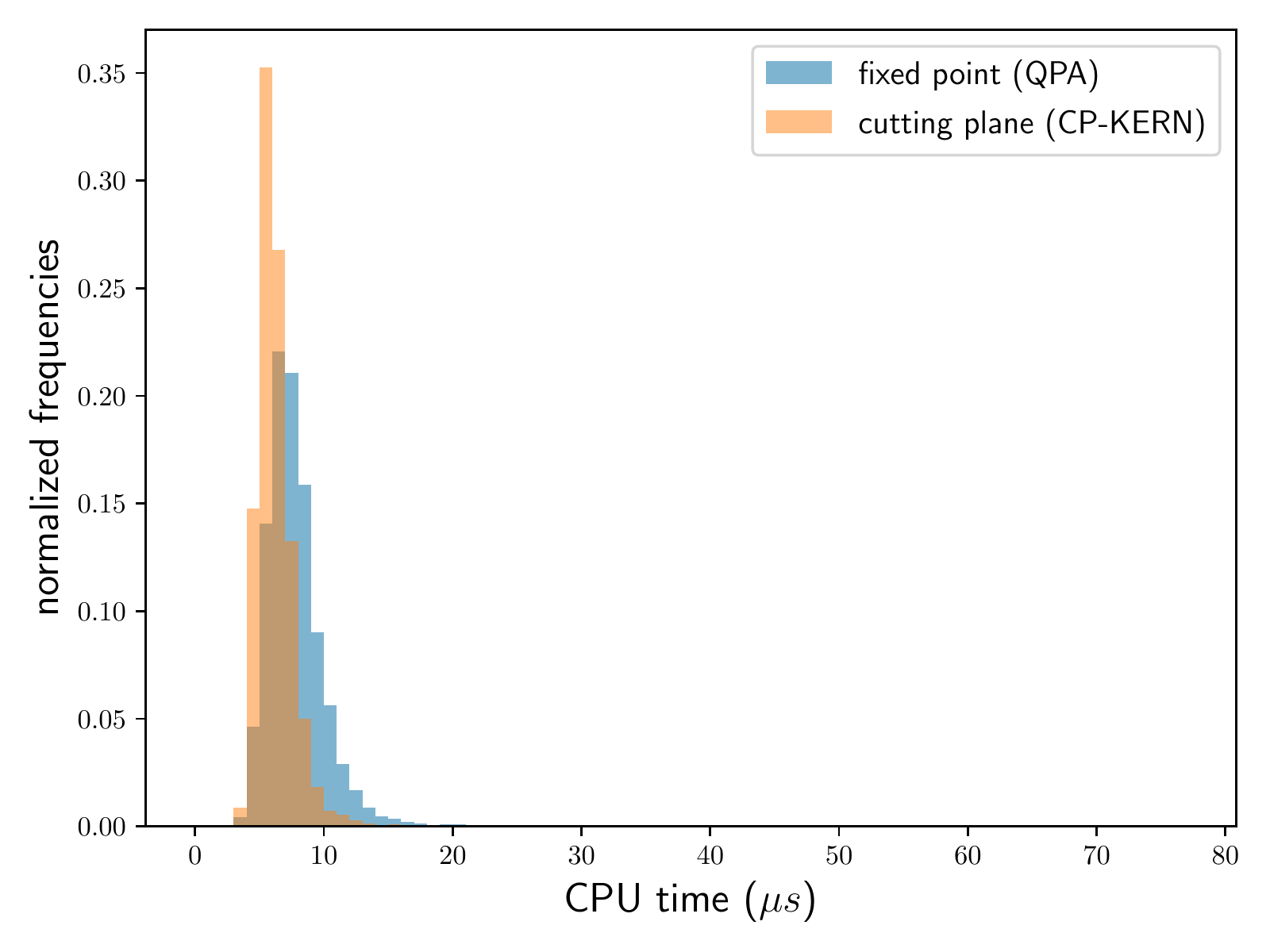}%
    \caption{}%
    \label{fig:qpa_2a}%
  \end{subfigure}
  \begin{subfigure}[b]{0.75\linewidth}
    \centering%
    \includegraphics[width=\textwidth]{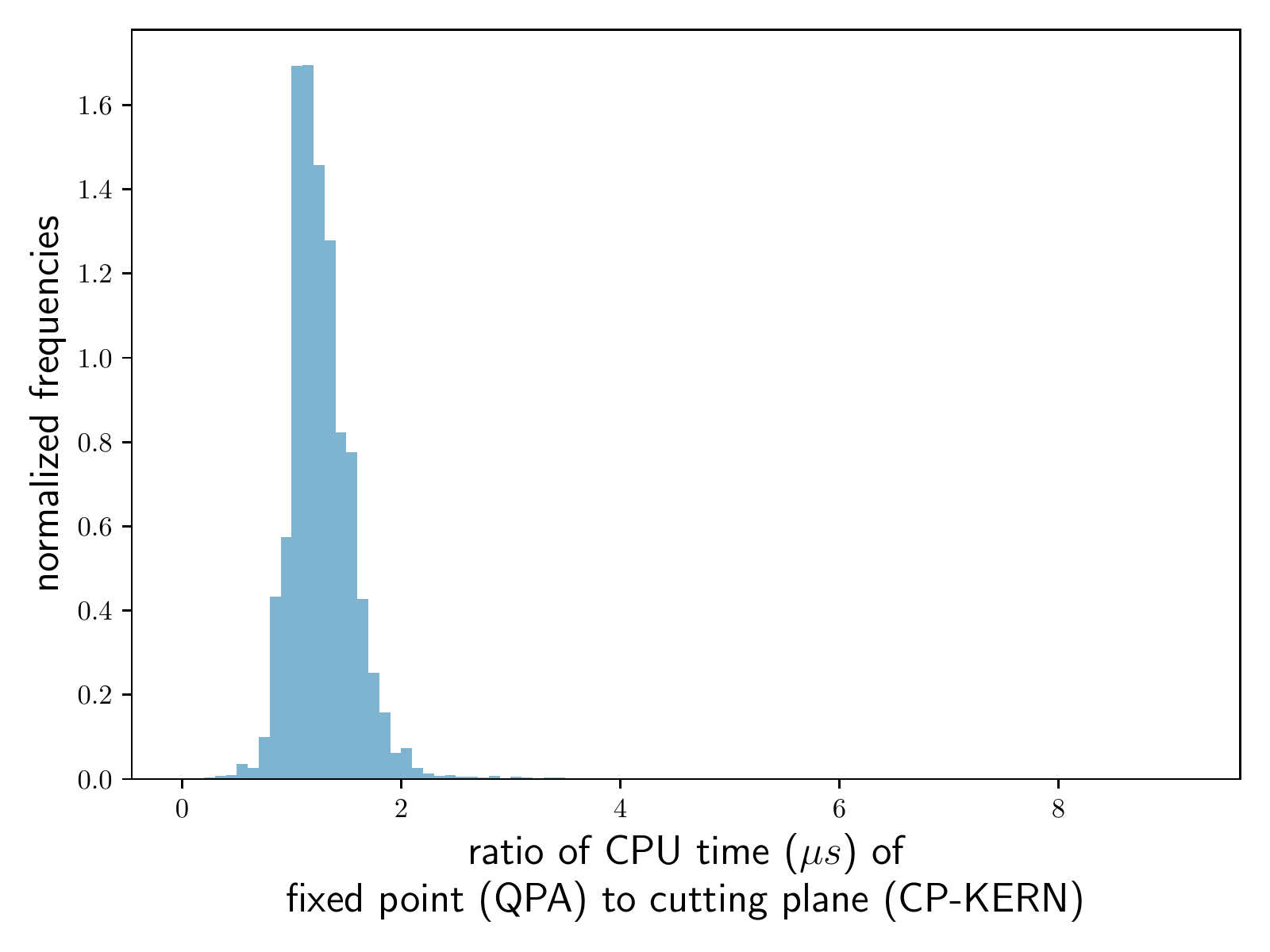}%
    \caption{}%
    \label{fig:qpa_2b}%
  \end{subfigure}
  \caption{$n = 25$, $\sum_{j \in [n]} U_j \approx 0.9$, $\sum_{j \in [n]}
    \delta_j \approx 1.5$}%
\end{figure}

We show various statistics for variable utilization for a fixed configuration in
Table~\ref{tab:qpa_2a}. We do the same for variable density and variable $n$ in
Tables~\ref{tab:qpa_2b} and~\ref{tab:qpa_2c} respectively. From the last rows of
Tables~\ref{tab:qpa_2a} and~\ref{tab:qpa_2c}, we can see that QPA is faster than
CP-KERN for some configurations where the system utilization is small or $n$ is
large. When the system utilization is small, then the denominator of $L_b$ is
large, and hence $L_b$ is small; thus, the instances can be expected to be less
challenging for both the algorithms (recall that $L$ occurs in the numerator in
the asymptotic characterization of worst-case running times in
Theorems~\ref{thm:qpa} and~\ref{thm:edf}). In Experiment II, we were able to
control the hardness of the instances to some extent by choosing $D_n$ to be a
large constant. Since we do not exercise any such control here, less challenging
instances are produced in some cases. This explains why the simpler QPA
outperforms the theoretically superior CP-KERN in the last configuration in
Table~\ref{tab:qpa_2a}. A similar explanation works for the last configuration
in Table~\ref{tab:qpa_2c} as well. When $n = 75$ and the total density is
$1.75$, the deadlines are quite close to the periods, and hence the numerator of
$L_b$ is small; thus, less challenging instances are generated. We have verified
that CP-KERN is faster than QPA for $n = 75$ and larger densities such as $5$.

\begin{table}[t]
  \centering%
  \begin{tabular}{c c c c c c c}\toprule%
    &\multicolumn{2}{c}{(Min, Max)} &\multicolumn{2}{c}{Mean} &\multicolumn{2}{c}{Variance}\\
    \cmidrule(lr){2-3}%
    \cmidrule(lr){4-5}%
    \cmidrule(lr){6-7}%
    $\sum_{j \in [n]} U_j$ &QPA &CP-KERN &QPA &CP-KERN &QPA &CP-KERN\\\midrule
    $0.9$ &$(3.0, 76.67)$ &$(3.0, 50.0)$ &$7.67$ &$6.12$ &$8.51$ &$4.35$\\
    $0.8$ &$(2.33, 39.33)$ &$(2.67, 40.33)$ &$4.65$ &$4.58$ &$2.26$ &$2.03$\\
    $0.7$ &$(1.67, 29.67)$ &$(2.33, 39.0)$ &$3.64$ &$4.04$ &$1.25$ &$1.56$\\\bottomrule
  \end{tabular}
  \caption{Running time for $n=25$, $\sum_{j \in [n]} \delta_j \approx 1.5$, and
    variable utilization.}%
  \label{tab:qpa_2a}%
\end{table}

\begin{table}[t]
  \centering%
  \begin{tabular}{c c c c c c c}\toprule%
    &\multicolumn{2}{c}{(Min, Max)} &\multicolumn{2}{c}{Mean} &\multicolumn{2}{c}{Variance}\\
    \cmidrule(lr){2-3}%
    \cmidrule(lr){4-5}%
    \cmidrule(lr){6-7}%
    $\sum_{j \in [n]} \delta_j$ &QPA &CP-KERN &QPA &CP-KERN &QPA &CP-KERN\\\midrule
    $1.25$ &$(2.33, 56.67)$ &$(2.33, 42.0)$ &$5.64$ &$4.39$ &$3.95$ &$2.01$\\
    $1.50$ &$(3.0, 76.67)$ &$(3.0, 50.0)$ &$7.67$ &$6.12$ &$8.51$ &$4.35$\\
    $1.75$ &$(4.0, 121.67)$ &$(3.0, 87.67)$ &$9.18$ &$7.42$ &$13.77$ &$8.03$\\\bottomrule
  \end{tabular}
  \caption{Running time for $n=25$, $\sum_{j \in [n]} U_j \approx
    0.9$, and variable density.}%
  \label{tab:qpa_2b}%
\end{table}

\begin{table}[t]
  \centering%
  \begin{tabular}{c c c c c c c}\toprule%
    &\multicolumn{2}{c}{(Min, Max)} &\multicolumn{2}{c}{Mean} &\multicolumn{2}{c}{Variance}\\
    \cmidrule(lr){2-3}%
    \cmidrule(lr){4-5}%
    \cmidrule(lr){6-7}%
    $n$ &QPA &CP-KERN &QPA &CP-KERN &QPA &CP-KERN\\\midrule
    $25$ &$(3.0, 76.67)$ &$(3.0, 50.0)$ &$7.67$ &$6.12$ &$8.51$ &$4.35$\\
    $50$ &$(6.33, 132.33)$ &$(8.0, 122.67)$ &$14.26$ &$13.91$ &$16.67$ &$12.97$\\
    $75$ &$(11.67, 163.67)$ &$(14.67, 178.0)$ &$20.66$ &$22.22$ &$19.41$ &$18.29$\\\bottomrule
  \end{tabular}
  \caption{Running time for $\sum_{j \in [n]} U_j \approx 0.9$, $\sum_{j \in
      [n]} \delta_j \approx 1.5$, and variable $n$.}%
  \label{tab:qpa_2c}%
\end{table}

\section{Conclusion}%
\label{sec:conc}%

We have achieved a logical unification of four different algorithms (RTA, IP-FP,
QPA, IP-EDF) by showing that they all belong to a family of cutting-plane
algorithms. CP-KERN has the optimal convergence rate in the family; RTA and QPA
have suboptimal convergence rates. In theory, CP-KERN, RTA and QPA have the same
worst-case running times. The empirical evaluations show that
\begin{itemize}
\item CP-KERN has higher convergence rates than RTA and QPA for randomly
  generated systems.
\item CP-KERN has smaller running times than RTA and QPA for randomly generated
  systems that are hard.
\end{itemize}
Unlike the convergence rates, the running times can vary a lot depending on
low-level implementation choices.

For any given schedulability problem instance, the number of iterations required
for convergence is a good machine-independent measure of the hardness of the
instance, especially for methods that proceed by estimating dual bounds. Using
the number of iterations as a measure of the hardness of a schedulability
instance is similar to counting the number of recursive calls to the
Davis-Putnam (DP) procedure for SAT
instances~\citep{selmanGeneratingHardSatisfiability1996}, especially when one
considers the connections between DP, resolution, and cutting
planes~\citep{chvatalEdmondsPolytopesHierarchy1973,
  cookComplexityCuttingplaneProofs1987, hookerGeneralizedResolutionCutting1988}.
We plan to investigate these connections in future work to understand how to
generate hard schedulability instances.

Our cutting planes for schedulability problems are quite similar to textbook
cutting planes like Gomory cuts. A different approach towards generating cuts is
the so-called \emph{polyhedral approach} wherein facet-defining inequalities are
used instead of traditional textbook cutting
planes~\citep{padbergBranchandCutAlgorithmResolution1991}. The polyhedral
approach has been used successfully to understand scheduling models in the OR
community~\citep{queyrannePolyhedralApproachesMachine1996,
  vandenakkerPolyhedralApproachSinglemachine1999}. In future work, we hope to
apply the polyhedral approach to schedulability problems.

Most efforts to improve the running times of the fixed-point iteration tests
have focused on finding a good initial guess for the fixed
point~\citep{sjodinImprovedResponsetimeAnalysis1998,
  brilInitialValuesOnline2003, davisEfficientExactSchedulability2008}. Since we
have shown that fixed-point iteration tests are special cases of cutting-plane
algorithms, ideas like preprocessing and branch-and-cut can be experimented with
to speed up schedulability tests. We expect branch-and-cut adaptations of
CP-KERN to be much faster than CP-KERN, RTA, and QPA, especially if the tests
can utilize multiple processors and if we are satisfied with feasible solutions
to Problems~\eqref{opt:rta} and~\eqref{opt:qpa}.

\subsection*{Acknowledgments}

The author would like to thank anonymous reviewers for their valuable
suggestions. The author would also like to thank Pontus Ekberg and Sanjoy Baruah
for discussions on the subject.

\subsection*{Declarations}

This work is supported by the US National Science Foundation under Grant numbers
CPS-1932530 and CNS-2141256.

\bibliographystyle{plainnat}%
\bibliography{ref}%

\appendix%

\section{Properties of $f$}%
\label{sec:A0}%

We establish basic properties of $f$ (Definition~\eqref{def:f}) in the context
of the kernel. We also assume that the vectors $T$, $U$, $\alpha$, and $\u{x}$
are sorted in a nonincreasing order using the key $T_j\u{x}_j - \alpha_j$ for
each $j \in [n]$.

\begin{lem}\label{lem:f_id}
  For any $k \in [n]$, if $f(k-1)$ is well-defined, then we must have
  \[
    \frac{f(k) - f(k-1)}{U_k} = \frac{\u{x}_kT_k - \alpha_k - f(k)}{1 - \sum_{j
        \in [n]\setminus[k-1]}U_j} = \frac{\u{x}_kT_k - \alpha_k - f(k-1)}{1 -
      \sum_{j \in [n]\setminus[k]}U_j}.
  \]
  From the identities, it follows that
  \[
    \sgn(f(k) - f(k-1)) = \sgn(\u{x}_kT_k - \alpha_k - f(k)) = \sgn(\u{x}_kT_k -
    \alpha_k - f(k-1)).
  \]
\end{lem}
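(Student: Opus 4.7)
The plan is to prove Lemma~\ref{lem:f_id} by direct algebraic manipulation of the definition~\eqref{def:f} of $f$. Let me abbreviate the numerator and denominator of $f(k)$ as
\[
  N(k) = \beta + \sum_{j \in [n] \setminus [k]} U_j \alpha_j + \sum_{j \in [k]} C_j \u{x}_j, \qquad D(k) = 1 - \sum_{j \in [n] \setminus [k]} U_j,
\]
so $f(k) = N(k)/D(k)$. Using $C_k = U_k T_k$, I would first record the one-step recurrences
\[
  N(k) - N(k-1) = C_k \u{x}_k - U_k \alpha_k = U_k(T_k\u{x}_k - \alpha_k), \qquad D(k) - D(k-1) = U_k.
\]

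Next I would expand $f(k) - f(k-1) = \bigl(N(k) D(k-1) - N(k-1) D(k)\bigr)/\bigl(D(k) D(k-1)\bigr)$ and simplify the numerator in two different ways. Substituting $N(k-1) = N(k) - U_k(T_k\u{x}_k - \alpha_k)$ and $D(k) = D(k-1) + U_k$ produces
\[
  N(k) D(k-1) - N(k-1) D(k) = U_k D(k)\bigl[(T_k \u{x}_k - \alpha_k) - f(k)\bigr],
\]
which, after dividing by $D(k) D(k-1)$ and then by $U_k$, yields the middle expression of the lemma. Substituting instead $N(k) = N(k-1) + U_k(T_k\u{x}_k - \alpha_k)$ and $D(k-1) = D(k) - U_k$ gives
\[
  N(k) D(k-1) - N(k-1) D(k) = U_k D(k-1)\bigl[(T_k \u{x}_k - \alpha_k) - f(k-1)\bigr],
\]
which similarly yields the rightmost expression. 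So the three-way equality reduces to two applications of the same elementary calculation.

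For the sign statement, I would observe that $U_k > 0$ from the kernel's assumption $C_k, T_k \in \Z_{>0}$, and that the kernel's assumption $\sum_{j \in [n]} U_j \le 1$ combined with $k \ge 1$ gives
\[
  D(k) \ge 1 - \sum_{j \in [n]} U_j + U_k \ge U_k > 0.
\]
Hypothesizing that $f(k-1)$ is well-defined rules out the one remaining case $k-1 = 0$ with $\sum_j U_j = 1$, so $D(k-1) > 0$ as well. Since all three denominators $U_k$, $D(k)$, $D(k-1)$ are positive, the signs of the three quantities in the chain of equalities agree with the sign of the common numerator $f(k) - f(k-1)$, which is exactly the claimed sign identities.

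There is no substantive obstacle in this lemma; the main thing to be careful about is handling the boundary case $k = 1$ and $\sum_j U_j = 1$ correctly when arguing the positivity of $D(k-1)$, since that is precisely the configuration excluded by the hypothesis that $f(k-1)$ is well-defined.
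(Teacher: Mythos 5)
Your proposal is correct and follows essentially the same algebraic route as the paper: it clears denominators, uses the one-step recurrences $N(k)-N(k-1)=U_k(T_k\u{x}_k-\alpha_k)$ and $D(k)-D(k-1)=U_k$ to cancel, and then reads off the sign statement from the positivity of $U_k$, $D(k)$, and $D(k-1)$ (the last justified exactly by the well-definedness hypothesis). The paper multiplies $f(k)-f(k-1)$ directly by $D(k-1)$ and expands, while you pass through the common-denominator expression $\bigl(N(k)D(k-1)-N(k-1)D(k)\bigr)/\bigl(D(k)D(k-1)\bigr)$ and simplify it in two symmetric ways; this is only a cosmetic reorganization of the same computation, though your bookkeeping with $N(\cdot)$, $D(\cdot)$ makes both halves of the three-way identity equally explicit rather than leaving the second to a ``similarly.''
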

\begin{proof}[Proof Sketch]
  Consider the following equations:
  \begin{align*}
    &(1 - \textstyle\sum_{j \in [n]\setminus[k-1]}U_j)(f(k) - f(k-1))\\
    = &(1 - \textstyle\sum_{j \in [n]\setminus[k]}U_j)f(k) - (1 - \textstyle\sum_{j \in [n]\setminus[k-1]}U_j)f(k-1) - U_kf(k)\\
    = &(\beta + \textstyle\sum_{j \in [n] \setminus [k]} \alpha_jU_j + \textstyle\sum_{j \in [k]}
        \u{x}_jC_j) - \\
    &(\beta + \textstyle\sum_{j \in [n] \setminus [k-1]} \alpha_jU_j + \textstyle\sum_{j \in [k-1]}
      \u{x}_jC_j) - U_kf(k)\tag{using Definition~\eqref{def:f}}\\
    = &\u{x}_kC_k - \alpha_kU_k - U_kf(k)\\
    = &U_k(\u{x}_kT_k - \alpha_k - f(k))
  \end{align*}
  From the definition of the kernel and the well-definedness of $f(k-1)$, we
  have $U_k > 0$ and $1 - \sum_{j \in [n]\setminus[k-1]}U_j > 0$. Combining
  these facts with the above equations gives
  \[
    \sgn(f(k) - f(k-1)) = \sgn(\u{x}_kT_k - \alpha_k - f(k)).
  \]
  Similarly, we can show that
  \[
    (1 - \textstyle\sum_{j \in [n]\setminus[k]}U_j)(f(k) - f(k-1)) =
    U_k(\u{x}_kT_k - \alpha_k - f(k-1)),
  \]
  and the statement about the signs of various expressions follows from the
  positivity of the excluded expressions.
\end{proof}

\begin{lem}\label{lem:f_local_min}
  If $f$ has a local minimum point at $i$ in the interior of its domain, then we
  must have
  \[
    f(i-1) = f(i) = f(i+1).
  \]
\end{lem}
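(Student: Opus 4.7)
My plan is to drive the whole argument through the sign identities of Lemma~\ref{lem:f_id} applied to the two transitions $i-1 \to i$ and $i \to i+1$, combined with the sorting hypothesis $\u{x}_j T_j - \alpha_j$ nonincreasing in $j$. Write $s_j \coloneq \u{x}_j T_j - \alpha_j$ so that $s_1 \ge s_2 \ge \dotsb \ge s_n$ by assumption. Both $f(i-1)$ and $f(i+1)$ must be well-defined since $i$ is in the interior of the domain, so Lemma~\ref{lem:f_id} is available at both transitions.

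First I would translate the local-minimum inequalities into sign conditions. From $f(i) \le f(i-1)$ and the middle expression of Lemma~\ref{lem:f_id} at $k = i$, we get $\sgn(s_i - f(i)) \le 0$, hence $s_i \le f(i)$. From $f(i+1) \ge f(i)$ and the rightmost expression of Lemma~\ref{lem:f_id} at $k = i+1$, we get $\sgn(s_{i+1} - f(i)) \ge 0$, hence $s_{i+1} \ge f(i)$. Chaining with the sorting gives
\[
  f(i) \le s_{i+1} \le s_i \le f(i),
\]
so the entire chain collapses to equality: $s_i = s_{i+1} = f(i)$.

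Finally I would feed these equalities back into Lemma~\ref{lem:f_id}. Since $s_i = f(i)$, the lemma at $k = i$ forces $\sgn(f(i) - f(i-1)) = 0$, i.e.\ $f(i-1) = f(i)$. Since $s_{i+1} = f(i)$, the lemma at $k = i+1$ forces $\sgn(f(i+1) - f(i)) = 0$, i.e.\ $f(i+1) = f(i)$. Combining yields $f(i-1) = f(i) = f(i+1)$, as required.

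The main obstacle is a bookkeeping one rather than a conceptual one: one must pick the correct expression from the three equal ones in Lemma~\ref{lem:f_id} for each direction so that the squeeze closes cleanly. Using the middle expression for the left transition (which isolates $f(i)$) and the right expression for the right transition (which also isolates $f(i)$) is what lets the sorted order $s_i \ge s_{i+1}$ squeeze both sides against the common value $f(i)$; using the wrong choice leaves $f(i-1)$ and $f(i+1)$ loose and breaks the argument.
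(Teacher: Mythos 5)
Your proposal is correct and takes essentially the same route as the paper: both derive the squeeze $s_i \le f(i) \le s_{i+1}$ from the local-minimum hypothesis via the sign identities of Lemma~\ref{lem:f_id}, invoke the sorted order $s_{i+1} \le s_i$ to collapse it to equalities, and then apply Lemma~\ref{lem:f_id} once more to conclude $f(i-1) = f(i) = f(i+1)$. Your write-up is merely more explicit than the paper's about which of the three equal expressions in Lemma~\ref{lem:f_id} is being read off at each transition, which is a helpful clarification rather than a different argument.
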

\begin{proof}[Proof sketch]
  Assume that $i$ is a local minimum point of $f$ in the interior of its domain.
  From Lemma~\ref{lem:f_id}, we get
  \[
    \u{x}_iT_i - \alpha_i \le f(i) \le \u{x}_{i+1}T_{i+1} - \alpha_{i+1}.
  \]
  From the order imposed on the vectors by sorting, it follows that
  $\u{x}_{i+1}T_{i+1} - \alpha_{i+1} \le \u{x}_iT_i - \alpha_i$. Combining the
  three inequalities gives
  \[
    \u{x}_iT_i - \alpha_i = f(i) = \u{x}_{i+1}T_{i+1} - \alpha_{i+1}.
  \]
  Using Lemma~\ref{lem:f_id} again, we must have
  \[
    f(i-1) = f(i) = f(i+1).
  \]
\end{proof}

The absence of strict local minimum points in the interior of $f$ trivially
implies the following corollary.

\begin{cor}\label{cor:f_local_global}
  If $f$ has two local maximum points $i_1$ and $i_2$ in $\{0\} \cup [n]$ with
  $i_1 \le i_2$, then we must have
  \[
    \forall i \in [i_2] \setminus [i_1]: f(i) = f(i_1).
  \]
  In other words, any local maximum point of $f$ is a global maximum point of $f$,
  and $f$ is constant between two local maximum points.
\end{cor}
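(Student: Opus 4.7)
The plan is to reduce the corollary to Lemma~\ref{lem:f_local_min} by proving the slightly stronger statement that $f$ is constant on the entire integer interval $[i_1, i_2]$. Once this is done, both assertions follow at once: taking $i_2$ to be a global maximum point of $f$ shows that every local maximum point attains the global maximum value, and constancy on the interval between two local maxima is then immediate.

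Assume $i_1 < i_2$ and let $j^* \in [i_1, i_2]$ (intersected with the domain of $f$) be a minimizer of $f$ on this interval, with minimum value $m$. The main case is when $j^*$ lies strictly between $i_1$ and $i_2$. Then $j^*$ is in the interior of the domain of $f$, and because its neighbors $j^*\pm 1$ belong to $[i_1, i_2]$ they satisfy $f(j^* \pm 1) \ge m = f(j^*)$, so $j^*$ is a local minimum in the interior. Lemma~\ref{lem:f_local_min} then forces $f(j^*-1) = f(j^*) = f(j^*+1) = m$, so both neighbors are also minimizers. Iterating this propagation outward in both directions yields $f \equiv m$ on $[i_1, i_2]$, and in particular $f(i_1) = f(i_2) = m$. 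The boundary case $j^* = i_1$ (with $j^* = i_2$ symmetric) is handled by combining the local-maximum hypothesis $f(i_1) \ge f(i_1 + 1)$ with the minimality bound $f(i_1 + 1) \ge m = f(i_1)$ to obtain $f(i_1 + 1) = m$; this either terminates when $i_1 + 1 = i_2$, or reduces to the interior case with $i_1 + 1$ as a new interior minimizer.

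The main obstacle is confirming that the outward propagation is legitimate at every step. Each application of Lemma~\ref{lem:f_local_min} requires a strictly interior minimizer, and produces a neighbor that is itself a minimizer; that neighbor is either still strictly interior (in which case the argument iterates) or it has reached $i_1$ or $i_2$ (in which case the chain terminates with the endpoint absorbed into the equality). Since each step moves one unit outward and the integer interval is finite, the propagation terminates after at most $i_2 - i_1$ applications, delivering the required constancy of $f$ on $[i_1, i_2]$.
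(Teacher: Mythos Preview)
Your proof is correct and follows the same approach as the paper: the paper simply remarks that the absence of strict interior local minima (Lemma~\ref{lem:f_local_min}) trivially implies the corollary, and your argument is precisely a fleshed-out version of that implication via the minimizer-propagation you describe. No substantive difference.
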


\section{On initial values of fixed-point iteration approaches}%
\label{sec:A1}%

Problem~\eqref{opt:rta} can be reduced to the kernel by initializing
$(n,\alpha,\beta,a,b)$ in the target instance to $(n,J,0,1,D_n-J_n)$. Recall
that $\rbf$ is given by
\[
  t \mapsto \sum_{j \in [n]} \left\lceil \frac{t + J_j}{T_j} \right\rceil C_j.
\]
Since we are only concerned about the behavior of $\rbf$ for the domain
$\interval{1}{D_n-J_n}$, we must have
\[
  t + J_n \le D_n \le T_n
\]
The second inequality is true because we have assumed that the FP system has
constrained deadlines. Thus, $\ceil{(t+J_n)/T_n} = 1$ and $\rbf$ can be
simplified to
\[
  t \mapsto C_n + \sum_{j \in [n-1]} \left\lceil \frac{t + J_j}{T_j} \right\rceil C_j.
\]
The schedulability condition can be written as
\begin{equation}%
  \label{cond:fp_2}%
  \exists t \in \interval{C_n}{D_n-J_n}: C_n + \sum_{j \in [n-1]} \left\lceil \frac{t + J_j}{T_j} \right\rceil C_j \le t.
\end{equation}
Using this schedulability condition, we can reduce Problem~\eqref{opt:rta}
to the kernel by initializing $(n,\alpha,\beta,a,b)$ in the target instance to
$(n-1,J,C_n,C_n,D_n-J_n)$.

When we solve the target instance using CP-KERN, $f$ is given by
\begin{equation}%
  \label{def:f_fp}%
  k \mapsto \frac{C_n + \sum_{j \in [n-1] \setminus [k]} J_jU_j + \sum_{j \in [k]} \u{x}_jC_j}{1 - \sum_{j \in [n-1] \setminus [k]}U_j}
\end{equation}
In each iteration of CP-KERN, the value $t^* = \max f$ is at least
\begin{equation}
  f(0) = \frac{C_n + \sum_{j \in [n-1]} J_jU_j}{1 - \sum_{j \in [n-1]}U_j}
\end{equation}
$f(0)$ has been proposed as an initial value for RTA in prior work by using a
different analysis method~\citep{sjodinImprovedResponsetimeAnalysis1998}.
However, we do not need to pass this initial bound to CP-KERN since it finds it
implicitly; on the other hand, initializing $(n,\alpha,\beta,a,b)$ in the target
instance to $(n-1, J, C_n, \ceil{f(0)}, D_n-J_n)$ is a valid reduction from
Problem~\eqref{opt:rta} to the kernel.

Problem~\eqref{opt:qpa_2} (the $k$-th subproblem of EDF schedulability)
can be solved by reducing it to the kernel by setting $(n,\alpha,\beta,a,b)$ to
$(k,D-T-J, 1, -b_k+1, -a_k)$. When we solve the target instance by using
CP-KERN, $f$ is given by
\begin{equation}%
  \label{def:f_k}%
  \ell \mapsto \frac{1 + \sum_{j \in [k] \setminus [\ell]} (\hat{D}_j-T_j)U_j +
    \sum_{j \in [\ell]} \u{x}_jC_j}{1 - \sum_{j \in [k] \setminus
      [\ell]}U_j}
\end{equation}
In each iteration of CP-KERN, the value $t^* = \max f$ is at least
\begin{equation}
  f(0) = \frac{\sum_{j \in [k]} (\hat{D}_j-T_j)U_j - 1}{1 - \sum_{j \in [k]}U_j}
\end{equation}
Since the sign of $t$ was inverted in the reduction to the kernel, $-f(0)$ is an
upper bound for $t$ in Problem~\eqref{opt:qpa_2}. It is not too hard to see that
the upper bound $L_b$ (see Definition~\eqref{def:Lb}) is implicitly present in
$-f(0)$ when $k = n$. Since $f(0)$ is a lower bound for $t$ in the kernel, when
reducing Problem~\eqref{opt:qpa_2} to the kernel, $(n,\alpha,\beta,a,b)$ in the
target instance can be initialized to $(k,\hat{D}-T, 1, \max(-b_k+1,
\ceil{f(0)}), -a_k)$.

\section{IP formulations for asynchronous EDF schedulability}%
\label{sec:A3}%

If the first request for a periodic task $i$ arrives at time $O_i$, then $O_i$
is called the \emph{phase} or \emph{arrival offset} of task $i$. If all tasks in
a system have equal phases (in which case they are typically assumed to be equal
to zero and omitted from the specifications), then it is called a
\emph{synchronous system}; otherwise, it is called an \emph{asynchronous
  system}. In this section, we assume that the tasks are periodic and
asynchronous with zero release jitter. Such a system is unschedulable if and
only if $\sum_{i \in [n]} U_i > 1$ or
\begin{equation}%
  \label{cond:eta}%
    \exists t_1, t_2 \in \interval{0}{O_{\max} + 2\hyp} \cap \Z:\ \sum_{i \in [n]} \eta_i(t_1,t_2) C_i > t_2 - t_1 \ge 0
\end{equation}
holds~\citep[][Lem. 3.4]{baruahAlgorithmsComplexityConcerning1990}, where
\begin{equation}%
  \label{def:eta}%
  \eta_i(t_1, t_2) = \lvert \{k \in \N \mid t_1 \le O_i + kT_i, O_i + kT_i + D_i \le t_2\} \rvert
\end{equation}
Thus, $\eta_i(t_1, t_2)$ is the number of times task $i$ has both its release
time and deadline in $\interval{t_1}{t_2}$.

For such asynchronous systems, \citet{baruahAlgorithmsComplexityConcerning1990}
formulate (informally) Condition~\eqref{cond:eta} as an integer program in Thm.
3.5. However, the formulation is flawed: this can be verified by applying the
formulation on the system in Table~\ref{tab:ex2}.

$\eta_i$ can be simplified by considering three cases. If $t_1 \le O_i$ and $O_i
+ D_i \le t_2$, then we have
\[
  \eta_i(t_1, t_2) = \left\lfloor\frac{t_2 + T_i - D_i - O_i}{T_i}\right\rfloor.
\]
Otherwise, if $t_2 > O_i$ and $t_2 - t_1 \ge D_i$, then we have
\[
  \eta_i(t_1, t_2) = \left\lfloor\frac{t_2 + T_i - D_i - O_i}{T_i}\right\rfloor
  - \left\lceil\frac{t_1 - O_i}{T_i}\right\rceil.
\]
Otherwise, we have
\[
  \eta_i(t_1, t_2) = 0.
\]
A branching strategy, like the one we used in Section~\ref{sec:ip_edf}, follows
straightforwardly from the three conditions, resulting in $\bigO(n^3)$
subproblems with simple IP formulations.

\end{document}